\newcommand{\reals}{{\mathbb R}}
\newcommand{\ints}{{\mathbb Z}}
\newcommand{\term}{\emph}
\newcommand{\prob}{\operatorname{Pr}}
\newcommand{\expect}{{\mathbb E}}
\newcommand{\dealias}{\operatorname{dealias}}
\renewcommand{\mid}{\; ; \;}
\newcommand{\fracpart}[1]{\left\langle #1 \right\rangle}
\newcommand{\sfracpart}[1]{\langle #1 \rangle}
\newcommand{\abs}[1]{\left\vert #1 \right\vert}
\newcommand{\sabs}[1]{\vert #1 \vert}
\newcommand{\round}[1]{{\left\lceil #1 \right\rfloor}}
\newtheorem{proposition}{Proposition}
\newtheorem{corollary}{Corollary}
\newtheorem{theorem}{Theorem}
\newtheorem{lemma}{Lemma}
\newtheorem{definition}{Definition}
\newcommand{\coef}{\operatorname{coef}}
\newcommand{\cubr}[1]{{\left\{ #1 \right\}}}
\newcommand{\scubr}[1]{{\{ #1 \}}}
\newcommand{\sign}[1]{\operatorname{sgn}\left( #1 \right)}
\begin{document}

\begin{frontmatter}

% "Title of the paper"
\title{Polynomial phase estimation by phase unwrapping}
\runtitle{Polynomial phase estimation by phase unwrapping}

% indicate corresponding author with \corref{}
% \author{\fnms{John} \snm{Smith}\corref{}\ead[label=e1]{smith@foo.com}\thanksref{t1}}
% \thankstext{t1}{Thanks to somebody} 
% \address{line 1\\ line 2\\ printead{e1}}
% \affiliation{Some University}

\begin{aug}
\author{\fnms{Robby G.} \snm{McKilliam}\thanksref{m1}\ead[label=e1]{robby.mckilliam@unisa.edu.au}},
\author{\fnms{Barry G.} \snm{Quinn}\thanksref{m2}\ead[label=e2]{barry.quinn@mq.edu.au}},
\author{\fnms{I. Vaughan L.} \snm{Clarkson}\thanksref{m3}\ead[label=e3]{v.clarkson@uq.edu.au}},
\author{\fnms{Bill} \snm{Moran}\thanksref{m4}\ead[label=e4]{wmoran@unimelb.edu.au}},
\and
\author{\fnms{Badri N.} \snm{Vellambi}\thanksref{m1}\ead[label=e5]{badri.vellambi@unisa.edu.au}}
\runauthor{R. McKilliam et. al.}

\affiliation{University of South Australia\thanksmark{m1}, Macquarie University\thanksmark{m2}, University of Queensland\thanksmark{m3} and Melbourne University\thanksmark{m4}}

\address{R. G. McKilliam and B. N. Vellambi\\
Institute for Telecommunications Research\\
University of South Australia, Adelaide\\
SA, 5095, Australia\\
\printead{e1}\\
\phantom{E-mail:\ }\printead*{e5}}

\address{B. G. Quinn\\
Department of Statistics\\
Macquarie University, Sydney,\\
NSW, 2109, Australia\\
\printead{e3}}

\address{I. V. L. Clarkson\\
School of Information Technology \& Electrical\\
Engineering, University of Queensland,\\
Brisbane, QLD, 4072, Australia\\
\printead{e3}}

\address{B. Moran\\
Department of Electrical \& Electronic\\
Engineering, Melbourne University,\\
Melbourne, VIC, 3010, Australia\\
\printead{e4}}

\end{aug}

\begin{abstract}
Estimating the coefficients of a noisy polynomial phase signal is important in fields including radar, biology and radio communications. One approach attempts to perform polynomial regression on the phase of the signal.  This is complicated by the fact that the phase is \emph{wrapped} modulo $2\pi$ and must be \emph{unwrapped} before regression can be performed. %A recent approach suggested by the authors is to perform the unwrapping in a least squares manner.  %It was shown how the resulting estimator could be posed as a \emph{nearest lattice point problem} in a specific lattice and could be solved using existing algorithms. 
%It was shown by Monte Carlo simulation that this produces a remarkably accurate estimator.  
In this paper we consider an estimator that performs phase unwrapping in a least squares manner.  We describe the asymptotic properties of this estimator, showing that it is strongly consistent and asymptotically normally distributed. %We hypothesise that the estimator produces very near maximum likelihood performance.
\end{abstract}

%\begin{keyword}[class=AMS]
%\kwd[Primary ]{}
%\kwd{}
%\kwd[; secondary ]{}
%\end{keyword}

\begin{keyword}
\kwd{Polynomial phase signals, phase unwrapping, asymptotic statistics}
\end{keyword}

\end{frontmatter}

\section{Introduction} \label{intro}

Polynomial phase signals arise in fields including radar, sonar, geophysics, speech analysis, biology, and radio communication~\cite{Hlawatsch_lin_quad_time_freq_spmag_1992,Ridleyspeechpolyphase1989, Suga_1975_bats_echolocation, Moss_2005echolocation}. In radar and sonar applications polynomial phase signals arise when acquiring radial velocity and acceleration (and higher order motion descriptors) of a target from a reflected signal, and also in continuous wave radar and low probability of intercept radar.  In biology, polynomial phase signals are used to describe the sounds emitted by bats and dolphins for echo location.  

A polynomial phase signal of order $m$ is a function of the form
\[
s(t) = e^{2\pi j y(t)},
\]
where $j = \sqrt{-1}$, and $t$ is a real number, often representing time, and 
\[
y(t) = \tilde{\mu}_0 +\tilde{\mu}_1 t + \tilde{\mu}_2 t^2 + \dots \tilde{\mu}_m t^m
\]
is a polynomial of order $m$.  In practice the signal is typically sampled at discrete points in `time', $t$. In this paper we only consider uniform sampling, where the gap between consecutive samples is constant. In this case we can always consider the samples to be taken at some set of consecutive integers and our sampled polynomial phase signal is $s_n = s(n) = e^{2\pi j y(n)}$, where $n$ is an integer.  Of practical importance is the estimation of the coefficients $\tilde{\mu}_0, \dots, \tilde{\mu}_m$ from a number, say $N$, of consecutive observations of the noisy sampled signal
\begin{equation}\label{eq:Y_nsamplednoisey}
Y_n = \rho s_n + X_n,
\end{equation}
where $\rho$ is a real number greater than zero representing the (usually unknown) signal amplitude and $\{X_n, n \in \ints\}$ is a sequence of complex noise variables. In order to ensure identifiability it is necessary to restrict the $m+1$ coefficients to a region of $m+1$ dimensional Euclidean space $\reals^{m+1}$ called an \emph{identifiable region}.  It was shown by some of the authors \cite{McKilliam2009IndentifiabliltyAliasingPolyphase} that an identifiable region tessellates a particular $m+1$ dimensional lattice.  We discuss this in Section \ref{sec:ident_aliasing}.

An obvious estimator of the unknown coefficients is the least squares estimator.  
%This is also the maximum likelihood estimator (MLE) when the noise sequence $\{X_n\}$ is white and Gaussian.  
When $m=0$ (phase estimation) or $m=1$ (frequency estimation) the least squares estimator is an effective approach, being both computationally efficient and statistically accurate~\cite{Hannan1973,Quinn2001,McKilliam_mean_dir_est_sq_arc_length2010}. When $m \geq 2$ the computational complexity of the least squares estimator is large, and alternative estimators have been considered for this reason. These can loosely be grouped into two classes; estimators based on polynomial phase transforms, such as the discrete polynomial phase transform~\cite{Peleg_DPT_1995} and the high order phase function~\cite{Farquharson_another_poly_est_2005,Porat_asympt_HAF_DPT_1996}; and estimators based on phase unwrapping, such as Kitchen's unwrapping estimator~\cite{Kitchen_polyphase_unwrapping_1994}, and Morelande's Bayesian unwrapping estimator~\cite{Morelande_bayes_unwrapping_2009_tsp}.

In this paper we consider the estimator that results from unwrapping the phase in a least squares manner.  We call this the \emph{least squares unwrapping} (LSU) estimator.  It was shown by some of the authors \cite[Sec. 8.1]{McKilliam2010thesis}\cite{McKilliam2009asilomar_polyest_lattice} that the LSU estimator can be computed by finding a nearest point in a lattice~\cite{Agrell2002}, and Monte-Carlo simulations were used to show the LSU estimator's favourable statistical performance. %Moreover, it was noticed that the LSU appears to work correctly for coefficients anywhere in the identifiable region.  A drawback of the LSU estimator is that computing a nearest lattice point is, in general, computationally difficult.  In \cite{McKilliam2009asilomar_polyest_lattice}, two standard techniques were considered, the \emph{sphere decoder} \cite{Pohst_sphere_decoder_1981}, and \emph{Babai's nearest plane algorithm} \cite{Babai1986}. The sphere decoder was observed to have excellent statistical performance but can only be computed efficiently for $N<50$ \cite{Jalden2005_sphere_decoding_complexity}.  Computing the Babai point requires only $O(N^2)$ operations, but its statistical performance is comparatively poor at low signal to noise ratio (SNR). A major point of interest is that the lattices considered are not \emph{random}, and therefore may admit fast nearest point algorithms. 
%This has been studied in~\cite[Sec.~4.2]{McKilliam2010thesis} where polynomial time algorithms where found that compute the nearest point exactly.  Unfortantely, although polynomial time, the algorithm are still computationally very slow in practice. 
% Two main questions were raised in \cite{McKilliam2009asilomar_polyest_lattice}:
% \begin{enumerate}
% \vspace{-0.1cm}
% \item What are the asymptotic statistical properties of the LSU estimator?
% \vspace{-0.18cm}
% \item Can the structure of the lattice be exploited to obtain fast (exact or approximate) nearest point algorithms and thereby a computationally efficient estimator?
% \vspace{-0.1cm}
% \end{enumerate}
% The purpose of this paper is to answer the first question. We prove that the LSU estimator is strongly consistent and asymptotically normally distributed. 
%We also apply another known approximate nearest point algorithm, called the $K$-best method \cite{Zhan2006_K_best_sphere_decoder}, that we show provides near sphere decoder performance, but can be computed in a reasonable amount of time if $N$ is less than about 300.
In this paper we derive the asymptotic properties of the LSU estimator.  Under some assumptions about the distribution of the noise $X_1, \dots, X_N$, we show the estimator to be strongly consistent and asymptotically normally distributed.  Similar results were stated without proof in~\cite{McKilliam_polyphase_est_icassp_2011}.  Here, we give a proof.  The results here are also more general than in~\cite{McKilliam_polyphase_est_icassp_2011}, allowing for a wider class of noise distributions.

An interesting property is that the estimator of the $k$th polynomial phase coefficient converges almost surely to $\tilde{\mu}_k$ at rate $o(N^{-k})$.  This is perhaps not surprising, since it is the same rate observed in polynomial regression.  However, asserting that convergence at this rate occurs in the polynomial phase setting is not trivial.  For this purpose we make use of an elementary result about the number of arithmetic progressions contained inside subsets of $\{1,2,\dots,N\}$~\cite{Erdos_on_some_sequence_of_integers1936,Szemeredi_setint_no_k_arth1975,Gowers_new_proof2001}.  %This proof technique appears to be novel, and may be useful in other problems.  
The proof of asymptotic normality is complicated by the fact that the objective function corresponding with the LSU estimator is not differentiable everywhere.  Empirical process techniques~\cite{Pollard_new_ways_clts_1986,Pollard_asymp_empi_proc_1989,van2009empirical,Dudley_unif_central_lim_th_1999} and results from the literature on hyperplane arrangements~\cite{Chazelle_discrepency_method_2000,Matousek_lect_disc_geom_2002} become useful here.  We are hopeful that the proof techniques developed here will be useful for purposes other than polynomial phase estimation, and in particular other applications involving data that is `wrapped' in some sense.  Potential candidates are the phase wrapped images observed in modern radar and medical imaging devices such as synthetic aperture radar and magnetic resonance imaging~\cite{Nico_phaseunwrappingSAR_2000,Friedlander_PD_phaseunwrapping_1996}.

The paper is organised in the following way. Section~\ref{sec:lattice-theory} describes some preliminary concepts from lattice theory, and in Section~\ref{sec:ident_aliasing} we use these results to describe an identifiable region for the set of polynomial phase coefficients.  These identifiability results are required in order to properly understand the statistical properties of polynomial phase estimators. In Section \ref{sec:least-squar-unwr} we describe the LSU estimator and state its asymptotic statistical properties.  Section~\ref{sec:strongconstproof} gives the proof of strong consistency and Section~\ref{sec:centlimitproof} gives the proof of asymptotic normality. 
%and also consider techniques for computing the LSU as a nearest lattice point problem. 
Section~\ref{sec:simulations} describes the results of Monte Carlo simulations with the LSU estimator.  These simulations agree with the derived asymptotic properties. 

\section{Lattices}\label{sec:lattice-theory}

A \term{lattice},  $\Lambda$, is a discrete subset of points in $\reals^n$ such that
\[
   \Lambda = \{\xbf = \Bbf\ubf \mid \ubf \in \ints^d \}
\]
where $\Bbf \in \reals^{n \times d}$ is an $n \times d$ matrix of rank $d$, called the generator matrix.  If $n = d$ the lattice is said to be full rank.  Lattices are discrete Abelian groups under vector addition.  They are subgroups of the Euclidean group $\reals^n$.  Lattices naturally give rise to tessellations of $\reals^n$ by the specification of a set of coset representatives for the quotient $\reals^n / \Lambda$.  One choice for a set of coset representatives is a fundamental parallelepiped; the parallelepiped generated by the columns of a generator matrix.  Another choice is based on the Voronoi cell those points from $\reals^n$ nearest (with respect to the Euclidean norm here) to the lattice point at the origin.  It is always possible to construct a rectangular set of representatives, as the next proposition will show.  We will use these rectangular regions for describing the aliasing properties of polynomial phase signals in Section~\ref{sec:ident_aliasing}.  These rectangular regions will be important for the derivation of the asymptotic properties of the LSU estimator in Section~\ref{sec:least-squar-unwr}.

\begin{proposition}\label{prop:lattice-theory-constructing_a_rectangular_tesselating_region}
Let  $\Lambda$ be an $n$ dimensional lattice and $\Bbf \in \reals^{n\times n}$ be a generator matrix for $\Lambda$. Let $\Bbf = \Qbf\Rbf$ where $\Qbf$ is orthonormal and $\Rbf$ is upper triangular with elements $r_{ij}$.  Then the rectangular prism $\Qbf P$ where $P = \prod_{k=1}^{n}{[-\tfrac{r_{kk}}{2}, \frac{r_{kk}}{2})}$ is a set of coset representatives for $\reals^n / \Lambda$.
\end{proposition}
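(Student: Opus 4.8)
The plan is to show that $\Qbf P$ is a fundamental domain for $\reals^n/\Lambda$ by establishing two things: every point of $\reals^n$ is congruent modulo $\Lambda$ to a point of $\Qbf P$, and no two distinct points of $\Qbf P$ are congruent. Since $\Qbf$ is orthogonal, multiplication by $\Qbf^{-1} = \Qbf^{\mathsf T}$ is a linear bijection carrying $\Lambda$ onto the lattice $\Lambda'$ generated by $\Rbf$, and carrying $\Qbf P$ onto $P$. Hence it suffices to prove that the axis-aligned box $P = \prod_{k=1}^n [-\tfrac{r_{kk}}{2}, \tfrac{r_{kk}}{2})$ is a set of coset representatives for $\reals^n/\Lambda'$, where $\Lambda' = \{\Rbf\ubf : \ubf \in \ints^n\}$ and $\Rbf$ is upper triangular with nonzero diagonal entries $r_{kk}$ (nonzero because $\Bbf$, hence $\Rbf$, has rank $n$). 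Since the diagonal entries of $\Rbf$ satisfy $\prod_k |r_{kk}| = |\det \Rbf| = |\det \Bbf|$, the box $P$ has the correct volume, so once we show it contains at least one representative of each coset we get uniqueness for free by a volume/measure argument; but it is cleaner to prove both existence and uniqueness directly by the following triangular elimination.

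The key step is a descending induction on the coordinates exploiting the upper-triangular structure. Given an arbitrary $\xbf = (x_1,\dots,x_n)^{\mathsf T} \in \reals^n$, I would determine integers $u_n, u_{n-1}, \dots, u_1$ in turn. Because $\Rbf$ is upper triangular, the $n$th coordinate of $\Rbf\ubf$ is $r_{nn} u_n$, so choose $u_n$ to be the unique integer with $x_n - r_{nn} u_n \in [-\tfrac{r_{nn}}{2}, \tfrac{r_{nn}}{2})$ (unique since $r_{nn} \neq 0$; one should normalise so that $r_{kk} > 0$, which can be arranged by flipping signs of columns of $\Qbf$ and rows of $\Rbf$, or simply replace $[-\tfrac{r_{kk}}{2},\tfrac{r_{kk}}{2})$ by the half-open interval of length $|r_{kk}|$ with the appropriate orientation). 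Having fixed $u_n$, the $(n-1)$th coordinate of $\xbf - \Rbf\ubf$ equals $x_{n-1} - r_{n-1,n-1}u_{n-1} - r_{n-1,n}u_n$, in which only $u_{n-1}$ remains free; again there is a unique integer $u_{n-1}$ placing this into $[-\tfrac{r_{n-1,n-1}}{2}, \tfrac{r_{n-1,n-1}}{2})$. Continuing down to $u_1$, we obtain a unique $\ubf \in \ints^n$ with $\xbf - \Rbf\ubf \in P$. Uniqueness of each $u_k$ at every stage simultaneously proves existence of a representative in $P$ and that it is the only one, so $P$ meets every coset of $\Lambda'$ exactly once. Transporting back through $\Qbf$ finishes the proof.

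I do not expect a genuine obstacle here; the argument is essentially the observation that an upper-triangular change of basis reduces the problem to independent one-dimensional rounding. The only points requiring care are bookkeeping ones: ensuring the diagonal entries $r_{kk}$ are nonzero (which follows from $\rank \Bbf = n$) and handling their signs so that the stated half-open intervals are the right ones (if some $r_{kk} < 0$ one uses $[\tfrac{r_{kk}}{2}, -\tfrac{r_{kk}}{2})$ instead, or absorbs the sign into $\Qbf$; the QR factorisation can always be chosen with $r_{kk} > 0$). A remark noting $\operatorname{vol}(\Qbf P) = \operatorname{vol}(P) = \prod_k |r_{kk}| = |\det\Bbf|$, the covolume of $\Lambda$, serves as a useful sanity check and is worth including.
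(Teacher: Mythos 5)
Your proof is correct and complete. Note that the paper itself does not prove this proposition at all; it simply cites Cassels (\emph{Geometry of Numbers}, Ch.~IX, Thm.~IV) and the first author's thesis, so there is no in-paper argument to compare against. Your descending triangular elimination --- reduce by $\Qbf^{\mathsf T}$ to the lattice generated by $\Rbf$, then choose $u_n, u_{n-1}, \dots, u_1$ in turn, each uniquely determined by the requirement that the corresponding coordinate of $\xbf - \Rbf\ubf$ land in the half-open interval of length $\vert r_{kk}\vert$ --- is exactly the standard proof found in those references, and it correctly delivers existence and uniqueness of the representative simultaneously. You were also right to flag the two bookkeeping points: $r_{kk}\neq 0$ follows from $\Rbf$ having rank $n$, and the signs of the diagonal must be normalised positive (absorbing any sign flips into $\Qbf$) for the stated intervals $[-\tfrac{r_{kk}}{2},\tfrac{r_{kk}}{2})$ to be the correct ones; the paper tacitly assumes this normalisation, as is standard for the QR decomposition.
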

\begin{proof}
This result is well known~\cite[Chapter IX, Theorem IV]{Cassels_geom_numbers_1997}~\cite[Proposition 2.1]{McKilliam2010thesis}.  This result is for lattices with full rank.  A result in the general case can be obtained similarly, but is not required here.  
\end{proof}

\begin{figure}[t]
	\centering
		\includegraphics[width=0.8\linewidth]{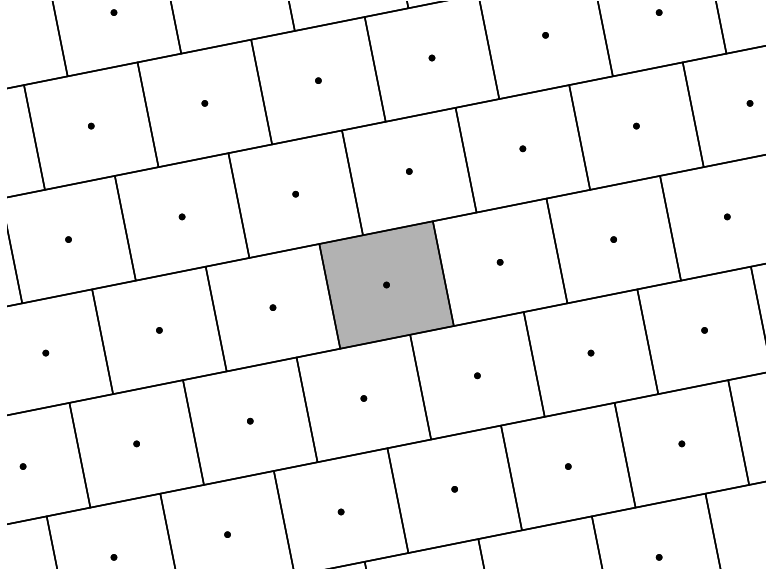}
		\caption{Rectangular tessellation constructed according to Proposition~\ref{prop:lattice-theory-constructing_a_rectangular_tesselating_region} where $\Lambda$ is a 2 dimensional lattice with generator matrix having columns $[1, 0.2]^\prime$ and $[0.2, 1]^\prime$. Any one of the boxes is a rectangular set of coset representatives for $\reals^2 / \Lambda$.  The shaded box centered at the origin is the one given by Proposition~\ref{prop:lattice-theory-constructing_a_rectangular_tesselating_region}.}
		\label{lattices:fig:tessellation2}
\end{figure}

\section{Identifiability and aliasing}\label{sec:ident_aliasing}

As discussed in the introduction, a polynomial phase signal of order $m$ is a complex valued function of the form $s(t) = e^{2\pi j y(t)}$ where $t$ is a real number and $y(t)$ is a polynomial of order $m$. We will often drop the $(t)$ and just write the polynomial as $y$ and the polynomial phase signal as $s$ whenever there is no chance of ambiguity. %In practice the signal obtained is typically \emph{sampled} at discrete points in time, $t$. In this thesis we only consider \term{uniform sampling}\index{uniform sampling}, that is, where the gap between consecutive samples is a constant. In this case we can always consider the samples to be taken at some set of consecutive integers and our sampled polynomial phase signal looks like
%\[
%s(n) = e^{2\pi j y(n)}
%\] 
%where $n$ is an integer. The phase of $s(n)$ is described by the sampled polynomial
%\[
%y(n) = \mu_0 + \mu_1 n + \mu_2 n^2 + \dots + \mu_m n^m.
%\]
Aliasing can occur when polynomial-phase signals are sampled.  That is, two or more distinct polynomial-phase signals can take exactly the same values at the sample points.  Understanding how aliasing occurs is crucial to understanding the behaviour of polynomial phase estimators.  The aliasing properties are described in~\cite{McKilliam2009IndentifiabliltyAliasingPolyphase}, but, here we present the properties in a way that is better suited to studying the LSU estimator.  %A description of the aliasing properties of polynomial phase signals of order $2$ has been independently discovered in~\cite{Abatzoglou_ml_chirp_1986}~and~\cite{Angeby_PPS_aliasing_2000}.  The results here are for polynomial phase signals of any order.

Let $\mathcal{Z}$ be the set of polynomials of order at most $m$ that take integer values when evaluated at integers. That is, $\mathcal{Z}$ contains all polynomials $p$ such that $p(n)$ is an integer whenever $n$ is an integer.
Let $y$ and $z$ be two distinct polynomials such that $z = y + p$ for some polynomial $p$ in $\mathcal{Z}$. The two polynomial phase signals $s(t) = e^{2\pi j y(t)}$ and $r(t) = e^{2\pi j z(t)}$ are distinct because $y$ and $z$ are distinct, but if we sample $s$ and $r$ at the integers  
\begin{align*}
s(n) = e^{2\pi j y(n)} =  e^{2\pi j y(n)} e^{2\pi j p(n)} = e^{2\pi j (y(n) + p(n))} = e^{2\pi j z(n)} = r(n)
\end{align*}
because $p(n)$ is always an integer and therefore $e^{2\pi j p(n)} = 1$ for all $n \in \ints$. The polynomial phase signals $s$ and $r$ are equal at the integers, and although they are distinct, they are indistinguishable from their samples. We call such polynomial phase signals \term{aliases}\index{alias} and immediately obtain the following theorem.

\begin{theorem}\label{thm:circpolysampledthm}
Two polynomial phase signals $s(t) = e^{2\pi j y(t)}$  and $r(t) = e^{2\pi j z(t)}$  are aliases if and only if the polynomials that define their phase, $y$ and $z$, differ by a polynomial from the set $\mathcal{Z}$, that is, $y - z \in \mathcal{Z}$.
\end{theorem}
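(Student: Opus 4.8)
The plan is to argue directly from the definition of aliasing. By construction, $s$ and $r$ are aliases precisely when $s(n) = r(n)$ for every integer $n$, so the entire statement reduces to characterising when $e^{2\pi j y(n)} = e^{2\pi j z(n)}$ for all $n \in \ints$. I will split into the two implications, using the elementary fact that $e^{2\pi j x} = 1$ if and only if $x \in \ints$.

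For the ``if'' direction there is essentially nothing new to do: the displayed computation immediately preceding the theorem already shows that if $p := y - z$ lies in $\mathcal{Z}$, then $e^{2\pi j p(n)} = 1$ for every $n \in \ints$, whence $s(n) = e^{2\pi j y(n)} = e^{2\pi j(y(n)+p(n))} = e^{2\pi j z(n)} = r(n)$ for all integers $n$, so $s$ and $r$ are aliases. I would simply cite that computation rather than repeat it.

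For the ``only if'' direction I would suppose $s$ and $r$ are aliases, so that $e^{2\pi j y(n)} = e^{2\pi j z(n)}$ for all $n \in \ints$. Dividing gives $e^{2\pi j(y(n) - z(n))} = 1$ for every integer $n$, which holds exactly when $y(n) - z(n) \in \ints$. Hence the polynomial $p := y - z$ takes an integer value at every integer. Since $y$ and $z$ each have order at most $m$, so does $p$, and therefore $p$ satisfies the defining conditions of $\mathcal{Z}$; that is, $y - z \in \mathcal{Z}$, as required.

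I do not anticipate a genuine obstacle: the result is a restatement of the definition of $\mathcal{Z}$ together with the characterisation of when a complex exponential equals one. The only point worth flagging is the degree bookkeeping in the ``only if'' part — one must observe that $y - z$ inherits order at most $m$ from $y$ and $z$, so that it really lands in $\mathcal{Z}$ and not merely in the larger set of integer-valued polynomials of arbitrary degree. The substantive work, namely an explicit description of $\mathcal{Z}$ (and hence of an identifiable region, via the rectangular tessellation of Proposition~\ref{prop:lattice-theory-constructing_a_rectangular_tesselating_region}), is left to the development that follows; Theorem~\ref{thm:circpolysampledthm} is only the bridge to it.
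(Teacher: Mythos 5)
Your proposal is correct and matches the paper's treatment: the paper establishes the ``if'' direction by the displayed computation immediately before the theorem and regards the converse (reducing $s(n)=r(n)$ for all $n\in\ints$ to $y(n)-z(n)\in\ints$ for all $n$, hence $y-z\in\mathcal{Z}$) as immediate, exactly as you argue. Your remark on the degree bookkeeping is a fair point of care but raises no real issue, since $y-z$ clearly has order at most $m$.
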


%\begin{corollary}\label{cor:circpolysampledexp}
%The polynomial phase signals $s(t)$ and $r(t)$ are aliases if and only if $s(t) = r(t)e^{2\pi j p(t)}$ where $p$ is a polynomial from $\mathcal{Z}$.
%\end{corollary}

%It may be helpful to observe Figures~\ref{fig:circstatplot_zero},~\ref{fig:circstatplot_line},~\ref{fig:circstatplot_quad} and~\ref{fig:circstatplot_cube}. In these, the phase (divided by $2\pi$) of two distinct polynomial phase signals is plotted on the left, and on the right the principal component of the phase is plotted. The circles display the samples at the integers. Note that the samples of the principal components intersect.  The corresponding polynomial phase signals are aliases.

It may be helpful to observe Figures~\ref{fig:circstatplot_line},~\ref{fig:circstatplot_quad}~and~\ref{fig:circstatplot_cube}.  In these, the phase (divided by $2\pi$) of two distinct polynomial phase signals is plotted on the left, and on the right the principal component of the phase (also divided by $2\pi$) is plotted.  The circles display the samples at the integers. Note that the samples of the principal components intersect.  The corresponding polynomial phase signals are aliases.

We can derive an analogue of the theorem above in terms of the coefficients of the polynomials $y$ and $z$. This will be useful when we consider estimating the coefficients in Section~\ref{sec:least-squar-unwr}.  We first need the following family of polynomials. 

\begin{definition} \emph{(Integer valued polynomials)} \label{def:intvaledpolys}
\\The integer valued polynomial of order $k$, denoted by $p_k$, is
\[
p_k(x) = \binom{x}{k} = \frac{x(x-1)(x-2)\dots(x-k+1)}{k!},
\]
where we define $p_0(x) = 1$.
\end{definition}

\begin{lemma}\label{lem:intvalpol}
  The integer valued polynomials $p_0,\dots,p_m$ are an integer basis for $\mathcal{Z}$.  That is, every polynomial in $\mathcal{Z}$ can be uniquely written as
\begin{equation} \label{eq:lem_polynomial}
c_0 p_0 + c_1 p_1 + \dots + c_m p_m, \qquad c_0,c_1,\dots,c_m \in \ints.
\end{equation}
\end{lemma}
\begin{proof}
See~\citep[p. 2]{cahen_integer-valued_1997} or~\cite{McKilliam2009IndentifiabliltyAliasingPolyphase}. 
\end{proof}
% \begin{proof}
% Note that $x(x-1)(x-2)\dots(x-k+1)$ is divisible by all integers $1,2,\dots,k$ and so $p_k$ takes integer values for all $x\in\ints$.  Then any polynomial generated as in \eqref{eq:lem_polynomial} is an element in $\mathcal{Z}$.  The proof proceeds by induction.  Consider any polynomial $f \in \mathcal{Z}$.  Let $d < n$ and assume that $c_i \in \ints$ for all $i \leq d$.  Let $g$ be the polynomial
% \[
% g = f - \sum_{k=0}^{d}{c_k p_k}
% \]
% and note that $g \in \mathcal{Z}$. Then
% \begin{equation}
% g = c_{d+1}p_{d+1} + \dots + c_{m}p_{m}.
% \end{equation}
% Now $p_{d+1}(d+1) = 1$ and $p_k(d+1) = 0$ for all $k>d+1$.  Then $g(d+1) = c_{d+1}p_{d+1}(d+1)$ and therefore $c_{d+1} = g(d+1) \in \ints$.  The proof follows by induction because $f(0) = c_0 \in \ints$.
% \end{proof}

Given a polynomial $g(x) = a_0 + a_1x + \dots + a_m x^m$, let
\[
\coef(g) = \left[ \begin{array}{ccccc} a_0 & a_1 & a_2 & \dots & a_m \end{array} \right]^\prime
\]
denote the column vector of length $m+1$ containing the coefficients of $g$.  We use superscript $^\prime$ to indicate the vector or matrix transpose.  If $y$ and $z$ differ by a polynomial from $\mathcal{Z}$ then $y = z + p$ where $p \in \mathcal{Z}$ and also $\coef(y) = \coef(z) + \coef(p)$.
%\footnote{In group theory terminology $\coef(\cdot)$ coupled with vector addition is called a \term{group homomorphism}.\index{group homomorphism}}. 
  Consider the set
\[
L_{m+1} = \{ \coef(p) \mid p \in \mathcal{Z} \}
\]
containing the coefficient vectors corresponding to the polynomials in $\mathcal{Z}$.  Since the integer valued polynomials are a basis for $\mathcal{Z}$,
\begin{align*}
L_{m+1} &= \{ \coef(c_0 p_0 + c_1p_1 + \dots + c_mp_m) \mid c_i \in \ints \} \\
&= \{ c_0 \coef(p_0) + \dots + c_m\coef(p_m) \mid c_i \in \ints \}.
\end{align*}
Let
\[
\Pbf = \left[ \begin{array}{cccc} \coef(p_0)& \coef(p_1)& \dots& \coef(p_m)  \end{array} \right]
\]
be the $m+1$ by $m+1$ matrix with columns given by the coefficients of the integer valued polynomials.  Then,
\[
L_{m+1} = \{ \xbf = \Pbf\ubf \mid \ubf \in \ints^{m+1} \}
\]
and it is clear that $L_{m+1}$ is an $m+1$ dimensional lattice.  That is, the set of coefficients of the polynomials from $\mathcal{Z}$ forms a lattice with generator matrix $\Pbf$. We can restate Theorem~\ref{thm:circpolysampledthm} as:
\begin{corollary}\label{cor:circpolysampledcoef}
Two polynomial phase signals $s(t) = e^{2\pi j y(t)}$  and $r(t) = e^{2\pi j z(t)}$ are aliases if and only if $\coef(y)$ and $\coef(z)$ differ by a lattice point in $L_{m+1}$.
\end{corollary}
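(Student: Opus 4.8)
The plan is to deduce Corollary~\ref{cor:circpolysampledcoef} directly from Theorem~\ref{thm:circpolysampledthm} by translating the statement ``$y - z \in \mathcal{Z}$'' into the language of coefficient vectors. The key observation, already assembled in the text preceding the statement, is that the map $g \mapsto \coef(g)$ is a linear bijection from the space of polynomials of degree at most $m$ onto $\reals^{m+1}$, and that under this map the subspace $\mathcal{Z}$ corresponds exactly to the lattice $L_{m+1} = \{\coef(p) : p \in \mathcal{Z}\} = \{\Pbf\ubf : \ubf \in \ints^{m+1}\}$.

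First I would invoke Theorem~\ref{thm:circpolysampledthm}: $s$ and $r$ are aliases if and only if $y - z \in \mathcal{Z}$. Next I would use linearity of $\coef$ to write $\coef(y - z) = \coef(y) - \coef(z)$, so that $y - z \in \mathcal{Z}$ holds if and only if $\coef(y) - \coef(z) \in \coef(\mathcal{Z}) = L_{m+1}$. The only point requiring a word of justification is the equivalence $y - z \in \mathcal{Z} \iff \coef(y-z) \in L_{m+1}$: the forward direction is immediate from the definition of $L_{m+1}$, and the reverse direction uses that $\coef$ is injective on polynomials of degree at most $m$, so a polynomial whose coefficient vector lies in $L_{m+1}$ must itself be the (unique) polynomial in $\mathcal{Z}$ with that coefficient vector. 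Combining these two equivalences gives exactly the claim.

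There is essentially no obstacle here; the corollary is a restatement of Theorem~\ref{thm:circpolysampledthm} after the change of coordinates $\coef$, and all the structural facts needed (that $\mathcal{Z}$ is spanned over $\ints$ by $p_0,\dots,p_m$ by Lemma~\ref{lem:intvalpol}, hence that $L_{m+1}$ is the lattice generated by $\Pbf$) have already been established in the surrounding discussion. The proof is therefore a short two- or three-line argument.
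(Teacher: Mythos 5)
Your proposal is correct and matches the paper's own (implicit) argument: the paper presents the corollary as an immediate restatement of Theorem~\ref{thm:circpolysampledthm} via the linear bijection $\coef$, having just noted that $\coef(y) = \coef(z) + \coef(p)$ for $p \in \mathcal{Z}$ and that $L_{m+1} = \coef(\mathcal{Z})$ is the lattice generated by $\Pbf$. Your explicit mention of injectivity of $\coef$ for the reverse direction is a small but welcome addition to what the paper leaves unsaid.
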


%\begin{figure}[tp]
%	\centering
%		\includegraphics[width=\linewidth]{plots/circstatfigzero-1.mps}
%		\caption{The zeroth order polynomials $\tfrac{7}{10}$ (solid line) and $\tfrac{17}{10}$ (dashed line).}
%		\label{fig:circstatplot_zero}
%\end{figure}

\begin{figure}[p]
	\centering
		\includegraphics[width=0.85\linewidth]{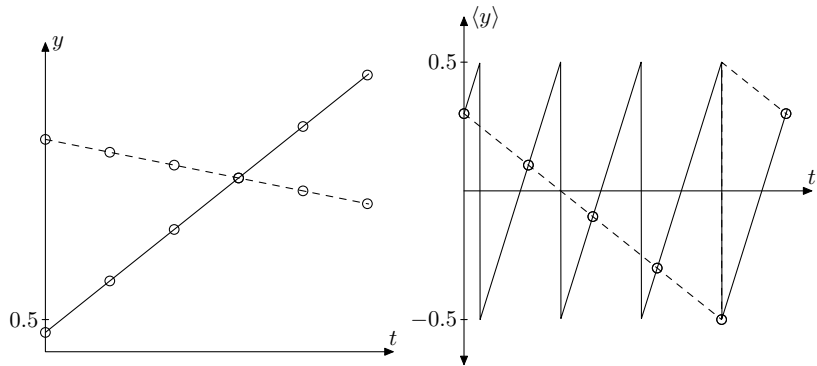}
		\caption{The first order polynomials $\tfrac{1}{10}(3 + 8t)$ (solid) and $\tfrac{1}{10}(33 - 2t)$ (dashed line).}
		\label{fig:circstatplot_line}
\end{figure}

\begin{figure}[p]
	\centering
		\includegraphics[width=0.85\linewidth]{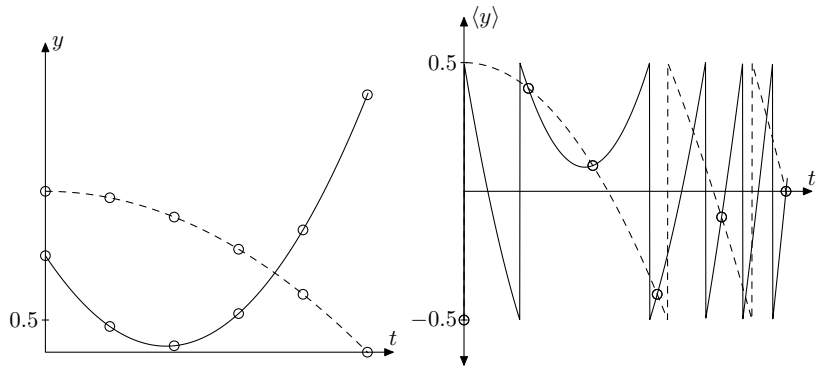}
		\caption{The quadratic polynomials $\tfrac{1}{10} (15 - 15 t + 4 t^2)$ (solid line) and $\tfrac{1}{10}(25 -  t^2)$  (dashed line).}
		\label{fig:circstatplot_quad}
\end{figure}

\begin{figure}[p]
	\centering
		\includegraphics[width=0.85\linewidth]{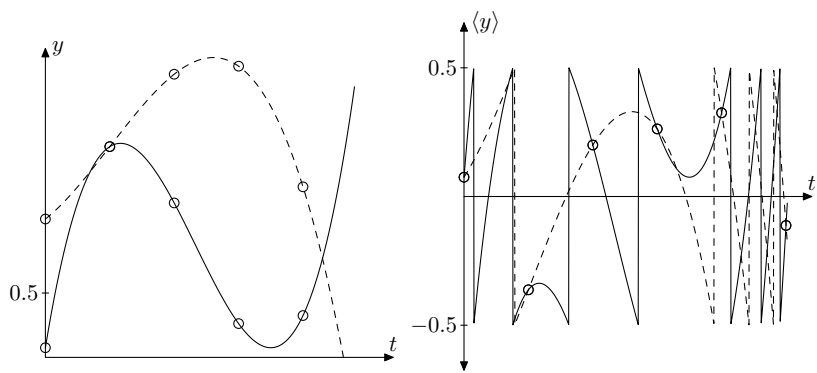}
		\caption{The cubic polynomials $\tfrac{1}{160} (174 + 85 t - 118 t^2 + 40 t^3)$ (solid line) and $\tfrac{1}{48} (84 + 19 t + 12 t^2 - 4 t^3)$  (dashed line).}
		\label{fig:circstatplot_cube}
\end{figure}

For the purpose of estimating the coefficients of a polynomial phase signal we must (in order to ensure identifiability) restrict the set of allowable coefficients so that no two polynomial phase signals are aliases of each other. In consideration of Corollary~\ref{cor:circpolysampledcoef} we require that the coefficients of $y(t)$, written in vector form $\mubf$, are contained in a set of coset representatives for the quotient $\reals^{m+1}/L_{m+1}$.  We call the chosen set of representatives the \term{identifiable region}\index{identifiable region}.

As an example consider the polynomial phase signal of order zero $e^{2\pi j \mu_0}$.  Since $e^{2\pi j \mu_0} = e^{2\pi j(\mu_0 + k)}$ for any integer $k$ we must, in order to ensure identifiability, restrict $\mu_0$ to some interval of length $1$.  A natural choice is the interval $[-\nicefrac{1}{2}, \nicefrac{1}{2})$. The lattice $L_1$ is the 1-dimensional integer lattice $\ints$ and the interval $[-\nicefrac{1}{2}, \nicefrac{1}{2})$ corresponds to the Voronoi cell of $L_1$. 
When $m=1$ it turns out that a natural choice of identifiable region is the square box $[-\nicefrac{1}{2}, \nicefrac{1}{2})^2$. This corresponds with the \term{Nyquist criterion}.  The lattice $L_2$ is equal to $\ints^2$ so the box $[-\nicefrac{1}{2}, \nicefrac{1}{2})^2$ corresponds with the Voronoi cell of $L_2$.  
When $m > 1$ the identifiable region becomes more complicated and $L_{m+1} \neq \ints^{m+1}$. %However, we can always construct an identifiable region by taking a tessellating region of lattice $L_{m+1}$.  %Two convenient tessellating regions are the Voronoi cell of $L$ and the rectangular tessellating region given by~\eqref{eq:rectangular_identifiable_region}.

In general there are infinitely many choices for the identifiable region. A natural choice is the Voronoi cell of $L_{m+1}$ used in~\cite{McKilliam2009IndentifiabliltyAliasingPolyphase}. Another potential choice is a fundamental parallelepiped of $L_{m+1}$. In this paper we will use the rectangular set constructed using Proposition~\ref{prop:lattice-theory-constructing_a_rectangular_tesselating_region}. Observe that $\Pbf$ is upper triangular with $k$th diagonal element equal to $\tfrac{1}{k!}$.  So this rectangular set is
\begin{equation}\label{eq:rectangular_identifiable_region}
B = \prod_{k=0}^{m}\left[ -\frac{0.5}{k!}, \frac{0.5}{k!}  \right).
\end{equation}
We will make use of this set when deriving the statistical properties of the LSU estimator in the next section. 

We define the function $\dealias(\xbf)$ to take $\xbf\in\reals^{m+1}$ to its coset representative inside $B$. That is, $\dealias(\xbf) = \zbf \in B$ where $\xbf - \zbf \in L_{m+1}$.  %The effect of $\dealias(\xbf)$ is to \emph{dealias} the polynomial phase coefficients. 
When $m = 0$ or $1$ $\dealias(\xbf) = \fracpart{\xbf}$ where $\fracpart{\xbf} = \xbf - \round{\xbf}$ denotes the (centered) fractional part and $\round{\xbf}$ denotes the nearest integer to $\xbf$ with half integers rounded upwards and both $\fracpart{\cdot}$ and $\round{\cdot}$ operate on vectors elementwise.  For $m \geq 2$ the function $\dealias(\xbf)$ can be computed by a simple sequential algorithm \cite[Sec. 7.2.1]{McKilliam2010thesis}.

\section{The least squares unwrapping estimator}\label{sec:least-squar-unwr}

We now describe the least squares unwrapping (LSU) estimator of the polynomial coefficients. Recall that we desire to estimate the coefficients $\tilde{\mu}_0, \dots, \tilde{\mu}_m$ from the noisy samples $Y_1, \dots, Y_N$ given in~\eqref{eq:Y_nsamplednoisey}.  We take the complex argument of $Y_n$ and divide by $2\pi$ to obtain
\begin{equation}\label{eq:noise_circ_poly}
\Theta_n = \frac{\angle{Y_n}}{2\pi} = \fracpart{ \Phi_n + y(n) }
\end{equation}
where $\angle$ denotes the complex argument (or phase), and 
\[
\Phi_n = \frac{1}{2\pi}\angle(1 + \rho^{-1}s_n^{-1}X_n)
\] 
is a random variable representing the `phase noise' induced by $X_n$.  If the distribution of $X_n$ is circularly symmetric (the phase $\angle X_n$ is uniformly distributed on $[-\pi, \pi)$ and is independent of the magnitude $\abs{X_n}$) then the distribution of $\Phi_n$ is the same as the distribution of $\tfrac{1}{2\pi}\angle(1 + \rho^{-1}X_n)$.  If $X_1, \dots, X_N$ are circularly symmetric and identically distributed, then $\Phi_1, \dots, \Phi_n$ are also identically distributed.

Let $\mubf$ be the vector $[\mu_0, \mu_1, \dots, \mu_m]$ and put,
\begin{equation} \label{eq:sumofsquaresfunction}
SS(\mubf) = \sum_{n=1}^{N}\fracpart{  \Theta_{n} - \sum_{k = 0}^{m}{\mu_k n^k} }^{2}.
\end{equation}
The least squares unwrapping estimator is defined as those coefficients $\widehat{\mubf} = [\widehat{\mu}_0, \dots, \widehat{\mu}_m]$ that minimise $SS$ over the identifiable region $B$, i.e., the LSU estimator is,
\begin{equation}\label{eq:hatmubfLSUdefn}
\widehat{\mubf} = \arg \min_{\mubf \in B} SS(\mubf). 
\end{equation}

It is shown in~\cite[Sec~8.1]{McKilliam2010thesis}\cite{McKilliam2009asilomar_polyest_lattice} how this minimisation problem can be posed as that of computing a nearest lattice point in a particular lattice. Polynomial time algorithms that compute the nearest point are described in~\cite[Sec.~4.3]{McKilliam2010thesis}.  Although polynomial in complexity, these algorithms are not fast in practice.  The existence of practically fast nearest point algorithms for these lattices is an interesting open problem.  In this paper we focus on the asymptotic statistical properties of the LSU estimator, rather than computational aspects.

The next theorem describes the asymptotic properties of the LSU estimator.  Before we state the theorem it is necessary to understand some of the properties of the phase noise $\Phi_1,\dots,\Phi_N$, which are \emph{circular} random variables with support on $[-\nicefrac{1}{2}, \nicefrac{1}{2})$~\cite{McKilliam2010thesis,McKilliam_mean_dir_est_sq_arc_length2010,Mardia_directional_statistics,Fisher1993}.  Circular random variables are often considered modulo $2\pi$ and therefore have support $[-\pi, \pi)$ with $-\pi$ and $\pi$ being identified as equivalent.  Here we instead consider circular random variables modulo 1 with support $[-\nicefrac{1}{2}, \nicefrac{1}{2})$ and with $-\nicefrac{1}{2}$ and $\nicefrac{1}{2}$ being equivalent.  This is nonstandard but it allows us to use notation such as $\round{\cdot}$ for rounding and $\fracpart{\cdot}$ for the centered fractional part in a convenient way.   %The random variable $2\pi \Phi_n$ has support on $[-\pi, \pi)$ and can be identified as a \emph{circular random variable}~\cite{Mardia_directional_statistics,Fisher1993}.  %Let $f$ be the probability density function (pdf) of $\Phi_n$.  Then $f$ takes nonzero values only on $[-\nicefrac{1}{2}, \nicefrac{1}{2})$.  It is instructive to think of $f$ as describing a describing a distribution on a circle.  For example, when the noise term $X_n$ is complex Gaussian with independent and identically distributed real and imaginary parts, the distribution of $\Phi_n$ is the projected Guassian~\cite[Section 5.6.1]{McKilliam2010thesis}\cite[page 46]{Mardia_directional_statistics}.  It is instrutive the consider a periodic version of $f$, that we write as $f(\fracpart{x})$, which has period equal to one, because the fractional part function $\fracpart{x}$ has period one. 

The \emph{intrinsic mean} (or \emph{Fr\'{e}chet mean}) of $\Phi_n$ is defined as~\cite{McKilliam_mean_dir_est_sq_arc_length2010,Bhattacharya_int_ext_means_2003,Bhattacharya_int_ext_means_2005},
\begin{equation}\label{eq:intrmeandefn}
 \mu_{\text{intr}}  = \arg \min_{\mu \in [-\nicefrac{1}{2}, \nicefrac{1}{2})} \expect \fracpart{\Phi_n - \mu}^2, 
% %&= \arg \min_{\mu \in [-\nicefrac{1}{2}, \nicefrac{1}{2})}\int_{-\nicefrac{1}{2}}^{\nicefrac{1}{2}}\fracpart{\phi - \mu}^2f(\theta) d\theta. \label{eq:minunrappedmeandef}
\end{equation}
and the \emph{intrinsic variance} is
\[
\sigma_{\text{intr}}^2 = \expect\fracpart{\Theta - \mu_{\text{intr}}}^2 = \min_{\mu \in [-\nicefrac{1}{2}, \nicefrac{1}{2})} \expect \fracpart{\Phi_n - \mu}^2,
\]
where $\expect$ denotes the expected value.  Depending on the distribution of $\Phi_n$ the argument that minimises~\eqref{eq:intrmeandefn} may not be unique.  The set of minima is often called the~\emph{Fr\'{e}chet mean set}~\cite{Bhattacharya_int_ext_means_2003,Bhattacharya_int_ext_means_2005}.  If the minimiser is not unique we say that $\Phi_n$ has no intrinsic mean.  
Observe the following property of circular random variables with zero intrinsic mean.
\begin{proposition}\label{prop:zerointmeanzeromean}
Let $\Phi$ be a circular random variable with intrinsic mean $\mu_{\text{intr}} = 0$ and intrinsic variance $\sigma^2$.  Then $\Phi$ has zero mean and variance $\sigma^2$, that is, $\expect \Phi = 0$ and $\expect (\Phi - \expect \Phi)^2  = \sigma^2$.
\end{proposition}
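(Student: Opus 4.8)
The plan is to compare the wrapped objective $g(\mu) := \expect\fracpart{\Phi-\mu}^2$ with the ordinary quadratic $h(\mu) := \expect(\Phi-\mu)^2$, and to exploit the fact that $h$ is minimised at the ordinary mean $\expect\Phi$. Two elementary observations drive the proof. First, $\sfracpart{x}$ is (up to sign) the distance from $x$ to the nearest integer, and since $0$ is an integer this distance is at most $|x|$; hence $\sfracpart{x}^2 \le x^2$ for every real $x$, and therefore $g(\mu) \le h(\mu)$ for all $\mu$. Second, because $\Phi$ takes values in $[-\nicefrac{1}{2},\nicefrac{1}{2})$ we have $\fracpart{\Phi} = \Phi$, so $g(0) = \expect\Phi^2$; by the definition of the intrinsic variance together with the hypothesis $\mu_{\text{intr}} = 0$, this says $\sigma^2 = \expect\Phi^2$.

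Next I would use the hypothesis that $\mu = 0$ minimises $g$ over $[-\nicefrac{1}{2},\nicefrac{1}{2})$. Combining this with the two observations above, for every $\mu \in [-\nicefrac{1}{2},\nicefrac{1}{2})$
\[
\sigma^2 = g(0) \le g(\mu) \le h(\mu) = \expect\Phi^2 - 2\mu\,\expect\Phi + \mu^2 .
\]
Since $\Phi < \nicefrac{1}{2}$ almost surely, $\expect\Phi \in [-\nicefrac{1}{2},\nicefrac{1}{2})$, so $\mu = \expect\Phi$ is an admissible choice in the displayed inequality. Substituting it gives $\sigma^2 \le \expect\Phi^2 - (\expect\Phi)^2 = \sigma^2 - (\expect\Phi)^2$, forcing $(\expect\Phi)^2 \le 0$ and hence $\expect\Phi = 0$. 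Finally $\expect(\Phi - \expect\Phi)^2 = \expect\Phi^2 = g(0) = \sigma^2$, which completes the proof.

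The argument is short, so there is no substantial obstacle; the only points that need care are the pointwise inequality $\sfracpart{x}^2 \le x^2$ (immediate from the definition of $\fracpart{\cdot}$ as a centered fractional part) and checking that $\expect\Phi$ actually lies in the half-open interval $[-\nicefrac{1}{2},\nicefrac{1}{2})$, so that it is a legitimate competitor in the minimisation defining the intrinsic mean. The potential boundary value $\expect\Phi = \nicefrac{1}{2}$ is ruled out because $\Phi < \nicefrac{1}{2}$ almost surely, and the lower end is fine since $\Phi \ge -\nicefrac{1}{2}$.
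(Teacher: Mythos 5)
Your proof is correct and is essentially the paper's argument: both rest on the pointwise bound $\sfracpart{x}^2 \le x^2$, the identity $\expect\fracpart{\Phi}^2 = \expect\Phi^2 = \sigma^2$, and the comparison with the unwrapped quadratic at $\mu = \expect\Phi$, with the paper phrasing it as a contradiction and you as a direct chain of inequalities. Your explicit check that $\expect\Phi$ lies in $[-\nicefrac{1}{2},\nicefrac{1}{2})$, so that it is an admissible competitor in the minimisation, is a detail the paper leaves implicit.
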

\begin{proof}
Assume the proposition is false and that $\mu = \expect \Phi \neq 0$.  But, then 
\begin{align*}
\sigma^2 = \expect\fracpart{\Phi -  \mu_{\text{intr}}}^2 = \expect\fracpart{\Phi}^2 = \expect \Phi^2 >  \expect (\Phi - \mu)^2 \geq  \expect \fracpart{\Phi - \mu}^2,
\end{align*}
violating the fact that $\mu_{\text{intr}} = 0$ is the minimiser of~\eqref{eq:intrmeandefn}.
\end{proof}
We are now equipped to state the asymptotic properties of the LSU estimator.

%\begin{figure}[tp]
% 	\centering 
% 		\includegraphics[width=\linewidth]{plots/distplots.class.distributions.circular.ProjectedNormalDistribution.var3.0-1.mps}
% 		\caption{The projected normal distribution.}
% 		\label{fig:circularuniformdist}
%\end{figure}

%BLERG NOTE: theorem should be rewritten to instead assume that the X_1, \dots X_N are circularly symmetric.  Then the continuity stuff for the f follows immediately.  You can then talk about how it could apply more generally, but gets complicated.
 
\begin{theorem} \label{thm:asymp_proof} 
Let $\widehat{\mubf}$ be defined by~\eqref{eq:hatmubfLSUdefn} and put $\widehat{\lambdabf}_N = \dealias(\tilde{\mubf} - \widehat{\mubf})$.  Denote the elements of $\widehat{\lambdabf}_N$ by $\widehat{\lambda}_{0,N}, \dots, \widehat{\lambda}_{m,N}$.  Suppose $\Phi_1, \dots, \Phi_N$ are independent and identically distributed with zero intrinsic mean, intrinsic variance $\sigma^2$, and probability density function $f$, then: 
\begin{enumerate}
\item (Strong consistency) $N^k \widehat{\lambda}_{k,N}$ converges almost surely to $0$ as $N\rightarrow\infty$ for all $k = 0, 1, \dots, m$.
\item (Asymptotic normality) If $f(\fracpart{x})$ is continuous at $x = -\nicefrac{1}{2}$ and if $f(-\nicefrac{1}{2}) < 1$ then the distribution of the vector
\[
\left[
\begin{array}
[c]{cccc}%
\sqrt{N} \widehat{\lambda}_{0,N} & N \sqrt{N}\widehat{\lambda}_{1,N}  & \dots & N^m\sqrt{N} \widehat{\lambda}_{m,N}
\end{array}
\right]^\prime
\]
converges to the normal with zero mean and covariance
\[
\frac{\sigma^2}{\left(1-f( -\nicefrac{1}{2}) \right)^{2}} \Cbf^{-1},
\]
where $\Cbf$ is the $m+1$ by $m+1$ Hilbert matrix with elements $C_{ik} = 1/(i + k + 1)$ for $i,k \in \{0, 1, \dots, m\}$.
\end{enumerate}
\end{theorem}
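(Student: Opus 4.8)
The first step is to rephrase everything in terms of $\widehat{\lambdabf}_N$. For $\mubf\in\reals^{m+1}$ put $\lambdabf=\dealias(\tilde{\mubf}-\mubf)$; then $\tilde{\mubf}-\mubf-\lambdabf\in L_{m+1}$, so the polynomial with that coefficient vector lies in $\mathcal{Z}$ and is integer valued at the integers, whence $\fracpart{\Theta_n-\sum_k\mu_kn^k}=\fracpart{\Phi_n+\sum_k\lambda_kn^k}$ and $SS(\mubf)=\Psi_N(\lambdabf)$, where $\Psi_N(\lambdabf):=\sum_{n=1}^N\fracpart{\Phi_n+\sum_{k=0}^m\lambda_kn^k}^2$. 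Since $\mubf\mapsto\dealias(\tilde{\mubf}-\mubf)$ maps $B$ bijectively onto $B$, $\widehat{\lambdabf}_N=\arg\min_{\lambdabf\in B}\Psi_N(\lambdabf)$, so in particular $\Psi_N(\widehat{\lambdabf}_N)\le\Psi_N(\mathbf{0})=\sum_{n=1}^N\Phi_n^2$, a bound used throughout. By the strong law and Proposition~\ref{prop:zerointmeanzeromean}, $\tfrac1N\Psi_N(\mathbf{0})\to\expect\Phi^2=\sigma^2$ a.s., hence $\limsup_N\tfrac1N\Psi_N(\widehat{\lambdabf}_N)\le\sigma^2$ a.s. Everything below works with $\Psi_N$ and $\lambdabf$.

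For part~1, rescale by $u=n/N$, $\beta_k=N^k\lambda_k$, so $\sum_k\lambda_kn^k=q_{\betabf}(n/N)$ with $q_{\betabf}(u)=\sum_k\beta_ku^k$, and the claim becomes $\widehat{\betabf}_N:=(N^k\widehat{\lambda}_{k,N})_{k=0}^m\to\mathbf{0}$ a.s. I would prove a lower bound forcing this, from two ingredients. (i) A deterministic lemma: there are $C(m),\gamma_0(m)>0$ so that for every sufficiently small $\eta>0$, if a degree-$\le m$ polynomial $q$ has $\|\coef(q)\|>C(m)\eta$ then $\babs{\{n\le N:\fracpart{q(n/N)}\notin(-\eta,\eta)\}}\ge\gamma_0(m)N$ for $N$ large. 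This follows by contradiction from the elementary fact that a subset of $\{1,\dots,N\}$ of density close to $1$ contains $\gtrsim N^2$ arithmetic progressions of length $m+2$: on such a progression the $(m+1)$st finite difference of $q$ vanishes, so if $q$ were within $\eta$ of an integer at every point, interpolation (plus Lemma~\ref{lem:intvalpol}) pins the leading coefficient of $q$ in the progression variable to within $O(\eta)$ of $\tfrac1{m!}\ints$; letting the common difference range over an interval of length $\gtrsim N$, and then iterating downward through the degrees, bounds $\|\coef(q)\|$ by $C(m)\eta$. (ii) A uniform strong law: writing $K(c)=\expect\fracpart{\Phi+c}^2$, which is continuous on the circle and, since $0$ is the \emph{unique} Fr\'echet mean, satisfies $\inf_{\sabs{c}\ge\eta}K(c)=\sigma^2+\kappa(\eta)$ with $\kappa(\eta)>0$, one has $\sup_q\babs{\tfrac1N\sum_{n=1}^N\big(\fracpart{\Phi_n+q(n/N)}^2-K(\fracpart{q(n/N)})\big)}\to0$ a.s., the supremum over degree-$\le m$ polynomials $q$. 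Granting these: if $\|\widehat{\betabf}_N\|\ge\delta$ along a subsequence, take $\eta=\delta/(2C(m))$; by (i) a set $S_N$ of at least $\gamma_0(m)N$ indices has $\fracpart{\widehat q_N(n/N)}$ bounded away from $0$ by $\eta$, so by (ii) $\tfrac1N\Psi_N(\widehat{\lambdabf}_N)=\tfrac1N\sum_nK(\fracpart{\widehat q_N(n/N)})+o(1)\ge\sigma^2+\gamma_0(m)\kappa(\eta)+o(1)$, contradicting $\limsup\le\sigma^2$. Hence $\widehat{\betabf}_N\to\mathbf{0}$ a.s.

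For part~2, set $D_N=\operatorname{diag}(\sqrt N,N\sqrt N,\dots,N^m\sqrt N)$ and $\thetabf=D_N\lambdabf$, so $\widehat{\thetabf}_N:=D_N\widehat{\lambdabf}_N$ is exactly the vector in the theorem and minimises $M_N(\thetabf):=\Psi_N(D_N^{-1}\thetabf)-\Psi_N(\mathbf{0})$ over $D_NB$; here $D_N^{-1}\thetabf$ has $k$th entry $\theta_k/(N^k\sqrt N)$. With this substitution $\sum_k\lambda_kn^k=\epsilon_n:=q(n/N)/\sqrt N$, $q(u)=\sum_k\theta_ku^k$, so $\epsilon_n=O(N^{-1/2})$ uniformly for $\thetabf$ in a fixed ball. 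Splitting on whether $\Phi_n+\epsilon_n$ wraps past $\pm\tfrac12$, $\fracpart{\Phi_n+\epsilon_n}^2-\Phi_n^2=2\Phi_n\epsilon_n+\epsilon_n^2+w_n(\thetabf)$, where $w_n(\thetabf)$ is supported on $\{\Phi_n$ within $\sabs{\epsilon_n}$ of $\pm\tfrac12\}$ and is $O(\sabs{\epsilon_n})$ there. Summing: $\sum_n2\Phi_n\epsilon_n=2\thetabf^\prime\gbf_N$ with $\gbf_N$ having $k$th entry $N^{-1/2}\sum_n\Phi_n(n/N)^k$; $\sum_n\epsilon_n^2=\tfrac1N\sum_nq(n/N)^2\to\thetabf^\prime\Cbf\thetabf$, the Hilbert matrix entering because $\int_0^1u^{i+k}\,du=1/(i+k+1)$; and, using continuity of $f(\fracpart{\cdot})$ at $-\tfrac12$, $\expect\sum_nw_n(\thetabf)=-f(-\nicefrac12)\sum_n\epsilon_n^2+o(1)\to-f(-\nicefrac12)\thetabf^\prime\Cbf\thetabf$ while $\var\sum_nw_n(\thetabf)=O(N^{-1/2})$. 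By the Lindeberg--Feller theorem $\gbf_N$ converges in distribution to the normal with mean $\mathbf{0}$ and covariance $\sigma^2\Cbf$ (using $\expect\Phi=0$, $\var\Phi=\sigma^2$ from Proposition~\ref{prop:zerointmeanzeromean} and $\tfrac1N\sum_n(n/N)^{i+k}\to C_{ik}$). Hence, finite-dimensionally, $M_N(\thetabf)$ converges in distribution to $2\thetabf^\prime Z+(1-f(-\nicefrac12))\thetabf^\prime\Cbf\thetabf$ with $Z$ normal, mean $\mathbf{0}$, covariance $\sigma^2\Cbf$; the hypothesis $f(-\nicefrac12)<1$ makes this quadratic form positive definite, so its unique minimiser is $-(1-f(-\nicefrac12))^{-1}\Cbf^{-1}Z$. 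An argmax theorem for stochastic processes then gives that $\widehat{\thetabf}_N$ converges in distribution to $-(1-f(-\nicefrac12))^{-1}\Cbf^{-1}Z$, which is normal with mean $\mathbf{0}$ and covariance $\sigma^2(1-f(-\nicefrac12))^{-2}\Cbf^{-1}$, as claimed.

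The algebra above is routine; the main obstacle is uniformity. For the argmax theorem I need $\widehat{\thetabf}_N=O_p(1)$, whereas part~1 only gives $\widehat{\thetabf}_N=o_p(\sqrt N)$, so I would upgrade this by a peeling argument over the shells $\{2^j\le\|\thetabf\|\le2^{j+1}\}$, using the quadratic lower bound on $\expect M_N$ together with a modulus-of-continuity bound on the centred process $\thetabf\mapsto M_N(\thetabf)-\expect M_N(\thetabf)$. More fundamentally, every $o(1)$ and $o_p(1)$ statement above must hold uniformly over $\thetabf$ in balls (and, for the peeling, over slowly growing balls), and likewise the uniform strong law in part~1. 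The delicate terms are the wrapping remainders $\sum_nw_n(\thetabf)$ and their part-1 analogues: these are sums of indicators of events $\{\Phi_n\in I_n(\thetabf)\}$ with $I_n(\thetabf)$ an interval whose endpoints depend, for fixed $n$, linearly on $\thetabf$, so the relevant function class is controlled by the arrangement of the $N$ hyperplanes $\{q_{\thetabf}(n/N)\in\tfrac12+\ints\}$ in $\reals^{m+1}$. This arrangement has only polynomially many cells in $N$, on each of which the indicators are constant, so the uniform control is obtained by an exponential deviation bound on each cell, a Lipschitz/chaining estimate within cells, and Borel--Cantelli over the cells. Carrying this through carefully is where the empirical-process and hyperplane-arrangement machinery cited in the introduction is needed.
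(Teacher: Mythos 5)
Your part~1 is, modulo reorganisation, the paper's argument: your uniform strong law (ingredient (ii)) is exactly what the paper proves in Appendix~\ref{app:uniform-law-large} via Hoeffding's inequality on a polynomially fine grid plus Borel--Cantelli, and your deterministic lemma (i) is the contrapositive of the paper's key Lemma~\ref{lem:moran2}, proved there by the same mechanism (a subset of $\{1,\dots,N\}$ of density above $\tfrac{2m+1}{2m+2}$ contains an arithmetic progression of length $m+1$ for every common difference $h\le N/(2m)$; the $m$th finite difference along it pins $h^m m!\lambda_m$ near an integer, and doubling $h$ then inducting downward through the degrees gives $N^k\lambda_k=O(\delta)$). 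One caution: as literally stated your lemma (i) is false for arbitrary degree-$\le m$ polynomials --- $q(u)=Nu$ has a huge coefficient yet $\fracpart{q(n/N)}=0$ for every $n\le N$ --- so it must be restricted to polynomials whose coefficient vector $\lambdabf$ lies in the identifiable region $B$; this is precisely where the paper invokes~\eqref{eq:identifiability} to upgrade $\sabs{\sfracpart{m!\lambda_{m,N}}}<2^m\delta$ to $\sabs{m!\lambda_{m,N}}<2^m\delta$. Since you only apply (i) to $\widehat{\lambdabf}_N\in B$, this is a repair of the statement, not of the application.

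For part~2 you take a genuinely different route. The paper does not expand the criterion and invoke an argmin theorem; it first proves the combinatorial fact (Lemma~\ref{lem:boundonhatlambda}) that at the minimiser every residual satisfies $\sabs{\sfracpart{\cdot}}\le\tfrac12-\tfrac{1}{2N}$, so the objective is differentiable there, and then works from the first-order conditions. These give the exact identity $\dbf_N=\kbf_N$, and the empirical-process and hyperplane-arrangement machinery is spent showing $\kbf_N=\sqrt{N}\,(f(-\nicefrac{1}{2})-1)(\Cbf+o_P(1))\widehat{\psibf}_N+o_P(1)$; since $\dbf_N=O_P(1)$ and $\Cbf$ is invertible, tightness of $\sqrt{N}\widehat{\psibf}_N$ falls out for free and no peeling is needed. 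Your route --- uniform local quadratic expansion of the centred criterion, with the wrapping term contributing the factor $-f(-\nicefrac{1}{2})$ to the quadratic, a rate/peeling step, and the argmax continuous-mapping theorem --- is sound and standard in M-estimation; it identifies the correct limit $-(1-f(-\nicefrac{1}{2}))^{-1}\Cbf^{-1}Z$ and correctly locates the only hard point, namely uniform control of the wrapping indicators, which is the same symmetrisation-plus-chaining-plus-cutting estimate as the paper's Appendix~\ref{app:tightness-result}. What the paper's approach buys is that it bypasses the peeling step and the hypotheses of the argmin theorem; what yours buys is that it bypasses Lemma~\ref{lem:boundonhatlambda} and makes the origin of the factor $1-f(-\nicefrac{1}{2})$ (expected loss of squared residual per unit of wrapping) more transparent.
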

The proof of Theorem~\ref{thm:asymp_proof} is contained within the next two sections. Section~\ref{sec:strongconstproof} proves strong consistency and Section~\ref{sec:centlimitproof} proves asymptotic normality.  Proofs for the case when $m=0$ were given in~\cite{McKilliam_mean_dir_est_sq_arc_length2010} and for the case when $m=1$ were given in~\cite{McKilliamFrequencyEstimationByPhaseUnwrapping2009}.  The proofs here take a similar approach, but require new techniques.  %The proof of asymptotic normality in~\cite{McKilliamFrequencyEstimationByPhaseUnwrapping2009} contains a subtle error that is amended by the proof given in this paper.
%Before begining the proof we first discuss the assumptions made in the theorem statement.
The theorem gives conditions on $\dealias(\tilde{\mubf} - \widehat{\mubf})$ rather than directly on the difference $\tilde{\mubf} - \widehat{\mubf}$.   To see why this makes sense, consider the case when $m=0$, $\tilde{\mu}_0 = -0.5$ and $\widehat{\mu}_0 = 0.49$, so that $\tilde{\mu}_0 - \widehat{\mu}_0 = -0.99$.  However, the two phases are obviously close, since the phases $\pm 0.5$ are actually the same.  In this case $\dealias(\tilde{\mu}_0 - \widehat{\mu}_0) = \fracpart{\tilde{\mu}_0 - \widehat{\mu}_0} = 0.01$ as expected.  The same reasoning holds for $m > 0$.

The requirement that $\Phi_1, \dots, \Phi_N$ be identically distributed will typically hold only when the complex random variables $X_1, \dots, X_N$ are identically distributed and circularly symmetric.  It would be possible to drop the assumption that $\Phi_1, \dots, \Phi_N$ be identically distributed, but this complicates the theorem statement and the proof.  In the interest of simplicity we only consider the case when $\Phi_1, \dots, \Phi_N$ are identically distributed here.  If $X_n$ is circularly symmetric with density function nonincreasing with magnitude $\abs{X_n}$, then the corresponding $\Phi_n$ necessarily has zero intrinsic mean~\cite[Theorem 5.2]{McKilliam2010thesis}.  Thus, our theorem covers commonly used distributions for $X_1, \dots, X_N$, such as the normal distribution.

Although we will not prove it here the assumption that $\Phi_1,\dots,\Phi_N$ have zero intrinsic mean is not only sufficient, but also necessary, for if $\Phi_1,\dots,\Phi_N$ have intrinsic mean $x \in [-\nicefrac{1}{2},\nicefrac{1}{2})$ with $x \neq 0$ then $\sfracpart{\widehat{\lambda}_{0,N} - x}\rightarrow 0$ almost surely as $N\rightarrow\infty$, and so $\widehat{\lambda}_{0,N}$ does not converge to zero.  On the other hand if $\Phi_1,\dots,\Phi_N$ do not have an intrinsic mean then $\widehat{\lambda}_{0,N}$ will not converge.

The proof of asymptotic normality places requirements on the probability density function $f$ of the phase noise $\Phi_1, \dots, \Phi_N$.  The requirement that $\Phi_1, \dots, \Phi_N$ have zero intrinsic mean implies $f(-\nicefrac{1}{2}) \leq 1$~\cite[Lemma~1]{McKilliam_mean_dir_est_sq_arc_length2010}, so the only case not handled is when $f(-\nicefrac{1}{2}) = 1$ or when $f(\fracpart{x})$ is discontinuous at $x = -\nicefrac{1}{2}$. In this exceptional case other expressions for the asymptotic variance can be found (similar to \cite[Theorem 3.1]{Hotz_circle_means_2011}), but this comes at a substantial increase in complexity and we have omitted them for this reason. %Distributions that do not satisfy these requirements are unlikely to be needed in practice.

\section{Proof of strong consistency}\label{sec:strongconstproof}
 Substituting \eqref{eq:noise_circ_poly} into $SS$ we obtain
 \begin{align*}
SS\left( \mubf \right) &= \sum_{n=1}^{N}\fracpart{ \fracpart{ \Phi_n + \sum_{k = 0}^{m}{\tilde{\mu}_k n^k} } - \sum_{k = 0}^{m}{\mu_k n^k} }^{2} \\
&= \sum_{n=1}^{N}\fracpart{  \Phi_n + \sum_{k = 0}^{m}{(\tilde{\mu}_k - \mu_k) n^k} }^{2}.
\end{align*}
Let $\lambdabf = \dealias(\tilde{\mubf} - \mubf) = \tilde{\mubf} - \mubf - \pbf$ where $\pbf$ is a lattice point from $L_{m+1}$. From the definition of $L_{m+1}$ we have $p_0 + p_1 n + \dots + p_{m} n^m$ an integer whenever $n$ is an integer, so
\begin{align*}
\fracpart{\sum_{k=0}^{m}\lambda_k n^k } = \fracpart{\sum_{k=0}^{m}(\tilde{\mu}_k - \mu_k - p_k) n^k } = \fracpart{\sum_{k=0}^{m}(\tilde{\mu}_k - \mu_k) n^k }.
\end{align*}
Let
\[
SS\left( \mubf \right) = \sum_{n=1}^{N}\fracpart{  \Phi_n + \sum_{k = 0}^{m}{\lambda_k n^k} }  ^{2} = N S_{N}\left( \lambdabf \right).
 \]
From the definition of the $\dealias(\cdot)$ function $\lambdabf \in B$ so the elements of $\lambdabf$ satisfy
 \begin{equation} \label{eq:identifiability}
 -\frac{0.5}{k!} \leq \lambda_k < \frac{0.5}{k!}.
 \end{equation} 
Now $\widehat{\lambdabf}_N = \dealias(\tilde{\mubf} - \widehat{\mubf})$ is the minimiser of $S_{N}$ in $B$. We shall show that $N^k\widehat{\lambda}_{k,N}\rightarrow0$ almost surely as $N\rightarrow\infty$ for all $k = 0,1, \dots, m$ and from this the proof of strong consistency follows.  Let
\[
V_N(\lambdabf) =  \expect S_N(\lambdabf) = \frac{1}{N}\sum_{n=1}^{N} \expect \fracpart{  \Phi_{n}+\sum_{k = 0}^{m}{\lambda_k n^k}}  ^{2}.
\]
It will follow that
 \begin{equation}\label{eq:SNVNunifmlln}
\sup_{\lambdabf \in B}\sabs{S_N(\lambdabf) - V_N(\lambdabf)} \rightarrow 0 \qquad \text{almost surely as $N\rightarrow\infty$}.  
 \end{equation}
This type of result has been called a \emph{uniform law of large numbers} and follows from standard techniques~\cite{Pollard_conv_stat_proc_1984}.  We give a full proof of~\eqref{eq:SNVNunifmlln} in Appendix~\ref{app:uniform-law-large}.  We now concentrate attention on the minimiser of $V_N$. Because $\Phi_n$ has zero intrinsic mean 
\begin{equation}\label{eq:Efracpartmined}
\expect \fracpart{ \Phi_n + z }^{2}
\end{equation}
is minimised uniquely at $z = 0$ for $z \in [-\nicefrac{1}{2}, \nicefrac{1}{2})$.  Since the intrinsic variance of $\Phi_n$ is $\sigma^2$, when $z = 0$,
\begin{equation}\label{eq:Efracpartphi}
\expect\fracpart{\Phi_1+z}^{2} = \expect\fracpart{\Phi_1}^{2} = \sigma^2,
\end{equation}
and so the minimum attained value is $\sigma^2$.

\begin{lemma}\label{lem:ES_Nminimisedzero}
For $\lambdabf \in B$ the function $V_N(\lambdabf)$ is minimised uniquely at $\zerobf$, the vector of all zeros.  At this minimum $V_N(\zerobf) = \sigma^2$.
\end{lemma}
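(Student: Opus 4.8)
The plan is to bound each summand of $V_N$ from below using the zero-intrinsic-mean hypothesis, and then to determine exactly which $\lambdabf$ permit equality simultaneously for enough values of $n$.

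First I would set $z_n(\lambdabf) = \sum_{k=0}^m \lambda_k n^k$, so that $V_N(\lambdabf) = \tfrac1N\sum_{n=1}^N \expect\fracpart{\Phi_n + z_n(\lambdabf)}^2$. Since $\fracpart{\Phi_n + z} = \fracpart{\Phi_n + \fracpart{z}}$ and $\fracpart{z}\in[-\nicefrac12,\nicefrac12)$, the fact that~\eqref{eq:Efracpartmined} is minimised uniquely at $0$ over $[-\nicefrac12,\nicefrac12)$ gives $\expect\fracpart{\Phi_n + z}^2 \ge \sigma^2$ for every real $z$, with equality if and only if $\fracpart{z}=0$, i.e.\ $z\in\ints$. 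Averaging over $n$ then yields $V_N(\lambdabf)\ge\sigma^2$, while~\eqref{eq:Efracpartphi} gives $V_N(\zerobf)=\sigma^2$, settling the claimed minimum value; and equality $V_N(\lambdabf)=\sigma^2$ forces $z_n(\lambdabf)\in\ints$ for every $n=1,\dots,N$.

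It then remains to show that, for $N\ge m+1$, the only $\lambdabf\in B$ with $z_n(\lambdabf)\in\ints$ for all $n=1,\dots,N$ is $\lambdabf=\zerobf$. I would look at the polynomial $q(x)=\sum_{k=0}^m\lambda_k x^k$, so that $\coef(q)=\lambdabf$ and $q(n)=z_n(\lambdabf)$, and expand it in the integer-valued polynomial basis $p_0,\dots,p_m$ (legitimate since these have degrees $0,1,\dots,m$), say $q=\sum_{i=0}^m c_i p_i$ with $c_i\in\reals$. Using the forward-difference identity $p_i(x+1)-p_i(x)=p_{i-1}(x)$, repeatedly differencing the $m+1$ integers $q(1),\dots,q(m+1)$ shows first that the constant $\Delta^m q = c_m$ is an integer, then that $q-c_m p_m$ is integer-valued at $1,\dots,m$ with degree $<m$, and so on by downward induction that every $c_i\in\ints$; hence $q\in\mathcal{Z}$ and $\lambdabf=\coef(q)\in L_{m+1}$. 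Since $B$ is a set of coset representatives for $\reals^{m+1}/L_{m+1}$ (Proposition~\ref{prop:lattice-theory-constructing_a_rectangular_tesselating_region}), we have $B\cap L_{m+1}=\{\zerobf\}$, so $\lambdabf=\zerobf$. Restricting to $N\ge m+1$ is harmless, since the asymptotic statements only invoke $V_N$ for large $N$.

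The lower bound on the summands and the evaluation at $\zerobf$ are routine; the step I expect to need the most care — and the only genuinely arithmetic ingredient — is the finite-difference argument that a polynomial of degree $\le m$ taking integer values at $m+1$ consecutive integers lies in $\mathcal{Z}$, which is exactly where the lattice $L_{m+1}$ and Lemma~\ref{lem:intvalpol} enter.
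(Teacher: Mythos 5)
Your proposal is correct and follows essentially the same route as the paper: decompose $V_N$ into the per-sample expectations, use the zero-intrinsic-mean hypothesis to get the lower bound $\sigma^2$ with equality iff $\fracpart{z_n(\lambdabf)}=0$ for each sampled $n$, and then identify the equality set with $B\cap L_{m+1}=\{\zerobf\}$. The one place you are more careful than the paper is the passage from ``$z_n(\lambdabf)\in\ints$ for $n=1,\dots,N$'' to ``$q\in\mathcal{Z}$'': the paper simply invokes the equivalence with integrality at \emph{all} integers, whereas you supply the finite-difference argument (and the harmless restriction $N\ge m+1$) that makes this step explicit.
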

\begin{proof}
Put $z(n) = \lambda_0 + \lambda_1 n + \dots + \lambda_m n^m$.  Then
\begin{align*}
V_N(\lambdabf) = \frac{1}{N}\expect\sum_{n=1}^{N}\fracpart{ \Phi_n + \sum_{k=0}^m{\lambda_k n^k} }^2 = \frac{1}{N}\sum_{n=1}^{N}\expect\fracpart{ \Phi_n + \fracpart{z(n)} }^2.
\end{align*}
We know that $\expect\fracpart{ \Phi_n + \fracpart{z(n)} }^2$ is minimised uniquely when $\fracpart{z(n)} = 0$ at which point it takes the value $\sigma^2$. Now $\fracpart{z(n)}$ is equal to zero for all integers $n$ if and only if $z \in \mathcal{Z}$, or equivalently if $\coef(z)$ is a lattice point in $L_{m+1}$. By definition $B$ contains precisely one lattice point from $L_{m+1}$, this being the origin $\zerobf$. Therefore $V_N$ is minimised uniquely at $\zerobf$, at which point it takes the value $\sigma^2$.
\end{proof}

\begin{lemma} \label{lem:ESNconv}
$\sabs{V_N(\widehat{\lambdabf}_N) - \sigma^2} \rightarrow 0$ almost surely as $N \rightarrow \infty$.
\end{lemma}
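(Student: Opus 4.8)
The plan is to run the standard sandwich argument used in consistency proofs for $M$-estimators, since all the required ingredients are already available: the uniform law of large numbers~\eqref{eq:SNVNunifmlln}, the fact from Lemma~\ref{lem:ES_Nminimisedzero} that $\min_{\lambdabf\in B}V_N(\lambdabf)=V_N(\zerobf)=\sigma^2$ for every $N$, and the defining property that $\widehat{\lambdabf}_N$ minimises $S_N$ over $B$. I would first abbreviate $\epsilon_N = \sup_{\lambdabf\in B}\sabs{S_N(\lambdabf)-V_N(\lambdabf)}$, so that $\epsilon_N\to 0$ almost surely by~\eqref{eq:SNVNunifmlln}, and work on the event (of probability one) where this convergence holds.

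The lower bound is immediate: since $\widehat{\lambdabf}_N\in B$, Lemma~\ref{lem:ES_Nminimisedzero} gives $V_N(\widehat{\lambdabf}_N)\ge V_N(\zerobf)=\sigma^2$ for all $N$. For the upper bound I would chain three inequalities. Because $\widehat{\lambdabf}_N$ minimises $S_N$ over $B$ and $\zerobf\in B$, we have $S_N(\widehat{\lambdabf}_N)\le S_N(\zerobf)$; applying the uniform bound $\epsilon_N$ once to pass from $V_N$ to $S_N$ at $\widehat{\lambdabf}_N$ and once to pass back from $S_N$ to $V_N$ at $\zerobf$ yields
\[
V_N(\widehat{\lambdabf}_N) \le S_N(\widehat{\lambdabf}_N) + \epsilon_N \le S_N(\zerobf) + \epsilon_N \le V_N(\zerobf) + 2\epsilon_N = \sigma^2 + 2\epsilon_N .
\]
Combining the two bounds gives $0\le V_N(\widehat{\lambdabf}_N)-\sigma^2\le 2\epsilon_N$, and letting $N\to\infty$ proves $\sabs{V_N(\widehat{\lambdabf}_N)-\sigma^2}\to 0$ almost surely.

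There is no substantive obstacle here; the lemma is essentially a formal consequence of the two preceding ones. The only points requiring a moment of care are that Lemma~\ref{lem:ES_Nminimisedzero} delivers $V_N(\zerobf)=\sigma^2$ uniformly in $N$ (so the lower bound does not drift), and that the conclusion holds off the single null set on which~\eqref{eq:SNVNunifmlln} fails, so that ``almost surely'' is preserved. This lemma will then feed into the next step of the consistency argument, where the uniqueness of the minimiser of $V_N$ from Lemma~\ref{lem:ES_Nminimisedzero} is used to force $\widehat{\lambdabf}_N$ itself (suitably rescaled) to zero.
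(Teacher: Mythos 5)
Your proposal is correct and is essentially the paper's own argument: the paper likewise combines $S_N(\widehat{\lambdabf}_N)\le S_N(\zerobf)$ with $V_N(\zerobf)\le V_N(\widehat{\lambdabf}_N)$ and bounds $V_N(\widehat{\lambdabf}_N)-V_N(\zerobf)$ by $\sabs{V_N(\widehat{\lambdabf}_N)-S_N(\widehat{\lambdabf}_N)}+\sabs{S_N(\zerobf)-V_N(\zerobf)}$, which is exactly your $2\epsilon_N$ controlled by~\eqref{eq:SNVNunifmlln}. No gaps.
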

\begin{proof}
By definition $\widehat{\lambdabf}_N =\arg\min_{\lambdabf\in B} S_N(\lambdabf)$ so $0 \leq S_{N}(\zerobf) - S_N(\widehat{\lambdabf}_N)$.  Also, because $V_N$ is minimised at $\zerobf$, it follows that $0 \leq V_N(\widehat{\lambdabf}_N) - V_N(\zerobf)$.  Thus,
\begin{align*}
0 &\leq V_N(\widehat{\lambdabf}_N) - V_N(\zerobf) \\
 &\leq V_N(\widehat{\lambdabf}_N) - V_N(\zerobf) + S_{N}(\zerobf) - S_N(\widehat{\lambdabf}_N)   \\
&\leq \sabs{ V_N(\widehat{\lambdabf}_N) - S_N(\widehat{\lambdabf}_N) } + \sabs{ S_{N}(\zerobf) - V_N(\zerobf) }
\end{align*}
which converges almost surely to zero as $N\rightarrow\infty$ as a result of~\eqref{eq:SNVNunifmlln}.
\end{proof}

% \begin{lemma}\label{lem:ABprob}
% Let $A$ and $B$ be positive random variables.  Then
% \[
% \prob( A + B > \epsilon ) \leq \prob( A > \tfrac{\epsilon}{2}) + \prob( A > \tfrac{\epsilon}{2}).
% \]
% \end{lemma}
% \begin{proof}
% Write 
% \begin{equation}\label{eq:A+B>epsilon}
% \prob( A + B > \epsilon ) = 1 - \prob( A + B \leq \epsilon ).
% \end{equation}
% If $A \leq \tfrac{\epsilon}{2}$ and $B \leq \tfrac{\epsilon}{2}$ then $A + B \leq \epsilon$, so
% \begin{align*}
% \prob( A + B \leq \epsilon ) &\geq \prob( A \leq \tfrac{\epsilon}{2} \;\; \text{and} \;\; B \leq \tfrac{\epsilon}{2} ) \\
% &= 1 - \prob( A > \tfrac{\epsilon}{2} \;\; \text{or} \;\; B > \tfrac{\epsilon}{2} ) \\
% &\geq 1 - \prob( A > \tfrac{\epsilon}{2}) - \prob( B > \tfrac{\epsilon}{2} ).
% \end{align*}
% Subtituting this into~\eqref{eq:A+B>epsilon} gives the required result.
% \end{proof}

We have now shown that $V_N$ is uniquely minimised at $\zerobf$, that $V_N(\zerobf) = \sigma^2$, and that $V_N(\widehat{\lambdabf}_N)$ converges almost surely to $\sigma^2$.  These results are enough to show that $\widehat{\lambdabf}_N$ converges almost surely to zero.  However, this tells us nothing about the rate at which the components of $\widehat{\lambdabf}_N$ approach zero as required by Theorem~\ref{thm:asymp_proof}.  To prove these stronger properties we need some preliminary results about arithmetic progressions, and from the calculus of finite differences.
 
Let $W = \{1,2,\dots, N\}$ and let $K$ be a subset of $W$.  For any integer $h$, let
\begin{equation} \label{eq:S(h,G)def} 
A(h,K) = \big\{ n \mid n + ih \in K \;\forall\; i \in \{0,1,\dots,m\} \big\}
\end{equation}
be the set containing all integers $n$ such that the arithmetic progression
\[
n, \,\, n + h, \,\, n + 2h, \,\, \dots, \,\, n + mh
\]
of length $m+1$ is contained in the subset $K$.  If $K$ is a small subset of $W$ then $A(h,K)$ might be empty. However, the next two lemmas and the following corollary will show that if $K$ is sufficiently large then it always contains at least one arithmetic progression (for all sufficiently small $h$) and therefore $A(h,K)$ is not empty. We do not wish to claim any novelty here, the study of arithmetic progressions within subsets of $W$ has a considerable history~\cite{Erdos_on_some_sequence_of_integers1936,Szemeredi_setint_no_k_arth1975,Gowers_new_proof2001}.  In particular, Gower's~\cite[Theorem 1.3]{Gowers_new_proof2001} gives a result far stronger than we require here.  Denote by $K \backslash \{r\}$ the set $K$ with the element $r$ removed.

\begin{lemma} \label{lem:S(h,G/r)size}
Let $r \in K$.  For any $h$, removing $r$ from $K$ removes at most $m+1$ arithmetic progressions $n, n+h, \dots n+mh$ of length $m+1$.  That is,
\[
|A(h,K \backslash \{r\})| \geq |A(h,K)| - (m+1).
\]
\end{lemma}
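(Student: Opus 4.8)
The plan is to pin down \emph{exactly} which length-$(m+1)$ arithmetic progressions are destroyed when $r$ is deleted from $K$, and then simply count them. First I would record the trivial inclusion $K\backslash\{r\} \subseteq K$, which by the definition~\eqref{eq:S(h,G)def} of $A(h,\cdot)$ gives $A(h,K\backslash\{r\}) \subseteq A(h,K)$. Consequently
\[
|A(h,K)| - |A(h,K\backslash\{r\})| = \big| A(h,K) \backslash A(h,K\backslash\{r\}) \big|,
\]
so it suffices to bound the cardinality of this difference set by $m+1$.

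Next I would characterise that difference set. An integer $n$ lies in $A(h,K)\backslash A(h,K\backslash\{r\})$ precisely when the progression $n, n+h, \dots, n+mh$ is contained in $K$ but fails to be contained in $K\backslash\{r\}$; since the only element of $K$ missing from $K\backslash\{r\}$ is $r$, this happens exactly when the progression lies in $K$ \emph{and} passes through $r$. The latter condition forces $r = n + ih$ for some $i \in \{0,1,\dots,m\}$, hence
\[
n \in \big\{\, r - ih \;:\; i \in \{0,1,\dots,m\} \,\big\},
\]
a set of cardinality at most $m+1$. Therefore $\big|A(h,K)\backslash A(h,K\backslash\{r\})\big| \leq m+1$, which is the assertion.

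The argument is purely combinatorial and I do not foresee any genuine obstacle. The only point that warrants a moment's care is the degenerate case $h=0$: there the ``progression'' $n,n,\dots,n$ is constant and all the candidate starting points $r - ih$ coincide with $r$, so at most a single progression is removed, still comfortably within the stated bound.
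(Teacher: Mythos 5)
Your argument is correct and is essentially the paper's own proof: both identify the destroyed progressions as exactly those through $r$, whose starting points $n = r - ih$ for $i \in \{0,\dots,m\}$ number at most $m+1$. Your extra care with the set difference and the $h=0$ degenerate case is fine but adds nothing beyond the paper's one-line argument.
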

\begin{proof}
The proof follows because there are at most $m+1$ integers, $n$, such that $n+ih = r$ for some $i \in \{0,1,\dots,m\}$.  That is, there are at most $m+1$ arithmetic progressions of type $n, n+h, \dots n+mh$ that contain $r$.
\end{proof}

 \begin{lemma} \label{lem:S(h,K)size}
 $|A(h,K)| \geq N - mh - (N - |K|)(m+1)$.
 \end{lemma}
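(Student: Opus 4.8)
The plan is to bound the size of $A(h,K)$ by starting from the full set $W = \{1,\dots,N\}$ and removing, one at a time, the elements of $W$ that are not in $K$, keeping track of how many arithmetic progressions of length $m+1$ with common difference $h$ are destroyed at each step. First I would count the arithmetic progressions of the form $n, n+h, \dots, n+mh$ that lie entirely inside $W$ itself: such a progression is contained in $W$ precisely when $1 \le n$ and $n + mh \le N$, i.e.\ when $n$ ranges over $\{1, 2, \dots, N - mh\}$ (assuming $h \ge 0$; the case $h<0$ is symmetric, or one may simply note $A(h,K)=A(-h,K')$ after reindexing). Hence $|A(h,W)| \ge N - mh$. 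If $mh \ge N$ the claimed bound is vacuous since the right-hand side is then non-positive and $|A(h,K)|\ge 0$ trivially, so I would assume $mh < N$.

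Next I would invoke Lemma~\ref{lem:S(h,G/r)size} repeatedly. Write $W \setminus K = \{r_1, r_2, \dots, r_t\}$ where $t = N - |K|$, and consider the chain of sets
\[
W = K_0 \supset K_1 \supset \dots \supset K_t = K, \qquad K_i = K_{i-1} \setminus \{r_i\}.
\]
By Lemma~\ref{lem:S(h,G/r)size}, each removal decreases the count by at most $m+1$, so $|A(h,K_i)| \ge |A(h,K_{i-1})| - (m+1)$ for each $i$. Chaining these inequalities and using the bound $|A(h,W)| \ge N - mh$ from the previous paragraph gives
\[
|A(h,K)| \;=\; |A(h,K_t)| \;\ge\; |A(h,W)| - t(m+1) \;\ge\; N - mh - (N - |K|)(m+1),
\]
which is exactly the claimed inequality.

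There is essentially no serious obstacle here; the argument is a clean telescoping of Lemma~\ref{lem:S(h,G/r)size}. The only point requiring a moment's care is the base count $|A(h,W)| \ge N - mh$ and the handling of the sign of $h$ and the degenerate regime $mh \ge N$, but in that regime the stated bound holds automatically because the right side is non-positive. One should also note that the bound is most useful when $|K|$ is close to $N$ and $h$ is small, which is precisely the regime in which it will later be applied to guarantee that $A(h,K)$ is nonempty.
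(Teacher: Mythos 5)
Your proof is correct and follows exactly the paper's argument: establish $|A(h,W)| = N - mh$ (or $\geq$, handling degenerate cases) and then telescope Lemma~\ref{lem:S(h,G/r)size} over the $N - |K|$ removals needed to pass from $W$ to $K$. The extra care you take with the sign of $h$ and the regime $mh \geq N$ is fine but not needed in the paper, which only applies the bound for $1 \leq h \leq N/(2m)$.
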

 \begin{proof}
 Note that $|A(h,W)| = N - mh$.  The proof follows by starting with $A(h,W)$ and applying Lemma~\ref{lem:S(h,G/r)size} precisely $|W|-|K|=N-|K|$ times. That is, $K$ can be constructed by removing $N - |K|$ elements from $W$ and this removes at most $(N - |K|)(m+1)$ arithmetic progressions from $A(h,W)$.
 \end{proof}
 
 \begin{corollary} \label{cor:S(h,K)>0}
 Let $K \subseteq W$ such that $|K| > \frac{2m+1}{2m+2}N$. For all $h$ such that $1\leq h \leq\frac{N}{2m}$ the set $K$ contains at least one arithmetic progression $n, n+h, \dots, n+mh$ of length $m+1$. That is, $|A(h,K)| > 0$.
 \end{corollary}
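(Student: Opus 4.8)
The plan is to derive this as an immediate consequence of Lemma~\ref{lem:S(h,K)size}. That lemma supplies the lower bound $\sabs{A(h,K)} \geq N - mh - (N - \sabs{K})(m+1)$, so it suffices to show that, under the stated hypotheses on $h$ and $\sabs{K}$, each of the two subtracted terms is strictly smaller than $N/2$. Their sum is then strictly less than $N$, the displayed lower bound is strictly positive, and — being a nonnegative integer — $\sabs{A(h,K)}$ is therefore at least $1$.

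Carrying this out, for the first term the hypothesis $1 \leq h \leq \tfrac{N}{2m}$ gives immediately $mh \leq \tfrac{N}{2}$. For the second term, the hypothesis $\sabs{K} > \tfrac{2m+1}{2m+2}N$ rearranges to $N - \sabs{K} < \tfrac{N}{2m+2}$, and multiplying by $m+1$ yields $(N-\sabs{K})(m+1) < \tfrac{(m+1)N}{2m+2} = \tfrac{N}{2}$. Combining the two bounds, $\sabs{A(h,K)} \geq N - mh - (N-\sabs{K})(m+1) > N - \tfrac{N}{2} - \tfrac{N}{2} = 0$, so $K$ contains at least one arithmetic progression $n, n+h, \dots, n+mh$ of length $m+1$.

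I do not anticipate any genuine obstacle here; the argument is pure bookkeeping built on the already-established Lemmas~\ref{lem:S(h,G/r)size} and~\ref{lem:S(h,K)size}. The only point worth care is that the hypothesis on $\sabs{K}$ is a strict inequality — this is exactly what keeps the final bound strictly positive, since under the weaker assumption $\sabs{K} \geq \tfrac{2m+1}{2m+2}N$ one would only obtain $\sabs{A(h,K)} \geq 0$. The constants $\tfrac{2m+1}{2m+2}$ and $\tfrac{N}{2m}$ in the statement are precisely calibrated so that each of the two subtracted terms contributes at most half of $N$, so there is no slack to exploit.
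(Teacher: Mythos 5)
Your proof is correct and is exactly the paper's argument: substitute the hypotheses on $h$ and $\sabs{K}$ into the bound of Lemma~\ref{lem:S(h,K)size} and observe that each subtracted term is at most (respectively strictly less than) $N/2$, so the lower bound is strictly positive. The paper states this in one line without spelling out the arithmetic; your write-up simply makes that bookkeeping explicit.
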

 \begin{proof}
 By substituting the bounds $|K| > \frac{2m+1}{2m+2}N$ and $h \leq\frac{N}{2m}$ into the inequality from Lemma~\ref{lem:S(h,K)size} we immediately obtain $|A(h,K)| > 0$.
 \end{proof}

The next result we require comes from the calculus of finite differences. For any function $d(n)$ mapping $\reals$ to $\reals$, let 
\[
\Delta_h d(n) = d(n+h) - d(n)
\] 
denote the first difference with interval $h$, and let
% \begin{align*}
% \Delta_h^2 d(n) = \Delta_h d(n+h) - \Delta_h d(n) = d(n+2h) - 2d(n+h) + d(n)
% \end{align*}
% denote the second difference with interval $h$ and similarly let 
\begin{equation}\label{eq:mthdiffformula}
\Delta_h^r d(n) = \Delta_h^{r-1} d(n+h) - \Delta_h^{r-1} d(n) = \sum_{k=0}^{r}\binom{r}{k}(-1)^{r-k}d(n+kh)
\end{equation}
denote the $r$th difference with interval $h$. Since $\sum_{k=0}^{r}\binom{r}{k} = 2^r$ it follows that $\Delta_h^r d(n)$ can be represented by adding and subtracting the 
\[
d(n), \,\, d(n+h), \,\, \dots, \,\, d(n+kh)
\] 
precisely $2^r$ times.

The operator $\Delta_h$ has special properties when applied to polynomials. If $d(n) = a_r n^r + \dots + a_0$ is a polynomial of order $r$ then
 \begin{equation} \label{eq:mfinitediffpoly}
 \Delta_h^r d(n) = h^r r! a_r. 
 \end{equation}
So, the $r$th difference of the polynomial is a constant depending on $h$, $r$ and the $r$th coefficient $a_r$~\cite[page 51]{Jordan_Calculus_of_finite_difference_1965}.  We can now continue the proof of strong consistency.  The next lemma is a key result.

\begin{lemma}\label{lem:moran2}
Suppose $\lambdabf_1, \lambdabf_2,\dots$ is a sequence of vectors from $B$ with $V_N(\lambdabf_N) - \sigma^2\rightarrow 0$ as $N\rightarrow\infty$. Then the elements $\lambda_{0,N}, \dots \lambda_{m,N}$ of $\lambdabf_N$ satisfy $N^k\lambda_{k, N}\rightarrow0$ as $N\rightarrow\infty$.
\end{lemma}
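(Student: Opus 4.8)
The plan is to use the hypothesis $V_N(\lambdabf_N)\to\sigma^2$ to show that, for every fixed $\epsilon>0$, the wrapped polynomial $\fracpart{\lambda_{0,N}+\lambda_{1,N}n+\dots+\lambda_{m,N}n^m}$ is within $\epsilon$ of an integer for all but a vanishing fraction of $n\in\{1,\dots,N\}$, and then to combine Corollary~\ref{cor:S(h,K)>0} with the finite-difference identity \eqref{eq:mfinitediffpoly} to turn this ``almost-everywhere smallness'' into decay of the individual rescaled coefficients $N^k\lambda_{k,N}$. To begin, write $g(w)=\expect\fracpart{\Phi_1+w}^2-\sigma^2$ for $w\in\reals$. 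Since $\fracpart{x}^2$ is the squared distance from $x$ to $\ints$, it is continuous and bounded, so dominated convergence makes $g$ continuous; it is $1$-periodic, nonnegative, and, because $\Phi_1$ has zero intrinsic mean, it vanishes precisely on $\ints$ (this is \eqref{eq:Efracpartmined}--\eqref{eq:Efracpartphi}). Hence for each $\epsilon\in(0,\tfrac12)$ the number $\delta(\epsilon)=\min_{\epsilon\le w\le 1-\epsilon}g(w)$ is strictly positive, and $g(w)\ge\delta(\epsilon)$ whenever $\abs{\fracpart{w}}\ge\epsilon$. With $z_N(n)=\sum_{k=0}^m\lambda_{k,N}n^k$ and $K_N(\epsilon)=\{1\le n\le N:\abs{\fracpart{z_N(n)}}<\epsilon\}$, the identity $V_N(\lambdabf_N)-\sigma^2=\tfrac1N\sum_{n=1}^N g(z_N(n))$ gives $V_N(\lambdabf_N)-\sigma^2\ge\tfrac1N\big(N-\abs{K_N(\epsilon)}\big)\delta(\epsilon)$, so $V_N(\lambdabf_N)\to\sigma^2$ forces $\abs{K_N(\epsilon)}>\tfrac{2m+1}{2m+2}N$ for all sufficiently large $N$, for every $\epsilon>0$.

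Next I would prove $N^k\lambda_{k,N}\to0$ by downward induction on $k=m,m-1,\dots,0$, assuming $N^j\lambda_{j,N}\to0$ for $j>k$ (an empty hypothesis when $k=m$). Let $z_N^{(k)}(n)=\sum_{j=0}^k\lambda_{j,N}n^j$ be the degree-$k$ truncation of $z_N$. Since $\abs{n}\le N$ on $\{1,\dots,N\}$ we have $\sup_{1\le n\le N}\abs{z_N(n)-z_N^{(k)}(n)}\le\sum_{j=k+1}^m\abs{N^j\lambda_{j,N}}=:\gamma_N\to0$, and using $\abs{\fracpart{a}}\le\abs{\fracpart{b}}+\abs{a-b}$ the good set $K_N^{(k)}(\epsilon)=\{1\le n\le N:\abs{\fracpart{z_N^{(k)}(n)}}<\epsilon\}$ contains $K_N(\epsilon-\gamma_N)\supseteq K_N(\epsilon/2)$ once $\gamma_N<\epsilon/2$, hence still has more than $\tfrac{2m+1}{2m+2}N$ elements for large $N$. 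By Corollary~\ref{cor:S(h,K)>0}, for every $h$ with $1\le h\le N/(2m)$ there is an arithmetic progression $n,n+h,\dots,n+kh$ inside $K_N^{(k)}(\epsilon)$; taking $a_i\in\ints$ a nearest integer to $z_N^{(k)}(n+ih)$ and applying the $k$th difference via \eqref{eq:mthdiffformula} and \eqref{eq:mfinitediffpoly},
\[
\big\vert\, h^k k!\,\lambda_{k,N}-\textstyle\sum_{i=0}^k\binom{k}{i}(-1)^{k-i}a_i\,\big\vert\le\sum_{i=0}^k\binom{k}{i}\epsilon=2^k\epsilon ,
\]
and, the subtracted quantity being an integer, this says $\abs{\fracpart{h^k k!\,\lambda_{k,N}}}\le 2^k\epsilon$ for all $h$ with $1\le h\le N/(2m)$.

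To extract the rate, fix $\epsilon<2^{-2k-1}$, put $H=\floor{N/(2m)}$ and $\alpha=k!\,\lambda_{k,N}$; by \eqref{eq:identifiability} $\abs{\alpha}\le\tfrac12$, and in fact $\abs{\alpha}<\tfrac12$ since $\abs{\alpha}=\tfrac12$ would violate the $h=1$ case of the bound just obtained. Let $h^*$ be the largest $h\in\{1,\dots,H\}$ with $h^k\abs{\alpha}<\tfrac12$, which exists because $h=1$ qualifies. For $h\le h^*$ one has $\abs{\fracpart{h^k\alpha}}=h^k\abs{\alpha}$, so the bound gives $h^k\abs{\alpha}<2^k\epsilon$ and in particular $\abs{\alpha}<2^k\epsilon/(h^*)^k$. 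If $h^*<H$ then $(h^*+1)^k\abs{\alpha}\ge\tfrac12$ by maximality, whereas $(h^*+1)^k\abs{\alpha}<2^k\epsilon\,(1+1/h^*)^k\le 2^{2k}\epsilon\le\tfrac12$, a contradiction; hence $h^*=H$ and $H^k\abs{\alpha}<2^k\epsilon$. Since $N/H\le 4m$ for large $N$, this gives $\abs{N^k\lambda_{k,N}}=\tfrac1{k!}(N/H)^k H^k\abs{\alpha}\le\tfrac{(8m)^k}{k!}\,\epsilon$, and letting $\epsilon\downarrow0$ yields $N^k\lambda_{k,N}\to0$, which closes the induction.

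The main obstacle is precisely this last passage. The uniform law of large numbers \eqref{eq:SNVNunifmlln} together with Lemma~\ref{lem:ES_Nminimisedzero} already gives $\lambda_{k,N}\to0$ with little effort, but upgrading this to decay at the polynomial rate $N^{-k}$ is the delicate point, and it is exactly here that the arithmetic-progression counting of Corollary~\ref{cor:S(h,K)>0}---guaranteeing that long progressions survive in the good set---together with the finite-difference identity \eqref{eq:mfinitediffpoly} does the work; the rate itself then drops out of the short extremal argument around $h^*$ rather than from any iterative bootstrapping. A secondary technical point to handle carefully is the continuity of $w\mapsto\expect\fracpart{\Phi_1+w}^2$, which is what makes $\delta(\epsilon)>0$.
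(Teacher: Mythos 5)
Your proof is correct, and it follows the paper's overall strategy: the same function $g$, the same counting argument showing the ``good set'' exceeds $\tfrac{2m+1}{2m+2}N$, then Corollary~\ref{cor:S(h,K)>0} plus the finite-difference identity~\eqref{eq:mfinitediffpoly} to control $\fracpart{h^k k!\lambda_{k,N}}$ uniformly over $1\leq h\leq N/(2m)$, and downward induction on $k$. You diverge in two places, both to your advantage. First, in the induction step you truncate $z_N$ to degree $k$ \emph{before} differencing (absorbing the higher-order tail into the threshold $\epsilon$ via the Lipschitz property of $x\mapsto\sabs{\fracpart{x}}$), so the $k$th difference is exactly $h^k k!\lambda_{k,N}$ with no perturbation term; the paper instead differences the full polynomial and must then strip off the perturbation with Lemma~\ref{lem:fracpartinternalsumlessdelta}. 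You also bound the difference directly against the integer $\sum_i\binom{k}{i}(-1)^{k-i}a_i$, which makes Lemma~\ref{lem:fracpartsumanddelta} unnecessary. Second, to convert the uniform-in-$h$ bound into the rate $N^{-k}$, you replace the paper's iterative doubling $h=1,2,4,\dots,2^u$ with a one-shot extremal argument at the largest $h^*$ for which $\left(h^*\right)^k\sabs{\alpha}<\tfrac12$; the contradiction $(h^*+1)^k\sabs{\alpha}<2^{2k}\epsilon\leq\tfrac12$ forces $h^*=\floor{N/(2m)}$ and yields the same conclusion with a comparable constant. The two executions are logically equivalent, but yours dispenses with three auxiliary lemmas and is shorter; the paper's doubling argument is perhaps more transparent about \emph{why} the admissible range of $h$ can be extended step by step. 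The only loose end (shared with the paper) is the degenerate case $m=0$, where $N/(2m)$ is undefined but the conclusion is immediate without any of this machinery.
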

\begin{proof}
Define the function
\begin{equation}\label{eq:gz}
g(z) = \expect\fracpart{\Phi_1 + z}^2 - \sigma^2
\end{equation}
which is continuous in $z$. Because of~\eqref{eq:Efracpartmined} and~\eqref{eq:Efracpartphi}, $g(z) \geq 0$ with equality only at $z = 0$ for $z \in [-\nicefrac{1}{2}, \nicefrac{1}{2})$. Now
\[
V_N(\lambdabf_N) - \sigma^2 = \frac{1}{N}\sum_{n=1}^{N} g\left( \fracpart{ \sum_{k=0}^{m}{n^k \lambda_{k,N}} } \right) \rightarrow 0
\]
as $N \rightarrow \infty$. Let
\[
z_N(n) = \lambda_{0,N} + \lambda_{1,N} n + \lambda_{2,N} n^2 + \dots + \lambda_{m,N} n^m
\]
so that $V_N(\lambdabf_N) - \sigma^2 = \frac{1}{N}\sum_{n=1}^{N} g\left( \fracpart{z_N(n)} \right) \rightarrow 0$ as $N \rightarrow \infty$.
%At this stage we can \emph{see} the proof. Note that, due to the definition of the identifiable region, $B$, the only circular polynomial $\phi$ with values that are all zeros (and hence sends the sum above to zero) is the circular polynomial whose coefficients are all zeros.  Obviously, this observation does not constitute a proof, and in particular gives us little indication as to the order of convergence of the coefficients as required by the theorem. To make the proof 
Choose constants 
\[
c = \frac{2m+1}{2m+2} \qquad \text{and} \qquad 0 < \delta < \frac{1}{2^{2m+1}}
\]
and define the set $K_{N}=\left\{  n\leq N \mid \sabs{\fracpart{z_N(n)}} < \delta \right\}$.  There exists $N_0$ such that for all $N > N_0$ the number of elements in $K_N$ is at least $cN$.  Too see this, suppose that $\sabs{K_N} < cN$, and let $\gamma$ be the minimum value of $g$ over $[-\nicefrac{1}{2},-\delta] \cup [\delta, \nicefrac{1}{2})$. Because $g(0) = 0$ is the unique minimiser of $g$, then $\gamma$ is strictly greater than $0$ and
\begin{align*}
V_N(\lambdabf_N) - \sigma^2 = \frac{1}{N}\sum_{n=1}^{N} g\left( \fracpart{z_N(n)} \right) \geq \frac{1}{N}\sum_{n\in K_N} \gamma = (1-c)\gamma,
\end{align*}
violating that $V_N(\lambdabf_N) - \sigma^2$ converges to zero as $N \rightarrow \infty$.  We will assume $N > N_0$ in what follows.

From Corollary \ref{cor:S(h,K)>0} it follows that for all $h$ satisfying $1\leq h \leq\frac{N}{2m}$ the set $A(h,K_N)$ contains at least one element, that is, there exists $n' \in A(h,K_N)$ such that all the elements from the arithmetic progression $n', n'+h, \dots, n' + mh$ are in $K_N$ and therefore 
\[
\sabs{\fracpart{z_N(n')}},\,\, \sabs{\fracpart{z_N(n'+h)}}, \,\, \dots, \,\, \sabs{\fracpart{z_N(n'+mh)}} 
\] 
are all less than $\delta$.  Because the $m$th difference is a linear combination of $2^{m}$ elements (see~\eqref{eq:mthdiffformula}) from 
\[
\fracpart{z_N(n')}, \,\, \fracpart{z_N(n'+h)}, \,\, \dots, \,\, \fracpart{z_N(n'+mh)}
\]
all with magnitude less than $\delta$ we obtain, from Lemma~\ref{lem:fracpartsumanddelta},
\begin{equation}\label{eq:Deltazfracbound}
|\fracpart{\Delta_h^m z_N(n')}| \leq |\Delta_h^m \fracpart{ z_N(n')}| < 2^{m}\delta.
\end{equation}
From \eqref{eq:mfinitediffpoly} it follows that the left hand side is equal to a constant involving $h$, $m$ and $\lambda_{m,N}$ giving the bound
\begin{equation}\label{eq:startiterativearg}
|\fracpart{ h^m m! \lambda_{m,N} }|  = |\fracpart{   \Delta_h^m z_N(n') }| < 2^m\delta
\end{equation}
for all $h$ satisfying $1\leq h \leq\frac{N}{2m}$. Setting $h = 1$ and recalling from~\eqref{eq:identifiability} that $\lambda_{m,N} \in [-\tfrac{0.5}{m!}, \tfrac{0.5}{m!})$, we have
 \[
 |\fracpart{ m! \lambda_{m,N} }| = | m! \lambda_{m,N} |< 2^m\delta.
 \]
Now, because we chose $\delta < \tfrac{1}{2^{2m}}$ it follows that 
\[
| \lambda_{m,N} |< \frac{2^m}{m!}\delta < \frac{1}{m! 2^{m+1}}.
\]
So, when $h = 2$, 
\[
|\fracpart{ 2^m m! \lambda_{m,N} }| = | 2^m m! \lambda_{m,N} |< 2^m\delta
\]
because $2^m m! \lambda_{m,N} \in [-0.5, 0.5)$. Therefore
\[
| \lambda_{m,N} |< \frac{1}{m!}\delta < \frac{1}{m! 2^{2m+1}}.
\]
Now, with $h = 4$, we similarly obtain 
\[
|\fracpart{ 4^m m! \lambda_{m,N} }| = | 4^m m! \lambda_{m,N} |< 2^m\delta
\]
and iterating this process we eventually obtain 
\[
| \lambda_{m,N} | < \frac{2^m}{2^{um} m!}\delta
\]
where $2^u$ is the largest power of 2 less than or equal to $\tfrac{N}{2m}$.  %So $2^{u+1} > \frac{N}{2m}$ and substituting this into the inequality above gives 
%\[
%|\lambda_{m,N}| < \frac{2^m}{ m! \left(\frac{N}{4m}\right)^m}\delta
%\] 
%from which it follows that 
By substituting $2^{u+1} > \frac{N}{2m}$ it follows that
 \begin{equation}\label{eq:enditerativearg}
 N^m|\lambda_{m,N}| < \frac{2^{2m+m}m^m}{m!}\delta
 \end{equation}
for all $N > N_0$.  As $\delta$ is arbitrary, $N^m \lambda_{m,N} \rightarrow 0$ as $N\rightarrow \infty$.

We have now shown that the highest order coefficient $\lambda_{m,N}$ converges as required. The remaining coefficients will be shown to converge by induction.  Assume that $N^k \lambda_{k,N} \rightarrow 0$ for all $k=r+1, r+2, \dots, m$, that is, assume that the $m-r$ highest order coefficients all converge as required. Let
\[
z_{N,r}(n) = \lambda_{0,N} + \lambda_{1,N} n + \lambda_{2,N} n^2 + \dots + \lambda_{r,N} n^r.
\]
Because the $m-r$ highest order coefficients converge we can write $z_N(n) = z_{N,r}(n) + \gamma_N(n)$ where $\sup_{n\in\{1,\dots,N\}}\abs{\gamma_N(n)} \rightarrow 0$ as $N\rightarrow\infty$. Now the bound from \eqref{eq:Deltazfracbound}, but applied using the $r$th difference, gives
 \begin{equation}\label{eq:zrbound}
\left|\fracpart{  \Delta_h^r z_N(n')}\right| = \left|\fracpart{  \Delta_h^r\gamma_N(n') + \Delta_h^r z_r(n') }\right| = |\fracpart{ \epsilon + h^r r! \lambda_{r,N} }| < 2^r\delta,
 \end{equation}
 where
\[
\epsilon = \Delta_h^r \gamma_N(n') \leq 2^r \sup_{n\in\{1,\dots,N\}}\abs{\gamma_N(n)} \rightarrow 0
\] 
as $N\rightarrow\infty$.  Choose $\delta$ and $\epsilon$ such that $2^r\delta < \tfrac{1}{4}$ and $|\epsilon| < \tfrac{1}{4}$.  Then, from \eqref{eq:zrbound} and from Lemma~\ref{lem:fracpartinternalsumlessdelta},
\[
\abs{\fracpart{ h^r r! \lambda_{r,N} }} < 2^r\delta + \abs{\epsilon}
\]
for all $h$ such that $1 \leq h \leq \tfrac{N}{2m}$.  Choosing $2^r\delta + |\epsilon| < \frac{1}{2^{2r+1}}$ and using the same iterative process as for the highest order coefficient $\lambda_{m,N}$  (see~\eqref{eq:startiterativearg}~to~\eqref{eq:enditerativearg}) we find that $N^r \lambda_{r,N} \rightarrow 0$ as $N\rightarrow\infty$.  The proof now follows by induction.
 \end{proof}

% \begin{lemma}\label{sec:proof-strong-cons}
% For any constants $0 \leq c < 1$ and $\delta>0$ there exists an $N_{0}$ such that for all $N > N_0$ the proportion of $\fracpart{z_N(n)}$ with magnitude less than $\delta$ is greater than $c$.  That is, the set
% \[
% K_{N}=\left\{  n\leq N \mid \vert \fracpart{z(n)} \vert < \delta \right\} 
% \]
% has more than $c N$ elements for all $N > N_0$.
% \end{lemma}
% \begin{proof}
% Assume not.  Then for every $N_0$ there exists an $N > N_0$ such that there are more than $(1-c)N$ integers from $1$ to $N$ with $|\fracpart{z(n)}| > \delta$.  Let $\gamma$ be the minimum value of $g$ from~(\ref{eq:gz}) over the interval given by the union $[-\nicefrac{1}{2},-\delta] \cup [\delta, \nicefrac{1}{2})$. Because $g$ is minimised uniquely at $0$ then $\gamma$ is strictly greater than $0$ and the sum
% \[
% \frac{1}{N}\sum_{n=1}^{N} g\left( \fracpart{z(n)} \right) \geq (1-c)\gamma
% \]
% with $(1-c)\gamma$ a positive constant. This violates the fact that $g$ converges to zero as $N \rightarrow \infty$ and the lemma is true by contradiction.
% \end{proof}

 \begin{lemma} \label{lem:fracpartsumanddelta}
 Let $a_1, a_2, \dots, a_r$ be $r$ real numbers such  that $\left|\fracpart{a_n}\right| < \delta$ for all $n = 1,2,\dots,r$.  Then $\left|\fracpart{\sum_{n=1}^r{a_n}}\right| < r\delta.$
 \end{lemma}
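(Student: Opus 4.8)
The plan is to reduce the statement to two elementary facts about the centered fractional part: that $\fracpart{x + k} = \fracpart{x}$ for every integer $k$, and that $\abs{\fracpart{x}} \le \abs{x}$ for every real $x$. The second fact follows directly from the definition $\fracpart{x} = x - \round{x}$: if $\abs{x} < \nicefrac{1}{2}$ then $\round{x} = 0$, so $\fracpart{x} = x$; otherwise $\abs{\fracpart{x}} \le \nicefrac{1}{2} \le \abs{x}$.

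To carry this out, I would first write $a_n = \round{a_n} + \fracpart{a_n}$ for each $n$, so that each $\round{a_n}$ is an integer and
\[
\sum_{n=1}^{r} a_n = \sum_{n=1}^{r}\round{a_n} + \sum_{n=1}^{r}\fracpart{a_n},
\]
with the first term on the right an integer. Integer-shift invariance of $\fracpart{\cdot}$ then gives $\fracpart{\sum_{n=1}^r a_n} = \fracpart{\sum_{n=1}^r \fracpart{a_n}}$, and applying $\abs{\fracpart{x}} \le \abs{x}$ with $x = \sum_{n=1}^r \fracpart{a_n}$, followed by the triangle inequality, yields
\[
\abs{\fracpart{\sum_{n=1}^{r} a_n}} = \abs{\fracpart{\sum_{n=1}^{r} \fracpart{a_n}}} \le \abs{\sum_{n=1}^{r}\fracpart{a_n}} \le \sum_{n=1}^{r}\abs{\fracpart{a_n}} < r\delta,
\]
the final inequality being strict because each of the $r \ge 1$ summands is strictly smaller than $\delta$.

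There is no genuine obstacle here; the only matters needing a little care are the bookkeeping imposed by the nonstandard convention (centered fractional part, half-integers rounded upwards) and making sure the concluding bound is the strict inequality $< r\delta$ rather than merely $\le r\delta$.
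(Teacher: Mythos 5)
Your proof is correct, and it follows the same basic route as the paper's: reduce $\fracpart{\sum_{n} a_n}$ modulo the integers to the sum of the individual fractional parts, then finish with the triangle inequality. The one genuine difference is how the potential wrap-around is handled. The paper splits into cases: when $\delta > \tfrac{1}{2r}$ the claim is trivial because $\babs{\fracpart{\cdot}} \leq \tfrac{1}{2} < r\delta$, and when $\delta \leq \tfrac{1}{2r}$ the sum $\sum_{n}\fracpart{a_n}$ stays inside $[-\tfrac{1}{2},\tfrac{1}{2})$, so the exact identity $\fracpart{\sum_{n} a_n} = \sum_{n}\fracpart{a_n}$ holds. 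You instead invoke the unconditional inequality $\abs{\fracpart{x}} \leq \abs{x}$, which absorbs both cases at once and eliminates the case split. Both arguments are equally elementary; yours is marginally cleaner and slightly more robust (the same one-line chain would survive if the hypothesis were weakened to $\sabs{\sfracpart{a_n}} \leq \delta_n$ with varying bounds). The only point that needs checking under the paper's convention ($\round{\cdot}$ rounds half-integers upwards, so $\fracpart{x}\in[-\tfrac{1}{2},\tfrac{1}{2})$) is that $\abs{\fracpart{x}}\leq\abs{x}$ holds at the boundary, e.g.\ $\fracpart{\tfrac{1}{2}}=-\tfrac{1}{2}$ still satisfies the inequality with equality, and your two-case justification of that fact is sound.
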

 \begin{proof}
 If $\delta > \tfrac{1}{2r}$ the proof is trivial as $\left|\fracpart{\sum_{n=1}^r{a_n}}\right| \leq \tfrac{1}{2}$ for all $a_n \in \reals$.  If $\delta \leq \tfrac{1}{2r}$ then $\fracpart{\sum_{n=1}^r{a_n}} = \sum_{n=1}^r{\fracpart{a_n}}$ and
 \[
 \left\vert\fracpart{\sum_{n=1}^r{a_n}}\right\vert = \left\vert\sum_{n=1}^r{\fracpart{a_n}}\right\vert \leq \sum_{n=1}^r{\left\vert\fracpart{a_n}\right\vert} < r\delta.
 \]
 \end{proof}

\begin{lemma} \label{lem:fracpartinternalsumlessdelta}
Let $\left|\fracpart{a + \epsilon}\right| < \delta$ where $|\epsilon| < \nicefrac{1}{4}$ and $0<\delta<\nicefrac{1}{4}$. Then $\left|\fracpart{a}\right| < \delta + |\epsilon|$.
\end{lemma}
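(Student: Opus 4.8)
The plan is to exploit the fact that the centered fractional part $\fracpart{\cdot}$ is additive modulo the integers and acts as the identity on $[-\nicefrac{1}{2},\nicefrac{1}{2})$, while noting that the hypotheses $\abs{\epsilon}<\nicefrac{1}{4}$ and $0<\delta<\nicefrac{1}{4}$ are exactly what is needed to keep the relevant quantity inside that interval.

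First I would put $b=a+\epsilon$, so that the hypothesis reads $\abs{\fracpart{b}}<\delta$ and the goal is $\abs{\fracpart{b-\epsilon}}<\delta+\abs{\epsilon}$. Writing $b=\round{b}+\fracpart{b}$ and subtracting $\epsilon$ gives $b-\epsilon=\round{b}+\bigl(\fracpart{b}-\epsilon\bigr)$ with $\round{b}\in\ints$. The key step is then the estimate
\[
\babs{\fracpart{b}-\epsilon}\leq\abs{\fracpart{b}}+\abs{\epsilon}<\delta+\abs{\epsilon}<\tfrac{1}{4}+\tfrac{1}{4}=\tfrac{1}{2},
\]
which places $\fracpart{b}-\epsilon$ in the open interval $(-\nicefrac{1}{2},\nicefrac{1}{2})\subseteq[-\nicefrac{1}{2},\nicefrac{1}{2})$. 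Since adding the integer $\round{b}$ does not change the centered fractional part, and since $\fracpart{x}=x$ for every $x\in[-\nicefrac{1}{2},\nicefrac{1}{2})$, it follows that $\fracpart{b-\epsilon}=\fracpart{\fracpart{b}-\epsilon}=\fracpart{b}-\epsilon$. Hence $\abs{\fracpart{a}}=\babs{\fracpart{b}-\epsilon}\leq\abs{\fracpart{b}}+\abs{\epsilon}<\delta+\abs{\epsilon}$, as claimed.

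I do not expect a genuine obstacle: the only point needing care is the bookkeeping that $\fracpart{b}-\epsilon$ cannot leave $[-\nicefrac{1}{2},\nicefrac{1}{2})$, and this is precisely why the constant $\nicefrac{1}{4}$ occurs twice in the statement — relaxing either bound to $\nicefrac{1}{2}$ would allow $\fracpart{b}-\epsilon$ to wrap around and would break the identity $\fracpart{b-\epsilon}=\fracpart{b}-\epsilon$. Together with Lemma~\ref{lem:fracpartsumanddelta}, this is exactly the ingredient invoked in the proof of Lemma~\ref{lem:moran2} to pass from a bound on $\fracpart{\epsilon+h^r r!\,\lambda_{r,N}}$ to one on $\fracpart{h^r r!\,\lambda_{r,N}}$, so no additional strength is required.
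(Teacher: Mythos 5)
Your proof is correct and follows essentially the same route as the paper's: both arguments reduce to observing that $\delta+\abs{\epsilon}<\nicefrac{1}{2}$ keeps the perturbed quantity inside a single fundamental interval around the integer $\round{a+\epsilon}$, so no wrap-around occurs and the fractional part behaves additively. The paper phrases this as a chain of inequalities $n-\tfrac{1}{2}<a<n+\tfrac{1}{2}$ forcing $\fracpart{a}=a-n$, while you phrase it via the identity $\fracpart{\fracpart{b}-\epsilon}=\fracpart{b}-\epsilon$; these are the same argument in different notation.
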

\begin{proof}
By supposition $n - \delta < a + \epsilon < n + \delta$ for some $n \in \ints$.  Since $-\delta - \epsilon > -\tfrac{1}{2}$ and $\delta - \epsilon < \tfrac{1}{2}$, it follows that
\[
n - \tfrac{1}{2} < n - \delta - \epsilon < a < n + \delta - \epsilon < n + \tfrac{1}{2}.
\]
Hence $\fracpart{a} = a - n$ and so
\[
-\delta - \abs{\epsilon} \leq -\delta - \epsilon < \fracpart{a} < \delta - \epsilon \leq \delta + \abs{\epsilon}
\]
and $\abs{\fracpart{a}} \leq \delta + \abs{\epsilon}$.

% The idea behind this proof is to show that under the conditions given $\fracpart{\fracpart{a} + \epsilon}$ does not wrap, i.e. that $\fracpart{a} + \epsilon \in \left[-\nicefrac{1}{2}, \nicefrac{1}{2}\right)$.  To start, assume that $\fracpart{a} + \epsilon \geq \nicefrac{1}{2}$.  Then $\nicefrac{1}{2} \leq \fracpart{a} + \epsilon < \nicefrac{3}{4}$ and
% \[
% -\nicefrac{1}{2} \leq \fracpart{\fracpart{a} + \epsilon} = \fracpart{a + \epsilon} < -\nicefrac{1}{4}
% \] 
% and therefore $|\fracpart{a + \epsilon}| > \nicefrac{1}{4} > \delta$, a contradiction.  Similarly, assume that $\fracpart{a} + \epsilon < -\nicefrac{1}{2}$.  Then $-\nicefrac{1}{2} > \fracpart{a} + \epsilon > -\nicefrac{3}{4}$ and $\nicefrac{1}{2} > \fracpart{a + \epsilon} > \nicefrac{1}{4}$ and therefore $\left|\fracpart{a + \epsilon}\right| > \nicefrac{1}{4} > \delta$, a contradiction.  So, $\fracpart{a} + \epsilon \in \left[-\nicefrac{1}{2}, \nicefrac{1}{2}\right)$ and 
% \[
% |\fracpart{a + \epsilon}| = \left|\fracpart{a} + \epsilon\right| < \delta
% \]
% from which it follows that $\left|\fracpart{a}\right| < \delta + \left|\epsilon\right|$.
\end{proof}

We are now in a position to complete the proof of strong consistency.  Let $A$ be the subset of the sample space on which  $V_N(\widehat{\lambdabf}_N) - \sigma^2 \rightarrow 0$ as $N\rightarrow\infty$.  From Lemma~\ref{lem:ESNconv}, the $\prob\{A\} =1$.  Let $A'$ be the subset of the sample space on which $N^k\widehat{\lambda}_{k,N} \rightarrow 0$ for $k=0,\dots,m$ as $N\rightarrow\infty$.  As a result of Lemma~\ref{lem:moran2}, $A \subseteq A'$, and so $\prob\{A'\} \geq \prob\{A\} = 1$.  Strong consistency follows.

\section{Proof of asymptotic normality}\label{sec:centlimitproof}

Let $\psibf$ be the vector with $k$th component $\psi_k = N^k \lambda_k$, $k=0, \dots, m$ and let 
\[
T_{N}(\psibf) = S_N(\lambdabf) = \frac{1}{N} \sum_{n=1}^{N} \fracpart{ \Phi_n + \sum_{k=0}^m (\tfrac{n}{N})^k \psi_k }^2.
\]
Let $\widehat{\psibf}_N$ be the vector with elements $\widehat{\psi}_{k,N} = N^k \widehat{\lambda}_{k,N}$ so that $\widehat{\psibf}_N$ is the minimiser of $T_N$.  Because each of $N^k \widehat{\lambda}_{k,N}$ converges almost surely to zero as $N \rightarrow \infty$, then $\widehat{\psibf}_{N}$ converges almost surely to $\zerobf$ as $N \rightarrow \infty$.  We want to find the asymptotic distribution of
\[
\sqrt{N}\widehat{\psibf}_{N} = 
\left[
\begin{array}
[c]{c}%
\sqrt{N} \widehat{\psi}_{0,N} \\ \sqrt{N}\widehat{\psi}_{1,N}  \\ \vdots \\ \sqrt{N} \widehat{\psi}_{m,N}
\end{array}
\right]
=
\left[
\begin{array}
[c]{c}%
\sqrt{N} \widehat{\lambda}_{0,N} \\ N\sqrt{N}\widehat{\lambda}_{1,N} \\ \vdots \\ N^m\sqrt{N} \widehat{\lambda}_{m,N}
\end{array}
\right].
\]
The proof is complicated by the fact that $T_N$ is not differentiable everywhere because $\fracpart{x}^2$ is not differentiable when $\fracpart{x} = \tfrac{1}{2}$.  This precludes the use of `standard approaches' to proving asymptotic normality that are based on the mean value theorem~\cite{vonMises_diff_stats_1947,Pollard_new_ways_clts_1986,Pollard_conv_stat_proc_1984,Pollard_asymp_empi_proc_1989}.  However, we show in Lemma~\ref{lem:diffathatpsi} that all the partial derivatives $\frac{\partial T_N}{\partial \psi_\ell}$ for $\ell = 0, \dots, m$ exist, and are equal to zero, at the minimiser $\widehat{\psibf}_N$.  Thus, putting
\begin{equation}\label{eq:Wn}
W_{n} = \round{\Phi_n + \sum_{k=0}^m (\tfrac{n}{N})^k \widehat{\psi}_{k,N}},
\end{equation}
so that 
\[
T_{N}(\psibf) = \frac{1}{N} \sum_{n=1}^{N} \big( \Phi_n + \sum_{k=0}^m (\tfrac{n}{N})^k \psi_k  - W_n \big)^2,
\] 
we have,
\begin{align*}
0  = \frac{\partial T_N}{\partial \psi_\ell}(\widehat{\psibf}_N) = \frac{2}{N}\sum_{n=1}^{N}(\tfrac{n}{N})^\ell\left( \Phi_n - W_{n}+ \sum_{k=0}^{m}(\tfrac{n}{N})^k \widehat{\psi}_{k,N}  \right),
\end{align*}
for each $\ell = 0, \dots, m$.  Now $D_{\ell,N} = K_{\ell,N}$, where $D_{\ell,N} = \frac{1}{\sqrt{N}} \sum_{n=1}^{N}(\tfrac{n}{N})^\ell \Phi_n$, and
\begin{equation}\label{eq:KellN}
K_{\ell,N} = \frac{1}{\sqrt{N}}\sum_{n=1}^{N}(\tfrac{n}{N})^\ell\left(W_{n}- \sum_{k=0}^{m}(\tfrac{n}{N})^k \widehat{\psi}_{k,N}  \right).
\end{equation}
Lemma~\ref{lem:Kconvfhalf} shows that,
\begin{equation}\label{eq:KellNconv}
K_{\ell,N} =  (h - 1) \sqrt{N} \sum_{k=0}^{m}  \widehat{\psi}_{k,N} \big( C_{\ell k} + o_P(1) \big) + o_P(1),
\end{equation}
for all $\ell = 0, \dots m$, where $C_{\ell k} =  \tfrac{1}{\ell + k + 1}$, and $h = f(-\nicefrac{1}{2})$, and $o_P(1)$ denotes a random variable converging in probability to zero as $N\rightarrow\infty$.

It is now convenient to write in vector form.  Let 
\begin{equation}\label{eq:dnandkn}
\kbf_N = 
\left[
\begin{array}{ccc}%
K_{0,N}  & \cdots & K_{m,N}
\end{array}
\right]^\prime
%\qquad \text{and} \qquad 
=
\dbf_N = \left[
\begin{array}{ccc}%
D_{0,N}  & \cdots & D_{m,N}
\end{array}
\right]^\prime.
\end{equation}
From~\eqref{eq:KellNconv},
\[
\dbf_N = \kbf_N = \sqrt{N} (h - 1)(\Cbf + o_P(1)) \widehat{\psibf}_N + o_P(1)
\]
where $o_P(1)$ here means a vector or matrix of the appropriate dimension with every element converging in probability to zero as $N\rightarrow\infty$.  Thus $\sqrt{N}\widehat{\psibf}_N$ has the same asymptotic distribution as $(h - 1)^{-1}\Cbf^{-1} \dbf_N$.  Lemma~\ref{eq:convdn} shows that $\dbf_N$ is asymptotically normally distributed with zero mean and covariance matrix $\sigma^2\Cbf$.  Thus $\sqrt{N}\widehat{\psibf}_N$ is asymptotically normal with zero mean and covariance matrix
\[
\frac{\sigma^2\Cbf^{-1}\Cbf(\Cbf^{-1})^\prime}{(1 - h)^2} = 
\frac{\sigma^2\Cbf^{-1}}{(1 - h)^2}.
\] 
It remains to prove Lemmas~\ref{lem:diffathatpsi},~\ref{lem:Kconvfhalf} and~\ref{eq:convdn}.

\begin{lemma}\label{lem:diffathatpsi}
For all $\ell = 0, \dots, m$ the partial derivatives $\frac{\partial T_N}{\partial \psi_\ell}$ exist, and are equal to zero, at the minimiser $\widehat{\psibf}_N$.  That is $\frac{\partial T_N}{\partial \psi_\ell}(\widehat{\psibf}_N) = 0$ for each $\ell = 0, \dots, m$.
\end{lemma}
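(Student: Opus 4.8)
The plan is to reduce the lemma to the single claim that, at the minimiser $\widehat{\psibf}_N$, none of the $N$ quantities $\Phi_n + \sum_{k=0}^m(\tfrac{n}{N})^k\widehat{\psi}_{k,N}$ is congruent to $\tfrac{1}{2}$ modulo $1$. Granting this, $T_N$ is a finite sum of terms $\tfrac{1}{N}\fracpart{\Phi_n + \sum_{k=0}^m(\tfrac{n}{N})^k\psi_k}^2$, and since $\fracpart{x}^2$ is continuously differentiable everywhere except at the half-integers (with derivative $2\fracpart{x}$ off that set), continuity keeps every one of these arguments away from $\tfrac{1}{2}+\ints$ throughout a neighbourhood of $\widehat{\psibf}_N$; hence $T_N$ is $C^1$ there, all its partial derivatives exist at $\widehat{\psibf}_N$, and, $\widehat{\psibf}_N$ being a minimiser, they all vanish.

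To prove the claim I would argue by contradiction via one-sided directional derivatives. First I would record the local behaviour of $g(x) = \fracpart{x}^2$ at a half-integer $x_0 \in \tfrac{1}{2}+\ints$: $g$ is continuous there with value $\tfrac{1}{4}$, this is a local maximum, and the one-sided derivatives are $g'(x_0^-) = +1$ and $g'(x_0^+) = -1$, so $g$ has a downward-pointing corner at $x_0$. Now suppose the set $S = \{\, n \le N : \Phi_n + \sum_{k=0}^m(\tfrac{n}{N})^k\widehat{\psi}_{k,N} \in \tfrac{1}{2}+\ints \,\}$ is non-empty. For a direction $\mathbf{v} = (v_0,\dots,v_m)^\prime$ write $u_n(\mathbf{v}) = \sum_{k=0}^m(\tfrac{n}{N})^k v_k$. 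Because $\psibf \mapsto \sum_{k=0}^m(\tfrac{n}{N})^k\psi_k$ is affine, the chain rule together with the values of $g'$ above shows that the one-sided directional derivative of the $n$th summand of $T_N$ at $\widehat{\psibf}_N$ in direction $\mathbf{v}$ equals $\tfrac{2}{N}\fracpart{\Phi_n + \sum_{k=0}^m(\tfrac{n}{N})^k\widehat{\psi}_{k,N}}\,u_n(\mathbf{v})$ when $n \notin S$ (an ordinary derivative, linear in $\mathbf{v}$), and equals $-\tfrac{1}{N}\abs{u_n(\mathbf{v})}$ when $n \in S$. Summing over the finitely many $n$, the one-sided directional derivative of $T_N$ at $\widehat{\psibf}_N$ in direction $\mathbf{v}$ is $D^+(\mathbf{v}) = L(\mathbf{v}) - \tfrac{1}{N}\sum_{n\in S}\abs{u_n(\mathbf{v})}$, with $L$ linear in $\mathbf{v}$.

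Finally, I would invoke minimality of $\widehat{\psibf}_N$, which forces $D^+(\mathbf{v}) \ge 0$ for every $\mathbf{v}$. Adding this inequality for $\mathbf{v}$ and for $-\mathbf{v}$ cancels $L$ and leaves $-\tfrac{2}{N}\sum_{n\in S}\abs{u_n(\mathbf{v})} \ge 0$, so $u_n(\mathbf{v}) = 0$ for every $n \in S$ and every $\mathbf{v}$; but $\mathbf{v} = (1,0,\dots,0)^\prime$ gives $u_n(\mathbf{v}) = 1 \ne 0$, a contradiction. Hence $S = \emptyset$ and the lemma follows as above. The only delicate point is getting the one-sided directional derivatives at the corners of $\fracpart{\cdot}^2$ right — in particular that the corner points upward (a local maximum), which is precisely why a minimiser of $T_N$ cannot leave any argument there; the remainder is bookkeeping.
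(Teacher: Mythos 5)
Your proof is correct, but it reaches the key fact --- that no argument $\Phi_n + \sum_{k}(\nicefrac{n}{N})^k\widehat{\psi}_{k,N}$ sits at a half-integer at the minimiser --- by a genuinely different route from the paper. The paper delegates this to Lemma~\ref{lem:boundonhatlambda}, which is proved by an explicit discrete exchange argument: assuming some $\fracpart{B_k+\widehat{\psi}_{0,N}}$ exceeds $\tfrac{1}{2}-\tfrac{1}{2N}$, one shifts the constant coefficient by $\tfrac{1}{N}$ and the corresponding rounding integer $W_k$ by $1$, and verifies algebraically that the objective strictly decreases. That yields the stronger quantitative margin $\sabs{\fracpart{\cdot}}\le\tfrac12-\tfrac{1}{2N}$ (though only the qualitative statement is used downstream). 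You instead exploit the local geometry of $\fracpart{x}^2$ at its corners: the corner is a local \emph{maximum} with one-sided slopes $\pm 1$, so each offending index contributes $-\tfrac{1}{N}\sabs{u_n(\mathbf{v})}$ to the one-sided directional derivative, and the symmetrisation over $\mathbf{v}$ and $-\mathbf{v}$ kills the linear part and forces $u_n(\mathbf{v})=0$, which fails for $\mathbf{v}=(1,0,\dots,0)^\prime$. Your computation of the one-sided derivatives is consistent with the paper's convention $\fracpart{x}=x-\round{x}$ (half-integers rounded up), and the finite-sum and minimality steps are sound; the one point worth making explicit in either approach is that $T_N$ is periodic modulo the lattice, so the constrained minimiser over the identifiable region is an unconstrained local minimiser and one-sided derivatives in \emph{all} directions are indeed nonnegative. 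Your argument is more conceptual and shorter; the paper's buys an explicit $O(\nicefrac{1}{N})$ separation from the corner that your argument does not provide.
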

\begin{proof}
The function $\fracpart{x}^2$ is differentiable everywhere except if $\fracpart{x} \neq -\frac{1}{2}$, and so $T_N$ is differentiable with respect to $\psibf$ at $\widehat{\psibf}_N$ if $\sfracpart{\Phi_n + \sum_{k=0}^m (\tfrac{n}{N})^k \widehat{\psi}_{k,N}} \neq -\tfrac{1}{2}$ for all $n = 1, \dots, N$.
This is proved in Lemma~\ref{lem:boundonhatlambda}.  So the partial derivatives $\frac{\partial T_N}{\partial \psi_\ell}$ exist for all $\ell = 0, \dots, m$ at $\widehat{\psibf}_N$.  That each of the partial derivatives is equal to zero at $\widehat{\psibf}_N$ follows since $\widehat{\psibf}_N$ is a minimiser of $T_N$.
\end{proof}

\begin{lemma}\label{lem:boundonhatlambda} $\sabs{\sfracpart{\Phi_n + \sum_{k=0}^m (\nicefrac{n}{N})^k \widehat{\psi}_{k,N}}} \leq \frac{1}{2} - \frac{1}{2N}$ for all $n = 1, \dots, N$.
\end{lemma}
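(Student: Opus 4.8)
The plan is to turn the optimality of $\widehat{\lambdabf}_N$ into a least-squares (orthogonality) statement and then perturb cleverly. Recall that $\widehat{\lambdabf}_N$ minimises $S_N$ over $B$; since the defining formula $\lambdabf\mapsto\frac1N\sum_{n=1}^N\sfracpart{\Phi_n+\sum_{k=0}^m\lambda_kn^k}^2$ extends to an $L_{m+1}$-periodic function on $\reals^{m+1}$ and $B$ is a fundamental domain, $\widehat{\lambdabf}_N$ in fact minimises this function globally on $\reals^{m+1}$. Write $\widehat z(n)=\sum_{k=0}^m\widehat\lambda_{k,N}n^k=\sum_{k=0}^m(\nicefrac nN)^k\widehat\psi_{k,N}$, put $a_n=\Phi_n+\widehat z(n)$, so that $W_n=\round{a_n}$ and the quantity to be bounded is precisely $r_n:=\sfracpart{a_n}=a_n-W_n$. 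Since $\sfracpart{x}^2\le(x-j)^2$ for every integer $j$, global minimality gives, for every polynomial $z$ of degree at most $m$ and all integers $j_1,\dots,j_N$,
\[
\sum_{n=1}^N r_n^2\;\le\;\sum_{n=1}^N\bigl(\Phi_n+\widehat z(n)+z(n)-j_n\bigr)^2 .
\]

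First I would use this with $z=0$ and $j_n=W_n$ to identify $\widehat z$ as an ordinary least-squares fit of the ``data'' $W_n-\Phi_n$ by a degree-$\le m$ polynomial: inserting also the genuine OLS fit $z^\ast$ into the displayed inequality sandwiches $\sum_n(\widehat z(n)-(W_n-\Phi_n))^2=\sum_n r_n^2$ between the OLS minimum and itself, so equality holds throughout. Consequently the residual vector $(r_n)$ is orthogonal to the fitting subspace, i.e.\ $\sum_{n=1}^N n^\ell r_n=0$ for $\ell=0,\dots,m$, and hence $\sum_{n=1}^N r_n\,q(n)=0$ for every polynomial $q$ of degree at most $m$. (Note this does not use Lemma~\ref{lem:diffathatpsi}, which would be circular.)

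Next comes the key perturbation. Fix $n_0$ and let $\phi$ be the degree-$\le m$ polynomial whose values on $\{1,\dots,N\}$ are the least-squares fit of the indicator vector $e_{n_0}$; thus $\phi(n)=h_{n_0 n}$ where $H=(h_{ij})$ is the hat matrix of the design $[\,n^k\,]_{1\le n\le N,\,0\le k\le m}$, and in particular $\phi(n_0)=h_{n_0 n_0}$ and $\sum_n\phi(n)^2=\langle\phi,e_{n_0}\rangle=\phi(n_0)$. Applying the displayed inequality with $z=\phi$ and $j_n=W_n+\delta_{n,n_0}$ and expanding $\sum_n(r_n+\phi(n)-\delta_{n,n_0})^2$ using $\sum_n r_n\phi(n)=0$ and $\sum_n\phi(n)^2=\phi(n_0)$, the $\sum_n r_n^2$ terms cancel and one is left with $0\le-2r_{n_0}+1-\phi(n_0)$, i.e.\ $r_{n_0}\le\tfrac12\bigl(1-h_{n_0 n_0}\bigr)$; taking instead $z=-\phi$ and $j_n=W_n-\delta_{n,n_0}$ gives $-r_{n_0}\le\tfrac12\bigl(1-h_{n_0 n_0}\bigr)$. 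Finally, because the design matrix contains the constant column, $H=\tfrac1N\mathbf 1\mathbf 1^\prime+H'$ with $H'$ a positive semidefinite orthogonal projection, so $h_{n_0 n_0}\ge\tfrac1N$; combining, $\babs{r_{n_0}}\le\tfrac12\bigl(1-\tfrac1N\bigr)=\tfrac12-\tfrac1{2N}$, and since $n_0$ was arbitrary the lemma follows (for $N\ge m+1$, so that the design has full column rank, which is harmless here).

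The main obstacle is conceptual rather than computational: recognising that the right perturbation is by the ``hat polynomial'' $\phi$ (so that the leverage $h_{n_0 n_0}$ appears) rather than a crude uniform shift, and then invoking the bound $h_{n_0 n_0}\ge 1/N$ that holds precisely because the polynomial model includes an intercept. Once those two ideas are in place, the rest is the standard bookkeeping of orthogonal projections — the orthogonality $\sum_n r_n\phi(n)=0$ and the identity $\sum_n\phi(n)^2=\phi(n_0)$ — together with the trivial inequality $\sfracpart{x}^2\le(x-j)^2$.
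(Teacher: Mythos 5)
Your proof is correct. Step by step: the periodic extension argument legitimately upgrades minimality over $B$ to global minimality; the majorisation $\sfracpart{x}^2\le(x-j)^2$ then yields your master inequality; taking $j_n=W_n$ and varying $z$ over the polynomial subspace gives $\sum_n n^\ell r_n=0$ for $\ell=0,\dots,m$; and the perturbation by the hat polynomial together with $\sum_n\phi(n)^2=\phi(n_0)=h_{n_0n_0}\ge 1/N$ delivers $\sabs{r_{n_0}}\le\tfrac12(1-h_{n_0n_0})\le\tfrac12-\tfrac1{2N}$.

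The route is genuinely different in execution from the paper's, though it shares the same germ. The paper establishes only the zeroth-order orthogonality $\sum_n(B_n+\widehat{\psi}_{0,N}-W_n)=0$ (by observing that $\widehat{\psi}_{0,N}$ minimises the quadratic majorant in the constant direction) and then argues by contradiction: if some residual exceeded the bound, shifting $\psi_0$ by $\pm 1/N$ and the corresponding $W_k$ by $\pm 1$ would strictly decrease $T_N$. That is exactly your perturbation specialised to $\phi=\tfrac1N\mathbf{1}$, for which $\sum_n\phi(n)^2=1/N$ and the identical bound $\tfrac12(1-\tfrac1N)$ drops out. What your version buys: (i) the full orthogonality relations are obtained directly from the quadratic majorisation with no appeal to differentiability, so — as you correctly note — there is no circularity with Lemma~\ref{lem:diffathatpsi}, and indeed your derivation could replace that lemma outright; (ii) a sharper, leverage-dependent bound $\sabs{r_{n_0}}\le\tfrac12(1-h_{n_0n_0})$ that the paper's single perturbation direction cannot see (the stated lemma only needs the crude floor $h_{n_0n_0}\ge 1/N$ coming from the intercept column). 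What the paper's version buys is economy: one perturbation direction, no hat matrix, and no need to invoke periodicity and global minimality. Your closing caveat about full column rank is indeed harmless, since the projection onto the column space, and the inequality $h_{n_0n_0}\ge 1/N$, make sense regardless.
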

\begin{proof}
To simplify our notation let $B_n = \Phi_n + \sum_{k=1}^m (\nicefrac{n}{N})^k \widehat{\psi}_{k,N}$ so that we now require to prove $\abs{\fracpart{B_n + \widehat{\psi}_{0,N}}} \leq \frac{1}{2} - \frac{1}{2N}$ for all $n = 1, \dots, N$.  From~\eqref{eq:Wn}, $W_n = \round{B_n + \widehat{\psi}_{0,N}}$, and 
\begin{align*}
T_N(\widehat{\psibf}_N) = \frac{1}{N}\sum_{n=1}^N \fracpart{B_n + \widehat{\psi}_{0,N}}^2= \frac{1}{N}\sum_{n=1}^N (B_n + \widehat{\psi}_{0,N} - W_n)^2.
\end{align*}
Since $\widehat{\psi}_{0,N}$ is the minimiser of the quadratic above,
\begin{equation}\label{eq:lamsum}
\widehat{\psi}_{0,N} = -\frac{1}{N}\sum_{n=1}^N(B_n - W_n).
\end{equation}
The proof now proceeds by contradiction.  Assume that for some $k$,
\begin{equation}\label{eq:philamfraccontra}
\fracpart{B_k + \widehat{\psi}_{0,N}} > \frac{1}{2} - \frac{1}{2N}.
\end{equation}
Let $F_n = W_n$ for all $n \neq k$ and $F_k = W_k + 1$, and let
\[
\phi = -\frac{1}{N}\sum_{n=1}^N(B_n - F_n) = \widehat{\psi}_{0,N} + \frac{1}{N}.
\]
Now,
\begin{align}
(B_k + \phi - F_k)^2 &= (B_k + \phi - W_k - 1)^2 \nonumber \\
&= (B_k + \phi - W_k)^2 - 2(B_k + \phi - W_k) + 1 \nonumber \\
&= (B_k + \phi - W_k)^2 - 2(B_k + \widehat{\psi}_{0,N} - W_k) + 1 - \frac{2}{N} \nonumber \\
&= (B_k + \phi - W_k)^2 - 2\fracpart{B_k + \widehat{\psi}_{0,N}} + 1 - \frac{2}{N} \nonumber \\
&< (B_k + \phi - W_k)^2 - \frac{1}{N}, \label{eq:Bkineq}
\end{align}
where the inequality in the last line follows from~\eqref{eq:philamfraccontra}. Let $\bbf = [ \phi,  \widehat{\psi}_{1,N} , \dots , \widehat{\psi}_{m,N}]$
% \[
% \bbf = \left[
% \begin{array}
% [c]{cccc}%
% \phi & \widehat{\psi}_{1,N} & \cdots & \widehat{\psi}_{m,N}
% \end{array}
% \right]^\prime
% \]
be the vector of length $m+1$ with components $b_0 = \phi$ and $b_\ell = \widehat{\psi}_{\ell,N}$ for $\ell = 1 , \dots m$.  Now,
\[
N T_N(\bbf) = \sum_{n=1}^N \fracpart{B_n + \phi }^2 \leq  \sum_{n=1}^N (B_n + \phi  - F_n)^2, 
\]
and using the inequality from~\eqref{eq:Bkineq},
\begin{align*}
N T_N(\bbf) &< - \frac{1}{N} + \sum_{n=1}^N (B_n + \phi  - W_n)^2 \\
&= - \frac{1}{N} + \sum_{n=1}^N (B_n + \widehat{\psi}_{0,N} + \frac{1}{N}  - W_n)^2 \\
&= \sum_{n=1}^N (B_n + \widehat{\psi}_{0,N}  - W_n)^2 +  \frac{2}{N}\sum_{n=1}^N (B_n + \widehat{\psi}_{0,N}  - W_n)\\
%&= N T_N(\widehat{\psibf}_N) + \frac{2}{N}\sum_{n=1}^N (B_n + \widehat{\psi}_{0,N}  - W_n) \\
&= N T_N(\widehat{\psibf}_N),
\end{align*}
because $\frac{2}{N}\sum_{n=1}^N (B_n + \widehat{\psi}_{0,N}  - W_n) = 0$ as a result of~\eqref{eq:lamsum}.  But, now $T_N(\bbf) < T_N(\widehat{\psibf}_N)$ violating the fact that $\widehat{\psibf}_N$ is a minimiser of $T_N$.  So~\eqref{eq:philamfraccontra} is false by contradiction.

If $\fracpart{B_k + \widehat{\psi}_{0,N}} < -\frac{1}{2} + \frac{1}{2N}$ for some $k$, we set $F_k = W_k - 1$ and using the same procedure as before obtain $T_N(\bbf) < T_N(\widehat{\psibf}_N)$ again.  The proof follows.
\end{proof}

\begin{lemma}\label{lem:Kconvfhalf}
With $K_{\ell,N}$ defined in~\eqref{eq:KellN}, $h = f(-\nicefrac{1}{2})$, and $C_{\ell k} =  \frac{1}{\ell + k + 1}$ we have $K_{\ell,N} = (h - 1) \sqrt{N} \sum_{k=0}^{m}  \widehat{\psi}_{k,N} \big( C_{\ell k} + o_P(1) \big) + o_P(1)$ for all $\ell = 0, \dots, m$.% is an element of the $m+1$ by $m+1$ Hilbert matrix, and $o_P(1)$ denotes a random variable converging in probability to zero as $N\rightarrow\infty$.
\end{lemma}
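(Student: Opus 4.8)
The plan is to start from the definition \eqref{eq:KellN} and split $K_{\ell,N}=M_{\ell,N}-L_{\ell,N}$, where $M_{\ell,N}=\tfrac{1}{\sqrt N}\sum_{n=1}^{N}(\tfrac{n}{N})^{\ell}W_n$ and $L_{\ell,N}=\sum_{k=0}^{m}\widehat{\psi}_{k,N}\,\tfrac{1}{\sqrt N}\sum_{n=1}^{N}(\tfrac{n}{N})^{\ell+k}$, and to dispose of $L_{\ell,N}$ first. Since $\tfrac{1}{N}\sum_{n=1}^{N}(\tfrac{n}{N})^{\ell+k}=\int_{0}^{1}t^{\ell+k}\,dt+O(1/N)=C_{\ell k}+O(1/N)$ by a Riemann-sum estimate, and since each $\widehat{\psi}_{k,N}\rightarrow 0$ almost surely by the strong consistency proved in Section~\ref{sec:strongconstproof}, we get $L_{\ell,N}=\sqrt N\sum_{k=0}^{m}\widehat{\psi}_{k,N}C_{\ell k}+o_P(1)$; this accounts for the ``$-1$'' in the factor $h-1$. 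It then remains to show $M_{\ell,N}=h\sqrt N\sum_{k=0}^{m}\widehat{\psi}_{k,N}\bigl(C_{\ell k}+o_P(1)\bigr)+o_P(1)$.

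For $M_{\ell,N}$ I would freeze the argument of the rounding: for a nonrandom vector $\ubf=[u_0,\dots,u_m]$ put $W_n(\ubf)=\round{\Phi_n+\sum_{k=0}^{m}(\tfrac{n}{N})^{k}u_k}$ and $s_n(\ubf)=\sum_{k=0}^{m}(\tfrac{n}{N})^{k}u_k$, so that $W_n=W_n(\widehat{\psibf}_N)$ and, for $\|\ubf\|$ small, $W_n(\ubf)\in\{-1,0,1\}$ with $W_n(\zerobf)\equiv 0$ because $\Phi_n\in[-\nicefrac{1}{2},\nicefrac{1}{2})$. Lemma~\ref{lem:boundonhatlambda} ensures the summands never land on the non-differentiability set $\fracpart{\cdot}=-\nicefrac{1}{2}$; here, in addition, we need to control the \emph{size} of the rounding corrections. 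One has $W_n(\ubf)=1$ precisely when $\Phi_n\in[\nicefrac{1}{2}-s_n(\ubf),\nicefrac{1}{2})$ and $W_n(\ubf)=-1$ precisely when $\Phi_n\in[-\nicefrac{1}{2},-\nicefrac{1}{2}-s_n(\ubf))$, one of these being empty according to the sign of $s_n(\ubf)$; in either case the event is an arc abutting the circular cut point $\pm\nicefrac{1}{2}$ and of length $|s_n(\ubf)|$. Using that $f(\fracpart{\cdot})$ is continuous at $-\nicefrac{1}{2}$ with value $h$, a first-order expansion gives $\expect W_n(\ubf)=h\,s_n(\ubf)+R_n(\ubf)$ with $|R_n(\ubf)|\le\omega(\|\ubf\|)\,|s_n(\ubf)|$ for some $\omega$ with $\omega(\delta)\downarrow 0$ as $\delta\downarrow 0$. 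Summing the leading term gives $\tfrac{1}{\sqrt N}\sum_{n}(\tfrac{n}{N})^{\ell}h\,s_n(\ubf)=h\sqrt N\sum_{k}u_k\bigl(C_{\ell k}+O(1/N)\bigr)$, while the $R_n$-contribution is at most $\omega(\|\ubf\|)\sqrt N\|\ubf\|_1\max_k(C_{\ell k}+O(1/N))$; evaluating at $\ubf=\widehat{\psibf}_N$ and using $\widehat{\psibf}_N\rightarrow\zerobf$ a.s.\ turns both errors into the admissible form. This leaves $M_{\ell,N}=g_{\ell,N}(\widehat{\psibf}_N)+h\sqrt N\sum_{k}\widehat{\psi}_{k,N}(C_{\ell k}+o_P(1))+o_P(1)$, where $g_{\ell,N}(\ubf)=\tfrac{1}{\sqrt N}\sum_{n=1}^{N}(\tfrac{n}{N})^{\ell}\bigl(W_n(\ubf)-\expect W_n(\ubf)\bigr)$.

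The hard part is to show the fluctuation term is negligible, i.e.\ $g_{\ell,N}(\widehat{\psibf}_N)=o_P(1)$. The difficulty is that $g_{\ell,N}(\cdot)$, although a sum of independent summands for each fixed $\ubf$, must be evaluated at the random, data-dependent $\widehat{\psibf}_N$; a pointwise bound is useless because at a fixed $\ubf\neq\zerobf$ the variance of $g_{\ell,N}(\ubf)$ is of order $\|\ubf\|$, not $o(1)$, and a crude union bound over the $O(N^{m+1})$ cells of the relevant hyperplane arrangement fails because the per-cell concentration rate does not grow with $N$. The plan is a stochastic-equicontinuity-at-$\zerobf$ argument. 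First, $g_{\ell,N}(\zerobf)=0$. Second, the $2N$ affine hyperplanes $\{\ubf:\Phi_n+s_n(\ubf)=\pm\nicefrac{1}{2}\}$ partition $\reals^{m+1}$ into $O(N^{m+1})$ cells on which every $W_n(\cdot)$, hence $g_{\ell,N}(\cdot)$, is constant~\cite{Matousek_lect_disc_geom_2002}, so $\sup_{\|\ubf\|\le\delta}|g_{\ell,N}(\ubf)|$ is an honest measurable maximum. Third, and crucially, the indexing class $\{(t,\varphi)\mapsto\mathbf 1[\varphi+u_0+u_1t+\dots+u_mt^m\ge\nicefrac{1}{2}]:\ubf\in\reals^{m+1}\}$ is a VC class of VC index bounded independently of $N$ (its sets are supergraphs of real polynomials of degree at most $m$), so it has uniform covering numbers that are polynomial in $1/\varepsilon$ and free of $N$. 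Fourth, the arc estimate yields the increment bound $\expect\bigl(W_n(\ubf)-W_n(\ubf')\bigr)^2\le(h+1)|s_n(\ubf)-s_n(\ubf')|$, hence $\var\bigl(g_{\ell,N}(\ubf)-g_{\ell,N}(\ubf')\bigr)\le C\|\ubf-\ubf'\|$, a square-root modulus of continuity in the $L^2$ metric, so the $\delta$-ball about $\zerobf$ has $L^2$-diameter $O(\sqrt\delta)$. Feeding the $N$-free entropy bound and this modulus into a chaining/Dudley entropy-integral maximal inequality for triangular arrays of bounded independent variables~\cite{Pollard_asymp_empi_proc_1989} gives $\expect\sup_{\|\ubf\|\le\delta}|g_{\ell,N}(\ubf)|\le C\sqrt\delta$ for all large $N$. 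Finally, since $\widehat{\psibf}_N\rightarrow\zerobf$ almost surely, for every $\eta,\delta>0$ we have $\limsup_N\prob\{|g_{\ell,N}(\widehat{\psibf}_N)|>\eta\}\le\limsup_N\prob\{\sup_{\|\ubf\|\le\delta}|g_{\ell,N}(\ubf)|>\eta\}\le C\sqrt\delta/\eta$ by Markov, and letting $\delta\downarrow 0$ gives $g_{\ell,N}(\widehat{\psibf}_N)=o_P(1)$. Assembling the three pieces, $K_{\ell,N}=M_{\ell,N}-L_{\ell,N}=(h-1)\sqrt N\sum_{k=0}^{m}\widehat{\psi}_{k,N}C_{\ell k}+o_P(1)$ plus remainders of the form $\sqrt N\sum_k\widehat{\psi}_{k,N}\,o_P(1)$, which is the asserted expression; here $h-1\neq 0$ (as $f(-\nicefrac{1}{2})<1$ under the normality hypothesis) lets one absorb the remainder coefficients into the stated $o_P(1)$.

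The step I expect to be the main obstacle is the one just described: controlling the $\widehat{\psibf}_N$-dependent, piecewise-constant fluctuation $g_{\ell,N}(\widehat{\psibf}_N)$ uniformly over a shrinking neighbourhood of $\zerobf$. Its resolution is exactly where the hyperplane-arrangement cell count is needed to turn the supremum into a manageable maximum and an $N$-free VC/entropy chaining bound is needed to beat that cell count, with the square-root $L^2$ modulus of continuity then driving the supremum over a $\delta$-ball down to $O(\sqrt\delta)$.
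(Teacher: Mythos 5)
Your proposal is correct and follows essentially the same route as the paper: the same decomposition into a centred fluctuation term $g_{\ell,N}(\widehat{\psibf}_N)$ (the paper's $G_{\ell,N}(\widehat{\psibf}_N)$) plus a deterministic bias term handled by the first-order expansion $\expect\round{\Phi_n+x}=x\,(h+o(1))$ and a Riemann sum, with the fluctuation killed by a stochastic-equicontinuity bound $\expect\sup_{\|\psibf\|_\infty<\epsilon}\abs{g_{\ell,N}(\psibf)}=O(\sqrt{\epsilon})$ combined with the almost-sure convergence $\widehat{\psibf}_N\rightarrow\zerobf$. The only cosmetic difference is inside the tightness lemma, where you obtain the $N$-free covering numbers from a VC argument and chain the bounded independent array directly, whereas the paper symmetrises to a conditionally Gaussian process and gets the covering numbers from the hyperplane-cutting theorem; these are interchangeable.
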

\begin{proof}
Care must be taken since $\widehat{\psibf}_N$ depends on the sequence $\{ \Phi_n \}$.  For $n = 1, \dots, N$ and positive $N$, let
\begin{equation}
p_{nN}(\psibf) = \sum_{k=0}^{m}\left( \tfrac{n}{N}\right)^k \psi_k,
\end{equation}
and put $q_{n}(x) = \round{\Phi_n + x}$ and $Q(x) = \expect q_{n}(x) =  \expect q_{1}(x)$.  Let
\begin{equation}\label{eq:GNdef}
G_{\ell,N}(\psibf) = \frac{1}{\sqrt{N}} \sum_{n=1}^{N}\left( \tfrac{n}{N}\right)^\ell  \big(q_n(p_{nN}(\psibf)) - Q(p_{nN}(\psibf)) \big),
\end{equation}
and put
\begin{equation}\label{eq:hatpnN}
\widehat{p}_{nN} = p_{nN}(\widehat{\psibf}_N) = \sum_{k=0}^m \left(\tfrac{n}{N}\right)^k \widehat{\psi}_{k,N}.
\end{equation}
Now $W_n$ from~\eqref{eq:Wn} can be written as $W_n = \round{\Phi_n + \widehat{p}_{nN}} = q_{n}(\widehat{p}_{nN})$ and $K_{\ell,N}$ from~\eqref{eq:KellN} can be written as
\begin{align*}
K_{\ell,N} &= \frac{1}{\sqrt{N}}\sum_{n=1}^{N}(\tfrac{n}{N})^\ell\big( q_{n}(\widehat{p}_{nN}) - \widehat{p}_{nN}  \big) \\
&= \frac{1}{\sqrt{N}}\sum_{n=1}^{N}(\tfrac{n}{N})^\ell\big(q_{n}(\widehat{p}_{nN}) - \widehat{p}_{nN} + Q(\widehat{p}_{nN}) - Q(\widehat{p}_{nN}) \big) \\
&= G_{\ell,N}(\widehat{\psibf}_N) + H_{\ell,N},
\end{align*}
where 
\begin{equation}\label{eq:HellNdef}
H_{\ell,N} = \frac{1}{\sqrt{N}}\sum_{n=1}^{N}(\tfrac{n}{N})^\ell \big( Q(\widehat{p}_{nN}) - \widehat{p}_{nN} \big).
\end{equation}
Lemma~\ref{lem:unifprobG} in the Appendix shows that for any $\delta >0$ and $\nu > 0$ there exists an $\epsilon > 0$ such that
\[
\prob\left\{ \sup_{\|\psibf\|_{\infty} < \epsilon} \abs{ G_{\ell,N}(\psibf) } > \delta   \right\} < \nu
\]
for all positive integers $N$ and all $\ell = 0,\dots,m$, where $\|\psibf\|_\infty = \sup_{k} \abs{\psi_k}$.  Since $\widehat{\psibf}_N$ converges almost surely to zero, it follows that 
\[
\lim_{N\rightarrow\infty}\prob\left\{ \|\widehat{\psibf}_N\|_{\infty} \geq \epsilon \right\} = 0
\]
for any $\epsilon > 0$, and therefore $\prob\{ \|\widehat{\psibf}_N\|_{\infty} \geq \epsilon \} < \nu$ for all sufficiently large $N$.  Now
\begin{align*}
  \prob\left\{\abs{ G_{\ell,N}(\widehat{\psibf}_N) } > \delta \right\} &= \prob\left\{ \abs{G_{\ell,N}(\widehat{\psibf}_N)} > \delta \;, \; \|\widehat{\psibf}_N\|_{\infty} < \epsilon \right\} \\
&\hspace{1cm} + \prob\left\{ \abs{G_{\ell,N}(\widehat{\psibf}_N)} > \delta  \;, \; \|\widehat{\psibf}_N\|_{\infty} \geq \epsilon \right\} \\
&\leq \prob\left\{  \sup_{\|\psibf\|_{\infty} < \epsilon} \abs{ G_{\ell,N}(\psibf) } > \delta \right\} + \prob\left\{ \|\widehat{\psibf}_N\|_{\infty} \geq \epsilon \right\} \\
&\leq 2\nu
\end{align*}
for all sufficiently large $N$.  Since $\nu$ and $\delta$ can be chosen arbitrarily small, it follows that $G_{\ell,N}(\widehat{\psibf}_N)$ converges in probability to zero as $N\rightarrow\infty$, and therefore $K_{\ell,N} = H_{\ell,N} + o_P(1)$.  Lemma~\ref{lem:sumpNhilb} shows that
\begin{align*}
H_{\ell,N} =  (h-1)\sqrt{N} \sum_{k=0}^{m}  \widehat{\psi}_{k,N} \big(C_{\ell k} + o_P(1)\big).
\end{align*}
%The proof follows.
\end{proof}

\begin{lemma}\label{lem:sumpNhilb}
With $H_{\ell,N}$ defined in~\eqref{eq:HellNdef}, $h = f(-\nicefrac{1}{2})$, and  $C_{\ell k} = \tfrac{1}{\ell + k + 1}$ we have $H_{\ell,N} = (h-1)\sqrt{N} \sum_{k=0}^{m}  \widehat{\psi}_{k,N} \big(C_{\ell k} + o_P(1)\big)$.
\end{lemma}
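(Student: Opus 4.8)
The plan is to Taylor-expand the function $Q$ about the origin, peel off its linear part, recognise the coefficient multiplying $\widehat{\psi}_{k,N}$ as a Riemann sum converging to $C_{\ell k}$, and absorb the nonlinear remainder using the strong consistency established in Section~\ref{sec:strongconstproof}. The first step is to pin down the behaviour of $Q(x)=\expect\round{\Phi_1+x}$ for $x$ near $0$. Since $\Phi_1$ is supported on $[-\nicefrac{1}{2},\nicefrac{1}{2})$ we have $\round{\Phi_1}=0$, so $Q(0)=0$; and for small $x>0$ (resp.\ $x<0$) the value $\round{\Phi_1+x}$ can only be $0$ or $1$ (resp.\ $0$ or $-1$), giving $Q(x)=\int_{1/2-x}^{1/2}f$ for $x\ge 0$ and $Q(x)=-\int_{-1/2}^{-1/2-x}f$ for $x\le 0$. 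The hypothesis that $f(\fracpart{x})$ is continuous at $x=-\nicefrac{1}{2}$ forces both one-sided limits $\lim_{t\uparrow 1/2}f(t)$ and $\lim_{t\downarrow -1/2}f(t)$ to equal $h=f(-\nicefrac{1}{2})$, so dividing the two displays by $x$ and letting $x\to 0$ shows that $Q$ is differentiable at $0$ with $Q'(0)=h$. Writing $\rho(x)=Q(x)-hx$, we conclude $\rho(0)=0$ and $\rho(x)/x\to 0$ as $x\to 0$, hence $Q(x)-x=(h-1)x+\rho(x)$.

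I would then substitute this identity into $H_{\ell,N}$ from~\eqref{eq:HellNdef}, using $\widehat{p}_{nN}=\sum_{k=0}^m(\tfrac{n}{N})^k\widehat{\psi}_{k,N}$ from~\eqref{eq:hatpnN}, to split $H_{\ell,N}$ into the main term $(h-1)\tfrac{1}{\sqrt{N}}\sum_{n=1}^N(\tfrac{n}{N})^\ell\widehat{p}_{nN}=(h-1)\sqrt{N}\sum_{k=0}^m\widehat{\psi}_{k,N}\,a_{\ell k,N}$, where $a_{\ell k,N}=\tfrac{1}{N}\sum_{n=1}^N(\tfrac{n}{N})^{\ell+k}$, plus the remainder $E_{\ell,N}=\tfrac{1}{\sqrt{N}}\sum_{n=1}^N(\tfrac{n}{N})^\ell\rho(\widehat{p}_{nN})$. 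Now $a_{\ell k,N}$ is a right-endpoint Riemann sum for $\int_0^1 t^{\ell+k}\,dt=\tfrac{1}{\ell+k+1}=C_{\ell k}$, so $a_{\ell k,N}=C_{\ell k}+o(1)$ deterministically. For the remainder, note that $|\widehat{p}_{nN}|\le\sum_{k}|\widehat{\psi}_{k,N}|$ for every $n$ (because $0<n/N\le 1$), and that $\sum_k|\widehat{\psi}_{k,N}|\to 0$ almost surely by part~(1) of Theorem~\ref{thm:asymp_proof}. Given any $\eta>0$, choose $\epsilon_0>0$ with $|\rho(x)|\le\eta|x|$ for $|x|<\epsilon_0$; on the event $\{\max_n|\widehat{p}_{nN}|<\epsilon_0\}$, whose probability tends to $1$, one has $|E_{\ell,N}|\le\tfrac{1}{\sqrt{N}}\sum_n|\rho(\widehat{p}_{nN})|\le\eta\sqrt{N}\sum_k|\widehat{\psi}_{k,N}|$. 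As $\eta$ is arbitrary this gives $E_{\ell,N}=\sqrt{N}\sum_k|\widehat{\psi}_{k,N}|\cdot o_P(1)$, and writing $|\widehat{\psi}_{k,N}|=\widehat{\psi}_{k,N}\sign{\widehat{\psi}_{k,N}}$ this is of the form $\sqrt{N}\sum_k\widehat{\psi}_{k,N}\,o_P(1)$. Adding the main term to the remainder and dividing the error coefficients by the nonzero constant $h-1$ (in the regime of the theorem $h=f(-\nicefrac{1}{2})<1$) yields $H_{\ell,N}=(h-1)\sqrt{N}\sum_{k=0}^m\widehat{\psi}_{k,N}\big(C_{\ell k}+o_P(1)\big)$.

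The step I expect to be the main obstacle is the bookkeeping forced by the fact that $\widehat{\psibf}_N$ is data-dependent: because $\widehat{p}_{nN}$ is random and enters the nonlinear function $\rho$, one cannot expand term by term, and some care is needed to make the remainder uniformly small over all $n=1,\dots,N$ at once and to repackage it in exactly the $\sum_k\widehat{\psi}_{k,N}\,o_P(1)$ shape the statement requires. This is precisely the role of the almost-sure bound $\max_n|\widehat{p}_{nN}|\le\sum_k|\widehat{\psi}_{k,N}|\to 0$ coming from strong consistency; beyond that the only minor point is the degenerate case $\widehat{\psibf}_N=0$, handled by declaring the relevant $o_P(1)$ coefficient to be $0$ there.
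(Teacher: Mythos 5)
Your proof is correct and follows essentially the same route as the paper's: an explicit computation of $Q$ near the origin, linearisation $Q(x)=hx+o(x)$ from the continuity of $f(\fracpart{x})$ at $-\nicefrac{1}{2}$, uniform control of $\widehat{p}_{nN}$ via strong consistency, and the Riemann-sum limit $C_{\ell k}$. The only (cosmetic) difference is that you split off the remainder additively and repackage it with care, whereas the paper keeps it factored as $\widehat{p}_{nN}\big(h-1+\zeta(\widehat{p}_{nN})\big)$; your version is, if anything, slightly more explicit about why the error lands in the required $\sum_k\widehat{\psi}_{k,N}\,o_P(1)$ form.
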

\begin{proof}
If $\abs{x} < 1$, then
\[
q_n(x)  = \round{\Phi_n + x} = \begin{cases}
1, & \Phi_n + x  \geq \nicefrac{1}{2} \\
-1, & \Phi_n + x  < -\nicefrac{1}{2} \\
0, & \text{otherwise},
\end{cases}
\]
and,
\begin{align*}
Q(x) = Eq_1(x) &=  \begin{cases}
\int_{\nicefrac{1}{2} -x }^{\nicefrac{1}{2}}f(t)\,dt, &   x \geq 0 \\
-\int_{-\nicefrac{1}{2}}^{-\nicefrac{1}{2} - x }f(t)\,dt, &  x  < 0.
\end{cases} 
% \\
% &= \begin{cases}
% 1 - F(\nicefrac{1}{2} - x ), & x  \geq 0 \\
% -F(-\nicefrac{1}{2} - x ), &  x < 0,
% \end{cases} 
\end{align*}
Because $f(\fracpart{x})$ is continuous at $-\nicefrac{1}{2}$ it follows that $Q(x) = x \big( h + \zeta(x) \big)$
where $\zeta(x)$ is a function that converges to zero as $x$ converges to zero.  
%The equation above is still valid when $\abs{x} \geq 1$ by an appropriate choice of the function $\zeta(x)$.  
Observe that $\abs{\widehat{p}_{nN}} \leq \sum_{k=0}^m\sabs{ \widehat{\psi}_{k,N}}$ and, since each of the $\widehat{\psi}_{k,N} \rightarrow 0$ almost surely as $N\rightarrow\infty$, it follows that $\widehat{p}_{nN} \rightarrow 0$ almost surely uniformly in $n = 1, \dots, N$ as $N\rightarrow\infty$.  Thus, $\zeta(\widehat{p}_{nN}) \rightarrow 0$ almost surely (and therefore also in probability) uniformly in $n = 1, \dots, N$ as $N\rightarrow\infty$.  Now,
\begin{align*}
Q(\widehat{p}_{nN}) - \widehat{p}_{nN} = \widehat{p}_{nN}\big( h - 1 + \zeta(\widehat{p}_{nN}) \big) = \widehat{p}_{nN}\big( h - 1 + o_P(1) \big), 
\end{align*}
and, using~\eqref{eq:HellNdef},
\begin{align*}
H_{\ell,N} &= \frac{1}{\sqrt{N}}\sum_{n=1}^{N}(\tfrac{n}{N})^\ell \widehat{p}_{nN}\big( h - 1 + o_P(1) \big)  \\
&=  \frac{1}{\sqrt{N}} \sum_{n=1}^{N}(\tfrac{n}{N})^\ell \sum_{k=0}^m (\tfrac{n}{N})^k \widehat{\psi}_{k,N} \big( h - 1 + o_P(1) \big)   \\
&= \sqrt{N}\sum_{k=0}^m \widehat{\psi}_{k,N} \frac{1}{N} \sum_{n=1}^{N} \frac{n^{\ell + k}}{N^{\ell + k + 1}}\big( h - 1 + o_P(1)  \big).
\end{align*}
The Riemann sum
\[
\frac{1}{N} \sum_{n=1}^{N} \frac{n^{\ell + k}}{N^{\ell + k + 1}} = \int_{0}^{1}x^{k+\ell+1}dx + o_P(1),
\]
and since the integral above evaluates to $C_{\ell k} = \frac{1}{k+\ell+1}$, we have
\[
H_{\ell,N} = (h - 1)\sqrt{N}\sum_{k=0}^m \widehat{\psi}_{k,N}\big( C_{\ell k} + o_P(1)  \big).C_{\ell k}
\]
% Put
% \[
% C_{\ell + k,N} = \sum_{n=1}^{N}\frac{n^{\ell + k}}{N^{\ell + k+1}}.
% \]
% It is well known (see for example~\cite[eq. 21]{Peleg1991_CRB_PPS_1991}) that,
% \begin{equation}\label{eq:ClkNHilbert}
% C_{\ell + k,N} = C_{\ell k} + O(N^{-1}).
% \end{equation}
% Since $\zeta(\widehat{p}_{nN}) = o_P(1)$ for all $n = 1, \dots, N$ then
% \[
% \sum_{n=1}^{N}\frac{n^{\ell + k}}{N^{\ell + k + 1}}\zeta(\widehat{p}_{nN}) 
% \]
% converges in probability to zero as $N\rightarrow\infty$.  So,
% \begin{align*}
% \sum_{n=1}^{N}\frac{n^{\ell + k}}{N^{\ell + k + 1}}\big( h - 1 +  \zeta(\widehat{p}_{nN}) \big) = (h-1)\big( C_{\ell k} + o_P(1) \big).
% \end{align*}
% The proof follows by substituting this into~\eqref{eq:HellNre}.
\end{proof}

\begin{lemma}\label{eq:convdn}
The distribution of the vector $\dbf_N$, defined in~\eqref{eq:dnandkn}, converges to the multivariate normal with zero mean and covariance matrix $\sigma^2\Cbf$.
\end{lemma}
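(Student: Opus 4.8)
The plan is to establish this via the Cram\'er--Wold device, which reduces the multivariate statement to a one-dimensional central limit theorem for a triangular array of independent summands. Fix an arbitrary vector $\tbf = [t_0,\dots,t_m]' \in \reals^{m+1}$. Recalling that $D_{\ell,N} = \tfrac{1}{\sqrt N}\sum_{n=1}^N(\tfrac{n}{N})^\ell\Phi_n$, we have
\[
\tbf'\dbf_N \;=\; \sum_{\ell=0}^m t_\ell D_{\ell,N} \;=\; \frac{1}{\sqrt N}\sum_{n=1}^N a_{nN}\Phi_n, \qquad a_{nN} \;=\; \sum_{\ell=0}^m t_\ell\Big(\tfrac{n}{N}\Big)^\ell .
\]
For each $N$ the random variables $a_{nN}\Phi_n/\sqrt N$, $n=1,\dots,N$, are independent. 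Since $\Phi_1,\dots,\Phi_N$ have zero intrinsic mean, Proposition~\ref{prop:zerointmeanzeromean} gives $\expect\Phi_n = 0$ and $\var\Phi_n = \sigma^2$, so each summand has mean zero; and the weights $a_{nN}$ are the values on $[0,1]$ of the fixed polynomial $x\mapsto\sum_\ell t_\ell x^\ell$, hence are uniformly bounded by some constant $M$, so that each summand is bounded in absolute value by $M/(2\sqrt N)$ because $\abs{\Phi_n}\le\tfrac12$.

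The next step is to compute the limiting variance. By independence,
\[
\var\big(\tbf'\dbf_N\big) \;=\; \frac{\sigma^2}{N}\sum_{n=1}^N a_{nN}^2 \;=\; \sigma^2\cdot\frac1N\sum_{n=1}^N\Big(\sum_{\ell=0}^m t_\ell(\tfrac{n}{N})^\ell\Big)^2,
\]
which is a Riemann sum converging to $\sigma^2\int_0^1\big(\sum_\ell t_\ell x^\ell\big)^2\,dx = \sigma^2\sum_{\ell,k} t_\ell t_k\int_0^1 x^{\ell+k}\,dx = \sigma^2\,\tbf'\Cbf\tbf$, using $C_{\ell k}=1/(\ell+k+1)$. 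Since $\Cbf$ is the Gram matrix in $L^2[0,1]$ of the linearly independent functions $1,x,\dots,x^m$, it is positive definite, so this limit is strictly positive whenever $\tbf\neq\zerobf$ and $\sigma^2>0$; the cases $\tbf=\zerobf$, or $\sigma^2=0$ (in which $\Phi_n=0$ almost surely and $\dbf_N=\zerobf$), are trivial.

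Finally, the Lindeberg condition for this triangular array holds trivially: for any $\epsilon>0$, once $N$ is large enough that $M/(2\sqrt N)<\epsilon$, every summand has absolute value below $\epsilon$, so the sum of the truncated second moments vanishes identically. The Lindeberg--Feller central limit theorem for triangular arrays then gives $\tbf'\dbf_N \to \mathcal{N}\big(0,\sigma^2\tbf'\Cbf\tbf\big)$ in distribution, and since $\tbf$ was arbitrary, the Cram\'er--Wold theorem yields convergence of $\dbf_N$ to the multivariate normal with mean $\zerobf$ and covariance $\sigma^2\Cbf$. There is no genuinely hard step here: the only points needing care are invoking Proposition~\ref{prop:zerointmeanzeromean} to pass from zero intrinsic mean to zero ordinary mean, recognising that the array is row-independent but not identically distributed (so that one needs Lindeberg--Feller rather than the classical i.i.d.\ theorem), and checking non-degeneracy of the limit via positive-definiteness of the Hilbert matrix.
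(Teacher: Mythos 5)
Your proof is correct and follows essentially the same route as the paper: a Cram\'er--Wold reduction followed by a central limit theorem for the triangular array of bounded, independent, zero-mean summands, with the limiting variance obtained as a Riemann sum equal to $\sigma^2\,\tbf'\Cbf\tbf$. The only cosmetic difference is that you verify the Lindeberg condition directly where the paper invokes Lyapunov's CLT; your added checks (zero mean via Proposition~\ref{prop:zerointmeanzeromean}, positive definiteness of $\Cbf$) are sound and simply make explicit what the paper leaves implicit.
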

\begin{proof}
For any constant vector $\alphabf$, let
\[
z_N = \alphabf^\prime \dbf_N = \frac{1}{\sqrt{N}} \sum_{n=1}^{N} \Phi_n \sum_{\ell=0}^{m}\alpha_\ell\left( \frac{n}{N} \right)^\ell.
\]
By Lypanov's central limit theorem $z_N$ is asymptotically normally distributed with zero mean and variance
\[
\lim_{N\rightarrow\infty} \sigma^2 \frac{1}{N} \sum_{n=1}^{N}\left( \sum_{\ell=0}^{m}\alpha_\ell\left( \frac{n}{N} \right)^\ell \right)^2 = \sigma^2 \alphabf^\prime \Cbf \alphabf.
\]
By the Cram\`{e}r-Wold theorem it follows that $\dbf_N$ is asymptotically normally distributed with zero mean and covariance $\sigma^2 \Cbf$. 
% Recall that the $\ell$th element of $\dbf_N$ is
% \[
% D_{\ell,N} = \frac{1}{\sqrt{N}} \sum_{n=1}^{N}(\tfrac{n}{N})^\ell \Phi_n.
% \]
% Since the $\Phi_1, \dots, \Phi_N$ are independent, each with variance $\sigma^2$, the distribution of $D_{\ell,N}$ converges to the normal with zero mean and variance
% \[
% \lim_{N\rightarrow\infty}\var D_{\ell,N} = \sigma^2 \lim_{N\rightarrow\infty} C_{2\ell,N} = \sigma^2 C_{\ell\ell}
% \]
% by Lypanov's central limit theorem and by using~\eqref{eq:ClkNHilbert}.  The covariance between $D_{\ell,N}$ and $D_{k,N}$ is
% \[
% \covar(D_{\ell,N}, D_{k,N}) = \sigma^2 C_{\ell+k,N} \rightarrow \sigma^2 C_{\ell k}
% \]
% as $N \rightarrow \infty$, again using~\eqref{eq:ClkNHilbert}.  The proof follows.
\end{proof}

\section{Simulations}\label{sec:simulations} 
 
This section describes the results of Monte-Carlo simulations with the least squares unwrapping (LSU) estimator.  The sample sizes considered are $N = 10, 50, 200$ and the unknown amplitude is $\rho = 1$.  The $X_1, \dots, X_N$ are pseudorandomly generated independent and identically distributed circularly symmetric complex Gaussian random variables with variance $\sigma_c^2$.  The coefficients $\tilde{\mubf} = [\tilde{\mu}_0, \dots, \tilde{\mu}_m]$ are distributed uniformly randomly in the identifiable region $B$.  The number of replications of each experiment is $T = 2000$ to obtain estimates $\widehat{\mubf}_1, \dots, \widehat{\mubf}_T$ and the corresponding dealiased errors $\widehat{\lambdabf}_t = \dealias(\widehat{\mubf}_T - \tilde{\mubf})$ are computed.  The sample mean square error (MSE) of the $k$th coefficient is computed according to $\tfrac{1}{T}\sum_{t=1}^T \widehat{\lambda}_{k,t}^2$ where $\widehat{\lambda}_{k,t}$ is the $k$th element of $\widehat{\lambdabf}_t$. 

Figure~\ref{plot:polyest1} shows the sample MSEs obtained for a polynomial phase signal of order $m=3$.  Results are displayed results for the zeroth and third order coefficients $\widehat{\mu}_1$ and $\widehat{\mu}_3$.  The results for $\widehat{\mu}_1$ and $\widehat{\mu}_2$ lead to similar conclusions.  When $N = 10$ and $50$ the LSU estimator can be computed exactly using a general purpose algorithm for finding nearest lattice points called the sphere~decoder~\cite{Pohst_sphere_decoder_1981,Agrell2002,Viterbo_sphere_decoder_1999}.  This is displayed by the circles in the figures.  When $N=200$ the sphere decoder is computationally intractable and we instead use an approximate nearest point algorithm called the $K$-best method~\cite{Zhan2006_K_best_sphere_decoder}.  This is displayed by the dots.  For the purpose of comparison we have also plotted the results for the $K$-best method when $N = 10$ and $50$.  The asymptotic variance predicted in Theorem~\ref{thm:asymp_proof} is displayed by the dashed line.  Provided the noise variance is small enough (so that the `threshold' is avoided) the sample MSE of the LSU estimator is close to that predicted by Theorem~\ref{thm:asymp_proof}.  The Cram\'{e}r-Rao lower bound for the variance of unbiased polynomial phase estimators in Gaussian noise is also plotted using the solid line~\cite{Peleg1991_CRB_PPS_1991}.  When the noise variance is small the asymptotic variance of the LSU estimator is close to the Cram\'{e}r-Rao lower bound.

\begin{figure}[p] 
   	\centering 
  		\includegraphics[width=\linewidth]{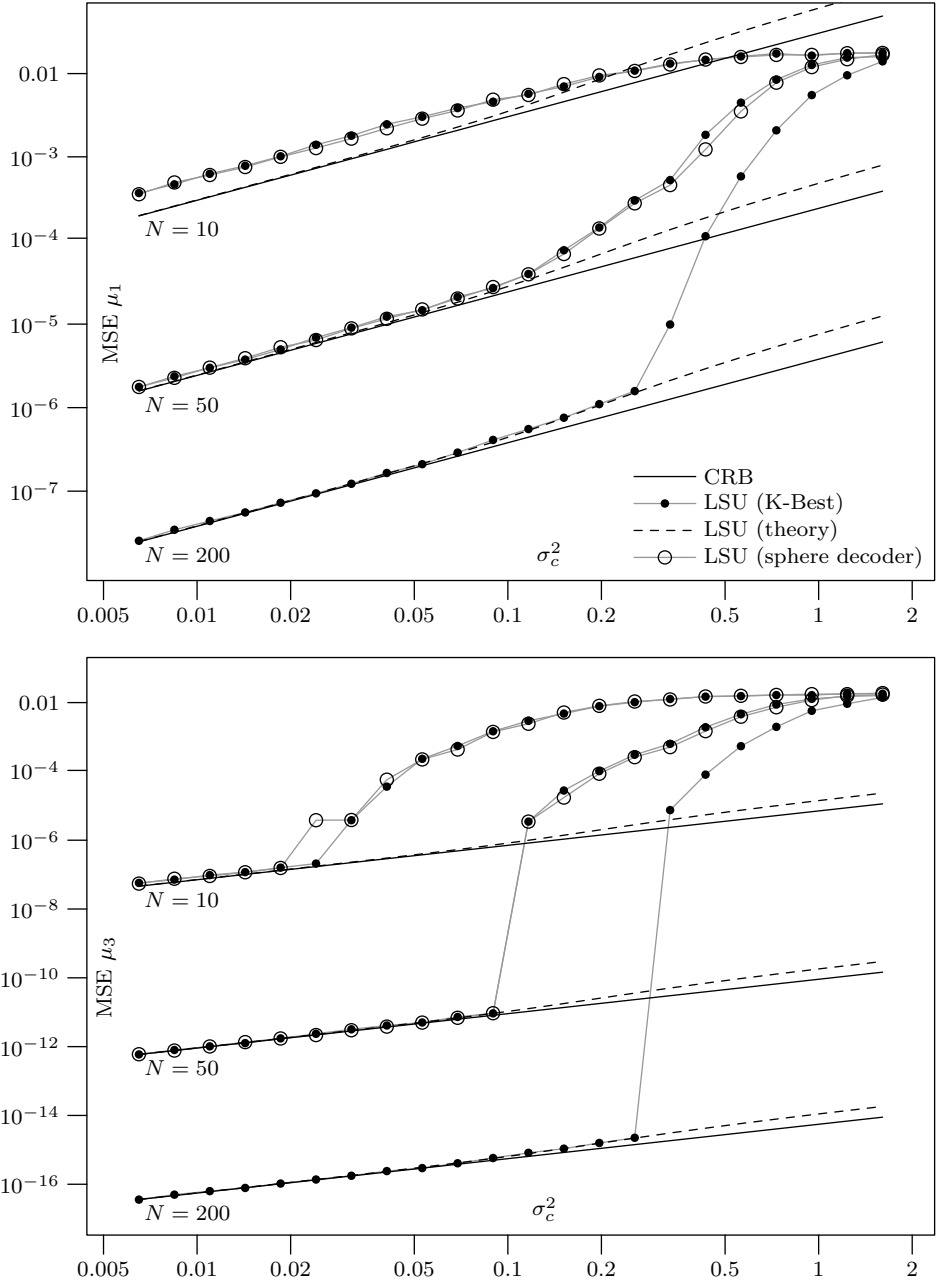}
  		\caption{Sample mean square error (MSE) of the least squares unwrapping estimator for $N=10,50$ and $200$ for a polynomial phase signal of order $m=3$. (Top) MSE of the frequency coefficient $\mu_0$.  (Bottom) MSE of the cubic coefficient $\mu_3$.} 
  		\label{plot:polyest1} 
 \end{figure}

\section{Conclusion} \label{sec:conclusion}
 
This paper has considered the estimation of the coefficients of a noisy polynomial phase signal by least squares phase unwrapping (LSU). It has been shown that the LSU estimator is strongly consistent and asymptotically normally distributed. %Simulations are used to show that the LSU has near maximum likelihood performance in the case that the noise is additive white and Gaussian.  
%It is shown in~\cite[Sec~8.1]{McKilliam2010thesis}\cite{McKilliam2009asilomar_polyest_lattice} how the LSU estimator can be computed by finding a nearest lattice point in a particular lattice.  
Polynomial time algorithms that compute the LSU estimator are described in~\cite{McKilliam2010thesis}, but these are slow algorithms in practice.  A significant outstanding question is whether practically fast algorithms exist.  Considering the excellent statistical performance (both theoretically and practically) of the LSU estimator, even fast \emph{approximate} algorithms are likely to prove useful for the estimation of polynomial phase signals. 
  
\appendix

%\section{Tricks with fractional parts}
%Throughout the paper, and particularly in  have made use of a number of results involving the fractional part function

\section{A uniform law of large numbers} \label{app:uniform-law-large}

During the proof of strong consistency we made use of the fact that
\begin{equation}~\label{eq:SNVNunifmlln2}
\sup_{\lambdabf \in B}\sabs{S_N(\lambdabf) - V_N(\lambdabf)} \rightarrow 0
\end{equation}
almost surely as $N\rightarrow\infty$, where $V_N(\lambdabf) = \expect S_N(\lambdabf)$.  We prove this result here.  Put $D_N(\lambdabf) = S_N(\lambdabf) - V_N(\lambdabf)$.  Now $\sum_{N=1}^\infty \prob \left\{ \sup_{\lambdabf \in B}\abs{ D_N(\lambdabf) } > \epsilon \right\} < \infty$ for any $\epsilon > 0$ by Lemma~\ref{lem:vn}, and~\eqref{eq:SNVNunifmlln2} follows from the Borel-Cantelli lemma.   In what follows we use order notation in the standard way, that is, for functions $h$ and $g$, we write $h(N) = O(g(N))$ to mean that there exists a constant $K > 0$ and a finite $N_0$ such that $h(N) \leq K g(N)$ for all $N > N_0$.

\begin{lemma} \label{lem:vn} $\prob \left\{ \sup_{\lambdabf \in B}\abs{ D_N(\lambdabf) } > \epsilon \right\} = O(e^{-c \epsilon^2 N})$ for any $\epsilon > 0$ and $c < 2$.
 \end{lemma}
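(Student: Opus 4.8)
The aim is to bound the tail probability $\prob\{\sup_{\lambdabf\in B}\abs{D_N(\lambdabf)}>\epsilon\}$ by $O(e^{-c\epsilon^2 N})$ for every $c<2$. The natural route is a chaining/covering argument combined with a Hoeffding-type exponential bound. First I would note that for each fixed $\lambdabf\in B$, $D_N(\lambdabf) = \frac{1}{N}\sum_{n=1}^N\big(\sfracpart{\Phi_n+\sum_k\lambda_k n^k}^2 - \expect\sfracpart{\Phi_n+\sum_k\lambda_k n^k}^2\big)$ is an average of $N$ independent, zero-mean random variables, each bounded in absolute value by $\tfrac14$ (since $\sfracpart{\cdot}^2\le\tfrac14$, so the centred summand lies in $[-\tfrac14,\tfrac14]$). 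By Hoeffding's inequality, $\prob\{\abs{D_N(\lambdabf)}>\epsilon\}\le 2\exp(-2N\epsilon^2/(\tfrac14)^2\cdot\tfrac14)$; more carefully, with summands of range $\tfrac14$ one gets $\prob\{\abs{D_N(\lambdabf)}>\epsilon\}\le 2e^{-cN\epsilon^2}$ for any $c<2$ once $\epsilon$ is small — this is where the constant $c<2$ in the statement comes from, and I would track the constants so that the pointwise bound already carries essentially the full exponent, leaving the covering step to contribute only a polynomial (in $N$) factor that is absorbed into the $O(\cdot)$.

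**Passing from pointwise to uniform.** The second step is to control the supremum over the compact box $B$. The key observation is a Lipschitz-type bound: the map $\lambdabf\mapsto\sfracpart{\Phi_n+\sum_k\lambda_k n^k}^2$ is, away from the jump set where $\sfracpart{\cdot}=\pm\tfrac12$, Lipschitz in $\lambdabf$ with constant controlled by $\sum_k n^k\le (m+1)N^m$. This means that on a grid of spacing $\eta$ in $B$ the function $D_N$ varies by at most $O(N^m\eta)$ between a point and its nearest grid point — except one must handle the discontinuities of $\sfracpart{\cdot}^2$. I would deal with the discontinuities by the standard device of replacing $\sfracpart{x}^2$ by a continuous (piecewise-linear or smoothed) majorant/minorant that differs from it by at most the measure of the bad set, or, more simply, by noting that $\sfracpart{x}^2$ as a function of $x$ has bounded variation and is monotone on each unit interval, so the composition is piecewise-monotone in each $\lambda_k$; either way the oscillation of $D_N$ over a cell of the grid is $O(N^m\eta)$ plus a term that is uniformly small. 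Choosing $\eta = \eta_N$ polynomially small in $N$ (e.g. $\eta_N = N^{-m-2}$) makes the oscillation term negligible compared to $\epsilon$ for large $N$, while the number of grid points is $\abs{B}\eta_N^{-(m+1)} = O(N^{(m+1)(m+2)})$, a polynomial in $N$.

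**Assembling the bound.** Combining, a union bound over the $O(\mathrm{poly}(N))$ grid points gives
\[
\prob\Big\{\sup_{\lambdabf\in B}\abs{D_N(\lambdabf)}>\epsilon\Big\}\le \mathrm{poly}(N)\cdot 2e^{-c'N\epsilon^2} + (\text{term handling discontinuities}),
\]
for $c'$ arbitrarily close to $2$, and since $\mathrm{poly}(N)\,e^{-c'N\epsilon^2} = O(e^{-cN\epsilon^2})$ for any $c<c'$, the claimed rate follows. The main obstacle, and the part that needs genuine care rather than routine calculation, is the treatment of the non-Lipschitz behaviour of $\sfracpart{x}^2$ at $x=\pm\tfrac12$: a crude Lipschitz bound fails exactly on the set where the squared fractional part jumps, so one must either (i) exploit the piecewise monotonicity of $x\mapsto\sfracpart{x}^2$ to bound the oscillation on each grid cell by the sum of jumps, which are finitely many and individually bounded by $\tfrac14$ and occur at predictable locations, or (ii) use a bracketing argument with continuous envelopes. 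I would pursue the bracketing route, as it is cleanest: construct continuous functions $u_\eta \ge \sfracpart{\cdot}^2 \ge l_\eta$ with $\expect(u_\eta - l_\eta) = O(\eta)$, apply the Lipschitz/Hoeffding argument to $u_\eta$ and $l_\eta$ separately, and conclude. The remaining steps — Hoeffding, union bound, polynomial-beats-nothing absorption — are entirely standard.
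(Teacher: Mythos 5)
Your proposal follows essentially the same route as the paper: a polynomial-size grid over $B$, Hoeffding's inequality at each grid point, a union bound, and a deterministic control of the oscillation of $D_N$ between a point and its nearest grid neighbour. The one place where you go astray is in diagnosing the oscillation step as the hard part. You write that a Lipschitz bound ``fails exactly on the set where the squared fractional part jumps'' and propose bracketing with continuous envelopes to repair it — but $\fracpart{x}^2$ does not jump anywhere: as $x$ crosses a half-integer, $\fracpart{x}$ flips from $\tfrac12^-$ to $-\tfrac12$, and its square is continuous there. In fact $\fracpart{x}^2$ is globally $1$-Lipschitz (its derivative, where it exists, is $2\fracpart{x}\in[-1,1)$), which is precisely what the paper establishes in Lemma~\ref{lem:boundedsquarefracparts}: $\abs{\fracpart{x+\delta}^2-\fracpart{x}^2}\le\abs{\delta}$ for all $x,\delta$. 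With that, the oscillation over a grid cell is bounded deterministically and uniformly in $\Phi_1,\dots,\Phi_N$, and no envelope or bounded-variation argument is needed. Two smaller remarks: the paper uses an anisotropic grid (spacing $\propto N^{-(b+k)}$ in the $k$th coordinate, matched to the $n^k$ growth) rather than your uniform spacing $N^{-m-2}$ — both give polynomially many points, so this is cosmetic; and your Hoeffding computation is garbled — the summands lie in $[-\tfrac14,\tfrac14]$, so the pointwise exponent is $8\epsilon^2 N$, not $2\epsilon^2 N$, and the constant $c<2$ in the lemma arises in the paper from evaluating that bound at $\epsilon/2$ after splitting the supremum, not from the pointwise bound itself. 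Neither of these affects the validity of your argument.
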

\begin{proof}
Consider a rectangular grid of points spaced over the identifiable region $B$.  We use $\lambdabf[\rbf]$, where $\rbf \in \ints^{m+1}$, to denote the grid point 
 %\begin{align*}
\[ 
%\lambdabf[\rbf] = &\left[  \frac{r_0}{N^{b}} - \frac{1}{2}, \; \frac{r_1}{N^{b+1}} - \frac{1}{2}, \right.\\ 
%&\dots, \; \frac{r_k}{N^{b+k}} - \frac{1}{2(k!)}, \; \dots \\
%&\left. \dots, \; \frac{r_m}{N^{b+m}} - \frac{1}{2(m!) }\right]
 \lambdabf[\rbf] = \left[  \frac{r_0}{N^{b}} - \frac{1}{2}, \; \frac{r_1}{N^{b+1}} - \frac{1}{2}, \; \dots, \; \frac{r_m}{m!N^{b+m}} - \frac{1}{2(m!) }\right]
\]
 %\end{align*}
 for some constant $b>0$.  Adjacent grid points are separated by $\tfrac{1}{N^b}$ in the zeroth coordinate, $\tfrac{1}{N^{b+1}}$ in the first coordinate and $\tfrac{1}{k!N^{b+k}}$ in the $k$th coordinate. Let
 \[
 B[\rbf]=\left\{  \xbf\in\reals^{m+1}  ; \frac{r_k}{N^{b+k}}\leq x_k + \frac{1}{2(k!)} < \frac{r_k + 1}{N^{b+k}} \right\}  .
 \]
and let $G$ be the finite set of grid points
\[
G = \left\{ \xbf \in \ints^{m+1} \mid x_k = 0,1,2\dots,N^{b+k}-1  \right\}.
\]
The total number of grid points is $|G| = N^{(m+1)(2b + m)/2}$, and the $B[\rbf]$ partition $B$, that is, $B = \cup_{\rbf \in G}B[\rbf]$.  Now 
\begin{align}
 \sup_{\lambdabf \in B}\sabs{ D_N(\lambdabf) }  &= \sup_{\rbf \in G}\sup_{ \lambdabf \in B[\rbf]}\sabs{ D_N(\lambdabf[\rbf]) + D_N(\lambdabf) - D_N(\lambdabf[\rbf])  } \nonumber \\
 &\leq \sup_{\rbf \in G}\sabs{ D_N(\lambdabf[\rbf])} + \sup_{\rbf \in G}\sup_{\lambdabf\in B[\rbf]}\sabs{ D_N(\lambdabf) - D_N(\lambdabf[\rbf])  }. \label{eq:gridanddensespace}
 \end{align}
From Lemma~\ref{lem:supVjk} it will follow that $\prob\left\{   \sup_{\rbf \in G }\abs{ D_N(\lambdabf[\rbf])  } > \frac{\epsilon}{2} \right\} = O(e^{-c \epsilon^2 N})$
for any $\epsilon > 0$ and $c < 2$.  In Lemma~\ref{lem:supBVn} we show that
\[
\sup_{\rbf \in G}\sup_{\lambdabf\in B[\rbf]}\abs{ D_N(\lambdabf) - D_N(\lambdabf[\rbf])} < 2\frac{m+1}{N^b}.
\]
%that is, $D_N$ does not change much within any of the regions $B[\rbf]$ neighbouring the grid points.  
Combining these results with~\eqref{eq:gridanddensespace}, we obtain
\[
\prob\left(\sup_{\lambdabf \in B }\sabs{ D_N(\lambdabf)} > \frac{\epsilon}{2} +  \frac{2(m+1)}{N^b} \right) = O(e^{-c \epsilon^2 N}),
\]
and for sufficiently large $N$, we have $\epsilon/2 + \frac{2(m+1)}{N^b} < \epsilon$ completing the proof.  It remains to prove Lemmas~\ref{lem:supVjk} and~\ref{lem:supBVn}.
\end{proof}

% We first need the following special case of Hoeffding's inequality~\cite{Hoeffding_inequality_1963}.

% \begin{lemma} \label{lem:zero_mean_indepent_sum_bound}
%   Let $\bar{Z} = (Z_1 + \dots + Z_N)/N$ where $Z_1, \dots, Z_N$ are independent, zero-mean random variables with $|Z_n| \leq \tfrac{1}{4}$ for each $n$.  Then,
% \[
%   \prob\{ \sabs{\bar{Z}} > \epsilon \} \leq 2e^{ - 8 \epsilon^2 N}.
% \]
% \end{lemma}

\begin{lemma}\label{lem:supVjk}
$\prob\cubr{   \sup_{\rbf \in G}\abs{ D_N(\lambdabf[\rbf])  } > \epsilon } = O(e^{-c \epsilon^2 N})$ for any $\epsilon > 0$ and $c < 8$.
\end{lemma}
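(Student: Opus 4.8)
The plan is a routine concentration-plus-union-bound argument over the grid $G$ introduced in the proof of Lemma~\ref{lem:vn}. First I would fix a grid point $\rbf \in G$ and write $D_N(\lambdabf[\rbf]) = \tfrac{1}{N}\sum_{n=1}^{N}(Z_n - \expect Z_n)$, where $Z_n = \fracpart{\Phi_n + \sum_{k=0}^{m}\lambda_k n^k}^2$ and the $\lambda_k$ are the components of $\lambdabf[\rbf]$. Since $\Phi_1,\dots,\Phi_N$ are independent, so are $Z_1,\dots,Z_N$, and because $\fracpart{\cdot}\in[-\nicefrac12,\nicefrac12)$ each $Z_n$ lies in $[0,\nicefrac14]$; hence the centered summands $Z_n-\expect Z_n$ are independent, mean-zero, and bounded in absolute value by $\nicefrac14$. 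Hoeffding's inequality for sums of bounded independent random variables (equivalently the Azuma--Hoeffding martingale bound) then gives $\prob\cubr{ \sabs{D_N(\lambdabf[\rbf])} > \epsilon } \le 2\,e^{-8\epsilon^2 N}$ for every $\epsilon>0$; the sharp form of Hoeffding actually yields a larger constant, but $8$ is all that is required.

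Next I would take a union bound over $G$. From the construction in Lemma~\ref{lem:vn}, $\sabs{G} = N^{(m+1)(2b+m)/2}$, which grows only polynomially in $N$, so
\[
\prob\cubr{ \sup_{\rbf \in G} \sabs{D_N(\lambdabf[\rbf])} > \epsilon } \le 2\, N^{(m+1)(2b+m)/2}\, e^{-8\epsilon^2 N}.
\]
For any fixed $c<8$ the polynomial factor is eventually dominated by $e^{(8-c)\epsilon^2 N}$, so the right-hand side is $O(e^{-c\epsilon^2 N})$, which is exactly the claim. (The subsequent $\epsilon/2$ substitution back in Lemma~\ref{lem:vn} divides the exponent's constant by $4$, which is why only $c<2$ is asserted there.)

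There is no genuine obstacle here beyond bookkeeping: the three things to check are that $\fracpart{x}^2$ is bounded (immediate), that independence of the $\Phi_n$ transfers to the $Z_n$ (immediate), and --- the only point deserving a moment's attention --- that $\sabs{G}$ is polynomial in $N$, since that is precisely what lets the per-point exponential rate survive the union bound, at the mildly degraded constant $c<8$.
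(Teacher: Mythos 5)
Your proposal is correct and follows essentially the same route as the paper: Hoeffding's inequality applied to the bounded, independent, centered summands at each fixed grid point, followed by a union bound over $G$, with the polynomial growth of $\sabs{G} = N^{(m+1)(2b+m)/2}$ absorbing the union-bound factor into any exponent $c < 8$. Your side remarks (that the sharp Hoeffding constant is larger than $8$, and that the $\epsilon/2$ substitution in Lemma~\ref{lem:vn} accounts for the weaker constant $c<2$ there) are both accurate.
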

 \begin{proof}
Fix $\lambdabf$ and write $D_N(\lambdabf) = \bar{Z} = \frac{1}{N}  \sum_{n=1}^{N}Z_{n}$, where
 \[
 Z_{n}=\fracpart{  \Phi_n+\sum_{k = 0}^{m}{\lambda_k n^k} }^{2} - \expect \fracpart{  \Phi_n + \sum_{k = 0}^{m}{\lambda_k n^k} }^{2}
 \]
 are independent with zero mean and $\abs{Z_n} \leq \tfrac{1}{4}$. It follows from Hoeffding's inequality~\cite{Hoeffding_inequality_1963} that,  $\prob\{ \abs{D_N(\lambdabf)} > \epsilon \} \leq 2e^{ - 8 \epsilon^2 N}$, and so,
 \begin{align*}
 \prob\cubr{  \sup_{\rbf\in G}\left\vert D_N(\lambdabf[\rbf]) \right\vert >\epsilon }  &\leq \sum_{\rbf \in G} \prob\scubr{  \left\vert D_N(\lambdabf[\rbf])  \right\vert >\epsilon } \nonumber \\
&= 2 |G| e^{ - 8 \epsilon^2 N} = O(e^{-c \epsilon^2 N}), \label{eq:betaprobconv}
\end{align*}
where $c$ is any real number less than $8$, since $|G| = N^{(m+1)(2b + m)/2}$ is polynomial in $N$.
\end{proof}

Before proving Lemma~\ref{lem:supBVn} we need the following result.

\begin{lemma}\label{lem:boundedsquarefracparts}
$\fracpart{x}^2 - |\delta| \leq \fracpart{x + \delta}^2 \leq \fracpart{x}^2 + |\delta|$ for all $x, \delta \in \reals$.
\end{lemma}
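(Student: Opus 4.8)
The plan is to prove the single two-sided estimate $\babs{\fracpart{x+\delta}^2 - \fracpart{x}^2} \leq |\delta|$, which immediately yields both displayed inequalities. The key observation is that $\babs{\fracpart{y}}$ is nothing but the distance from $y$ to the nearest integer: since $\round{y}$ is the integer closest to $y$ we have $\babs{\fracpart{y}} = |y - \round{y}| = \min_{n \in \ints}|y-n|$, and the only point needing a moment's care is that at a half-integer the rounding convention picks one of the two equidistant integers, but the distance $\tfrac12$ is the same either way, so the identity still holds there.

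First I would record that a distance-to-a-set function on $\reals$ is $1$-Lipschitz: for any $s \in \ints$, $\babs{\fracpart{x+\delta}} \leq |x+\delta - s| \leq |x - s| + |\delta|$, and taking the infimum over $s$ gives $\babs{\fracpart{x+\delta}} \leq \babs{\fracpart{x}} + |\delta|$; by symmetry $\babs{\, \babs{\fracpart{x+\delta}} - \babs{\fracpart{x}} \,} \leq |\delta|$. Next I would factor the difference of squares,
\[
\fracpart{x+\delta}^2 - \fracpart{x}^2 = \Big(\babs{\fracpart{x+\delta}} - \babs{\fracpart{x}}\Big)\Big(\babs{\fracpart{x+\delta}} + \babs{\fracpart{x}}\Big),
\]
bound the first factor in absolute value by $|\delta|$ using the Lipschitz step, and bound the second factor by $1$ using $\babs{\fracpart{y}} \leq \tfrac12$ for all $y$. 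This gives $\babs{\fracpart{x+\delta}^2 - \fracpart{x}^2} \leq |\delta|$, and rearranging produces exactly $\fracpart{x}^2 - |\delta| \leq \fracpart{x+\delta}^2 \leq \fracpart{x}^2 + |\delta|$.

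There is no real obstacle here; the only thing that is not a one-line triviality is the identification of $\babs{\fracpart{\cdot}}$ with the distance to $\ints$ (hence its $1$-Lipschitz property), after which the proof is a two-step calculation. An alternative, if one prefers to avoid the distance-function language, is to note that $\fracpart{x}^2$ is continuous and $1$-periodic, is $C^1$ off the half-integers with derivative $2\fracpart{x}$ of modulus at most $1$, and hence is $1$-Lipschitz by a piecewise application of the mean value theorem across the half-integers; but the distance-function argument above is shorter and cleaner.
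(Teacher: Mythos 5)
Your proof is correct, and it takes a genuinely different route from the paper's. The paper first reduces to the case $x,\delta \in [-\nicefrac{1}{2},\nicefrac{1}{2})$ with $\delta \geq 0$, then splits into the two cases $x \in [-\nicefrac{1}{2},\nicefrac{1}{2}-\delta)$ and $x \in [\nicefrac{1}{2}-\delta,\nicefrac{1}{2})$ (i.e.\ whether $x+\delta$ wraps past the boundary), computes $\fracpart{x+\delta}^2 - x^2$ explicitly in each case as $2x\delta+\delta^2$ or $2x(\delta-1)+(\delta-1)^2$, and bounds each expression between $-\delta$ and $\delta$ by elementary estimates. You instead identify $\babs{\fracpart{y}}$ with $\min_{n\in\ints}\abs{y-n}$, the distance from $y$ to $\ints$, invoke the $1$-Lipschitz property of distance functions, and then factor $\fracpart{x+\delta}^2 - \fracpart{x}^2$ as a difference of squares, bounding one factor by $\abs{\delta}$ and the other by $1$ via $\babs{\fracpart{y}}\leq\nicefrac{1}{2}$. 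Your argument is shorter, avoids the case analysis and the preliminary reduction entirely, and makes transparent why the Lipschitz constant is exactly $\sup\sabs{a+b}=1$; it also generalises immediately (e.g.\ to the squared distance to any closed subset of $\reals$ with bounded distance function). What the paper's computation buys in exchange is that it stays entirely within the explicit piecewise-algebraic description of $\fracpart{\cdot}$, which matches the style of the neighbouring lemmas, but your route is the cleaner one. Your handling of the tie-breaking at half-integers (the distance $\tfrac12$ is achieved by either choice of $\round{y}$) is the only subtle point, and you address it correctly.
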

\begin{proof}
Since $\abs{\delta} \leq \abs{n + \delta}$ for all $\delta \in [-\nicefrac{1}{2}, \nicefrac{1}{2})$ and $n \in \ints$, the result will follow if we can show that it holds when both $x$ and $\delta$ are in $[-\nicefrac{1}{2}, \nicefrac{1}{2})$.  Also, for reasons of symmetry, we need only show that it holds when $\delta \geq 0$.  Now
\[
\fracpart{x + \delta}^2 - x^2 = \begin{cases}
2x\delta + \delta^2, & x \in [-\nicefrac{1}{2}, \nicefrac{1}{2} - \delta) \\
2x(\delta-1) + (\delta-1)^2, & x \in [\nicefrac{1}{2} - \delta, \nicefrac{1}{2})
\end{cases}
\] 
But, when $x \in [-\nicefrac{1}{2}, \nicefrac{1}{2} - \delta)$, 
\[
-1 \leq -1 + \delta \leq 2x + \delta < 1 - \delta \leq 1,
\]
and so
\[
-\delta \leq (-1 + \delta)\delta \leq (2x + \delta)\delta < (1-\delta)\delta \leq \delta.
\]
Also, when $x \in [\nicefrac{1}{2} - \delta, \nicefrac{1}{2})$ we  have $-\delta \leq 2x + \delta - 1 < \delta$, and consequently
\[
-\delta \leq -\delta(1-\delta) \leq (2x + \delta - 1)(1 - \delta) \leq \delta(1 - \delta) \leq \delta.
\]
\end{proof}

 \begin{lemma}\label{lem:supBVn} $\sup_{\rbf\in G}\sup_{ \lambdabf  \in B[\rbf] }\left\vert D_N\left(  \lambdabf \right) - D_N\left(  \lambdabf[\rbf] \right)\right\vert < 2\frac{m+1}{N^b}$ for all $N$.
 \end{lemma}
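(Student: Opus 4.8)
The plan is to reduce the statement to a pointwise (in $n$) estimate of how much $\fracpart{\cdot}^2$ can change when its argument is perturbed a little, and then average. First I would split, using that $V_N = \expect S_N$,
\[
D_N(\lambdabf) - D_N(\lambdabf[\rbf]) = \big(S_N(\lambdabf) - S_N(\lambdabf[\rbf])\big) - \big(V_N(\lambdabf) - V_N(\lambdabf[\rbf])\big),
\]
so that by the triangle inequality it suffices to bound $|S_N(\lambdabf) - S_N(\lambdabf[\rbf])|$: any bound valid for every realisation of $\Phi_1,\dots,\Phi_N$ transfers to $|V_N(\lambdabf) - V_N(\lambdabf[\rbf])| = |\expect(S_N(\lambdabf) - S_N(\lambdabf[\rbf]))|$ after taking expectations.

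Next I would control the polynomial perturbation. Write $z(n) = \sum_{k=0}^m \lambda_k n^k$ and $z[\rbf](n) = \sum_{k=0}^m \lambda[\rbf]_k n^k$. Since $\lambdabf[\rbf]$ is the corner of the half-open cell $B[\rbf]$, each coordinate difference $\lambda_k - \lambda[\rbf]_k$ lies in $[0,\, 1/N^{b+k})$, the grid spacing in the $k$th coordinate. Hence, for $1 \le n \le N$,
\[
0 \le z(n) - z[\rbf](n) = \sum_{k=0}^m (\lambda_k - \lambda[\rbf]_k)\, n^k < \sum_{k=0}^m \frac{n^k}{N^{b+k}} \le \sum_{k=0}^m \frac{1}{N^b} = \frac{m+1}{N^b},
\]
using $n^k \le N^k$. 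This is exactly the point of using an anisotropic grid — finer in the higher-order coordinates — so that each monomial term is individually controlled by $1/N^b$.

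Then I would invoke Lemma~\ref{lem:boundedsquarefracparts} with $x = \Phi_n + z[\rbf](n)$ and $\delta = z(n) - z[\rbf](n)$, giving for each $n$
\[
\big| \fracpart{\Phi_n + z(n)}^2 - \fracpart{\Phi_n + z[\rbf](n)}^2 \big| \le |z(n) - z[\rbf](n)| < \frac{m+1}{N^b}.
\]
Averaging over $n = 1,\dots,N$ yields $|S_N(\lambdabf) - S_N(\lambdabf[\rbf])| < (m+1)/N^b$; taking expectations gives the same bound for $V_N$; and combining the two through the triangle inequality gives $|D_N(\lambdabf) - D_N(\lambdabf[\rbf])| < 2(m+1)/N^b$. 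Since this bound depends on neither $\rbf$ nor the particular $\lambdabf \in B[\rbf]$, taking the supremum over $\rbf \in G$ and over $\lambdabf \in B[\rbf]$ completes the proof.

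I do not expect a genuine obstacle here: once Lemma~\ref{lem:boundedsquarefracparts} is available, the argument is pure bookkeeping. The only step needing any care is the second one — lining up the coordinatewise grid spacings against the factors $n^k$ with $n \le N$ so that the full polynomial difference is bounded by $(m+1)/N^b$ rather than something that grows with $N$.
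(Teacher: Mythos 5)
Your proposal is correct and follows essentially the same route as the paper's proof: bound the coordinatewise perturbation by the anisotropic grid spacing so the polynomial difference is at most $(m+1)/N^b$, apply Lemma~\ref{lem:boundedsquarefracparts} termwise, average over $n$, transfer the (realisation-independent) bound to $V_N$ by taking expectations, and finish with the triangle inequality. No gaps.
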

 \begin{proof}
%The proof we give is not just almost surely but \emph{surely} as the convergence will be shown to occur irrespective of the values of the $\Phi_n$. 
Put $b_n = \Phi_n + \sum_{k = 0}^{m}{\lambda_k n^k}$ and $a_n = \Phi_n + \sum_{k = 0}^{m}{\lambda[\rbf]_k n^k}$, where $\lambda[\rbf]_k$ denotes the $k$th element of the grid point $\lambdabf[\rbf]$. For $\lambdabf \in B[\rbf]$ we have $b_n = a_n + \delta_n$,
where $\sabs{\delta_n} \leq \sum_{k=0}^m\frac{n^k}{k! N^{b+k}} \leq \frac{m+1}{N^{b}}$.  From Lemma~\ref{lem:boundedsquarefracparts} it follows that $-|\delta_n| \leq \fracpart{x + b_n}^2 -  \fracpart{x + a_n}^2  \leq |\delta_n|$,
and consequently $|\fracpart{x + b_n}^2 -  \fracpart{x + a_n}^2| \leq \tfrac{m+1}{N^b}$ for all $x \in \reals$. Now
\[
S_N(\lambdabf) - S_N(\lambdabf[\rbf]) =  \frac{1}{N}\sum_{n=1}^{N}\big( \fracpart{\Phi_n + b_n}^2 -  \fracpart{\Phi_n + a_n}^2 \big)
\]
and therefore $\sabs{S_N(\lambdabf) - S_N(\lambdabf[\rbf])} \leq \frac{m+1}{N^b}$ for all $\lambdabf \in B[\rbf]$.  As this bound is independent of $\Phi_1 \dots \Phi_N$, we have
\[
\sabs{ V_N(\lambdabf) - V_N(\lambdabf[\rbf])} \leq \expect \sabs{ S_N(\lambdabf) - S_N(\lambdabf[\rbf])} \leq \frac{m+1}{N^b}
\]
by Jensen's inequality.  Therefore, for all $\lambdabf \in B[\rbf]$,
 \begin{align*}
 \sabs{ D_N(\lambdabf) - D_N(\lambdabf[\rbf]) } &= \sabs{ S_N(\lambdabf) - S_N(\lambdabf[\rbf]) + V_N(\lambdabf) - V_N(\lambdabf[\rbf]) } \\
&\leq \sabs{ S_N(\lambdabf) - S_N(\lambdabf[\rbf]) \vert + \vert V_N(\lambdabf) - V_N(\lambdabf[\rbf]) } \\
&\leq 2\frac{m+1}{N^{b}},
\end{align*}
and the lemma follows because this bound is independent of $\rbf$.
%\[
%\sup_{\rbf\in G}\sup_{ \lambdabf  \in B[\rbf] }\sabs{ D_N\left(  \lambdabf \right) - D_N\left(  \lambdabf[\rbf] \right) } \leq 2\frac{m+1}{N^{b}}.
%\]
 \end{proof}

\section{A tightness result}\label{app:tightness-result}

During the proof of asymptotic normality in Lemma~\ref{lem:Kconvfhalf} we made use of the following result regarding the function,
\[
G_{\ell,N}(\psibf) = \frac{1}{\sqrt{N}} \sum_{n=1}^{N}\left( \tfrac{n}{N}\right)^\ell  \big(q_n(p_{nN}(\psibf)) - Q(p_{nN}(\psibf)) \big),
\]
where the functions $q_n$, $Q$ and $p_{nN}$ are defined above~\eqref{eq:GNdef} and $\ell \in \{0, 1, \dots, m\}$.  To simplify notation we drop the subscript $\ell$ and write $G_{\ell,N}$ as $G_N$ in what follows.  The proof we will give holds for any nonnegative integers $\ell$.

\begin{lemma}\label{lem:unifprobG}
For any $\delta >0$ and $\nu > 0$ there exists $\epsilon > 0$ such that
\[
\prob\left\{ \sup_{\|\psibf\|_{\infty} < \epsilon} \abs{ G_N(\psibf) } > \delta   \right\} < \nu %\qquad \text{for all $N > N_0$}
\]
for all positive integers $N$.
\end{lemma}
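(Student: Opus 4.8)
The plan is to read Lemma~\ref{lem:unifprobG} as a statement of stochastic equicontinuity at the point $\psibf=\zerobf$, uniform in $N$, for the triangular-array empirical process $G_N$ of~\eqref{eq:GNdef}, and to prove it with a maximal inequality for a VC-type class. First I would record the facts that anchor everything. We have $G_N(\zerobf)=0$, since $p_{nN}(\zerobf)=0$, $q_n(0)=\round{\Phi_n}=0$ (because $\Phi_n\in[-\nicefrac{1}{2},\nicefrac{1}{2})$) and $Q(0)=0$; the summands $(\tfrac{n}{N})^\ell\big(q_n(p_{nN}(\psibf))-Q(p_{nN}(\psibf))\big)$ are independent, mean zero, and uniformly bounded once $\|\psibf\|_\infty<\tfrac{1}{m+1}$, so that $|p_{nN}(\psibf)|<1$. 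Most importantly there is a \emph{shrinking envelope}: if $\|\psibf\|_\infty<\epsilon$ and $\eta=(m+1)\epsilon<\nicefrac{1}{2}$, then $q_n(p_{nN}(\psibf))\in\{-1,0,1\}$ is non-zero only when $\Phi_n$ lies in the fixed set $E_\eta=[\nicefrac{1}{2}-\eta,\nicefrac{1}{2})\cup[-\nicefrac{1}{2},-\nicefrac{1}{2}+\eta)$, so the indicator of $\{\Phi\in E_\eta\}$ is an envelope with $v_\eta:=\prob\{\Phi\in E_\eta\}\to 0$ as $\epsilon\to 0$ (indeed $v_\eta=O(\epsilon)$, using that $f$ is bounded near $\pm\nicefrac{1}{2}$, which follows from the continuity hypothesis of Theorem~\ref{thm:asymp_proof}). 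By Markov's inequality it then suffices to bound $\expect\sup_{\|\psibf\|_\infty<\epsilon}|G_N(\psibf)|$ by a quantity depending only on $\epsilon$, not on $N$, that tends to $0$ as $\epsilon\to 0$.

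It is worth isolating why a naive argument fails. For a single $\psibf$, the variance of $\sum_n(\tfrac{n}{N})^\ell\big(q_n(p_{nN}(\psibf))-Q(p_{nN}(\psibf))\big)$ is of order $N v_\eta$, so Bernstein's inequality gives only $\prob\{|G_N(\psibf)|>\delta\}\lesssim\exp(-c\delta^2/v_\eta)$ for large $N$, a bound that does not decay in $N$. Since, for fixed data, the vector $(q_n(p_{nN}(\psibf)))_{n=1}^N$ takes at most $O(N^{m+1})$ distinct values as $\psibf$ ranges over $\reals^{m+1}$ --- these being the cells of the arrangement of the $2N$ hyperplanes $\{p_{nN}(\psibf)=\pm\nicefrac{1}{2}-\Phi_n\}$ in $\reals^{m+1}$~\cite{Chazelle_discrepency_method_2000,Matousek_lect_disc_geom_2002} --- a union bound over cells would lose a polynomial factor in $N$. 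One must instead keep the centred process intact (bounding the deterministic drift $\tfrac{1}{\sqrt N}\sum_n(\tfrac{n}{N})^\ell Q(p_{nN}(\psibf))$ on its own produces a hopeless term of order $\sqrt{N}\,v_\eta$) and exploit that, for each $n$, the map $\psibf\mapsto q_n(p_{nN}(\psibf))$ is a fixed monotone step function composed with the \emph{affine} map $\psibf\mapsto p_{nN}(\psibf)$; consequently the relevant class of functions of $\Phi$ is VC-subgraph of index $O(m)$, \emph{uniformly in} $N$.

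I would then run the maximal inequality. Equip the index set with the $L^2$ semimetric $\rho_N(\psibf,\psibf')^2=\tfrac{1}{N}\sum_{n=1}^N(\tfrac{n}{N})^{2\ell}\expect\big(q_n(p_{nN}(\psibf))-q_n(p_{nN}(\psibf'))\big)^2$; because these two values disagree only when $\Phi_n$ lies in intervals abutting $\pm\nicefrac{1}{2}$ of total length $2|p_{nN}(\psibf)-p_{nN}(\psibf')|\le 2(m+1)\|\psibf-\psibf'\|_\infty$, one gets $\rho_N(\psibf,\psibf')^2\le 2L(m+1)\|\psibf-\psibf'\|_\infty$, with $L$ a bound for $f$ on a neighbourhood of $\pm\nicefrac{1}{2}$. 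Hence the $\rho_N$-diameter of $\{\|\psibf\|_\infty<\epsilon\}$ is $O(\sqrt\epsilon)$ and its $\rho_N$-covering numbers are at most $(C\epsilon/u^2)^{m+1}$, independent of $N$. Feeding the VC covering-number bound and this diameter into Pollard's maximal inequality for manageable triangular arrays~\cite{Pollard_asymp_empi_proc_1989,Pollard_new_ways_clts_1986} (equivalently, a Dudley chaining bound from the base point $\zerobf$, cf.~\cite{van2009empirical,Dudley_unif_central_lim_th_1999}) yields
\[
\expect\sup_{\|\psibf\|_\infty<\epsilon}|G_N(\psibf)|\ \le\ C'\!\int_0^{\,c\sqrt\epsilon}\!\sqrt{1+(m+1)\log(1/u)}\;du\ =:\ \Psi(\epsilon),
\]
with $C'$ and $\Psi$ free of $N$ and $\Psi(\epsilon)\to 0$ as $\epsilon\to 0$. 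Markov's inequality then gives $\prob\{\sup_{\|\psibf\|_\infty<\epsilon}|G_N(\psibf)|>\delta\}\le\Psi(\epsilon)/\delta<\nu$ for all $N$, once $\epsilon$ is small enough.

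The main obstacle is the third step: checking that the maximal inequality applies with constants genuinely uniform in $N$ despite the triangular-array structure (each $n$ carrying its own weight $(\tfrac{n}{N})^\ell$ and its own affine map $p_{nN}$), and that the manageability/VC constants of the class can be read off from the $O(N^{m+1})$ hyperplane-arrangement cell count in an $N$-free way. A subsidiary but essential point, already flagged, is that the whole argument must be carried out on the centred increments $q_n(p_{nN}(\psibf))-Q(p_{nN}(\psibf))$ directly: any step that separates the empirical part from its mean reintroduces the divergent $\sqrt{N}\,v_\eta$ term and cannot be repaired.
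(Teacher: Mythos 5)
Your proposal is correct and follows essentially the same route as the paper's Appendix~\ref{app:tightness-result}: a shrinking envelope of probability $O(\epsilon)$, uniform-in-$N$ covering numbers obtained from the VC/hyperplane-arrangement structure of $\psibf\mapsto q_n(p_{nN}(\psibf))$, a symmetrisation-plus-chaining maximal inequality giving $\expect\sup_{\|\psibf\|_\infty<\epsilon}\abs{G_N(\psibf)}=O(\sqrt{\epsilon})$ up to logarithmic factors, and then Markov's inequality. The only difference is one of packaging: the paper carries out the Gaussian symmetrisation and the chaining over the data-dependent pseudometric explicitly (Lemmas~\ref{lem:chaining}--\ref{lem:metricentropy}, via the $\epsilon$-cutting theorem of Theorem~\ref{thm:epscutting}), whereas you invoke Pollard's manageable-arrays maximal inequality as a black box.
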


This result is related to what is called \emph{tightness} or \emph{asymptotic continuity} in the literature on empirical processes and weak convergence on metric spaces~\cite{Billingsley1999_convergence_of_probability_measures,Dudley_unif_central_lim_th_1999,Shorak_emp_proc_stat_2009,van2009empirical}.  The lemma is different from what is usually proved in the literature because the function $p_{nN}(\psibf) = \sum_{k=0}^m\left( \tfrac{n}{N}\right)^k\psi_k$
depends on $n$.  Nevertheless, the methods of proof from the literature can be used if we include a known result about hyperplane arrangements~\cite[Ch. 5]{Chazelle_discrepency_method_2000}\cite[Ch. 6]{Matousek_lect_disc_geom_2002}. Our proof is based on a technique called \emph{symmetrisation} and another technique called \emph{chaining} (also known as \emph{bracketing})~\cite{Pollard_asymp_empi_proc_1989,van2009empirical}.

\begin{proof}
Define the function 
\begin{align*}
f_{nN}(\psibf, \Phi_n) = \left( \tfrac{n}{N}\right)^\ell q_n(p_{nN}(\psibf)) = \left( \tfrac{n}{N}\right)^\ell \round{\Phi_n + \sum_{k=0}^m\left( \tfrac{n}{N}\right)^k\psi_k}
\end{align*}
%and observe that 
%\[
%\expect f_{nN}(\psibf, \Phi_n) = \expect f_{nN}(\psibf, \Phi_1) = \left( \tfrac{n}{N}\right)^\ell Q(p_{nN}(\psibf))
%\]
so that $G_N$ can be written as
\[
G_N(\psibf) = \frac{1}{\sqrt{N}} \sum_{n=1}^{N} \big( f_{nN}(\psibf, \Phi_n) - \expect f_{nN}(\psibf, \Phi_n) \big).
\]
Let $\{g_n\}$ be a sequence of independent standard normal random variables, independent of the phase noise sequence $\{\Phi_n\}$.  The symmetrisation argument~\cite[Sec.~4]{Pollard_asymp_empi_proc_1989}\cite{Gine_Zinn_symmetrisation_1984,van2009empirical} can be used to show that 
%Lemma~\ref{lem:symmetrisation} shows that
\[
\expect \sup_{\|\psibf\|_{\infty} < \epsilon} \abs{ G_N(\psibf)} \leq \sqrt{2\pi} \; \expect \sup_{\|\psibf\|_{\infty} < \epsilon}  \abs{ Z_N(\psibf) },
\]
where 
\begin{equation}\label{eq:ZpsiCondGaussProc}
Z_N(\psibf) = \frac{1}{\sqrt{N}} \sum_{n=1}^{N} g_n f_{nN}(\psibf, \Phi_n),
\end{equation}
and where $\expect$ runs over both $\{g_n\}$ and $\{\Phi_n\}$.  Conditionally on $\{\Phi_n\}$, the process $\{Z_N(\psibf), \psibf \in \reals^{m+1} \}$ is a \emph{Gaussian process}, and numerous techniques exist for its analysis.  Lemma~\ref{lem:chaining} shows that for any $\kappa > 0$ there exists an $\epsilon > 0$ such that $\expect \sup_{\|\psibf\|_{\infty} < \epsilon} \abs{ Z_N(\psibf) } < \kappa.$  Thus $\expect \sup_{\|\psibf\|_{\infty} < \epsilon} \abs{ G_N(\psibf)}  <  \sqrt{2\pi} \; \kappa$, and by Markov's inequality,
\[
\prob \cubr{  \sup_{\|\psibf\|_{\infty} < \epsilon} \abs{ G_N(\psibf)} > \delta } \leq  \sqrt{2\pi} \; \frac{\kappa}{\delta},
\]
for any $\delta > 0$.  The proof follows with $\nu =  \sqrt{2\pi} \kappa/\delta$.  It remains to prove Lemma~\ref{lem:chaining}.
%Lemmas~\ref{lem:symmetrisation}~and~\ref{lem:chaining}.

\end{proof}

\begin{lemma} \label{lem:chaining}
For any $\kappa > 0$ there exists $\epsilon > 0$ such that
\[
\expect \sup_{\|\psibf\|_{\infty} < \epsilon} \abs{ Z_N(\psibf) } < \kappa.
\]
%where $Z_N(\psibf)$ is defined by~\eqref{eq:ZpsiCondGaussProc}.
\end{lemma}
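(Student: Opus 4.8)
The plan is to work conditionally on the phase noise sequence $\{\Phi_n\}$, so that $\{Z_N(\psibf)\}$ is a Gaussian process indexed by $\psibf$ in the cube $\|\psibf\|_\infty < \epsilon$, and then bound its conditional supremum by a standard chaining (Dudley entropy) argument, finally taking expectation over $\{\Phi_n\}$. First I would compute the intrinsic (canonical) metric of the process: for $\psibf, \psibf'$ in the cube,

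\begin{equation*}
d_N(\psibf,\psibf')^2 = \expect_{\{g_n\}}\babs{Z_N(\psibf) - Z_N(\psibf')}^2 = \frac{1}{N}\sum_{n=1}^N \big( f_{nN}(\psibf,\Phi_n) - f_{nN}(\psibf',\Phi_n) \big)^2.
\end{equation*}

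Since $f_{nN}(\psibf,\Phi_n) = (\tfrac{n}{N})^\ell \round{\Phi_n + p_{nN}(\psibf)}$ takes only the values $-1,0,1$ times $(\tfrac{n}{N})^\ell$, the $n$th summand is nonzero only when the integer $\round{\Phi_n + p_{nN}(\psibf)}$ jumps between $\psibf$ and $\psibf'$, i.e.\ when some half-integer lies between $p_{nN}(\psibf)$ and $p_{nN}(\psibf')$. Each such half-integer level set $\{\psibf : p_{nN}(\psibf) = c\}$ is a hyperplane in $\reals^{m+1}$, so the index $n$ contributes to $d_N(\psibf,\psibf')$ precisely when the segment $[\psibf,\psibf']$ crosses one of these hyperplanes.

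The key combinatorial input is the hyperplane arrangement bound~\cite[Ch.~5]{Chazelle_discrepency_method_2000}\cite[Ch.~6]{Matousek_lect_disc_geom_2002}: $N$ hyperplanes in $\reals^{m+1}$ partition the cube into $O(N^{m+1})$ cells. I would use this to control the covering numbers $\mathcal{N}(u, d_N)$ of the cube under $d_N$. Concretely, on the event that the cube is small (radius $\epsilon$), one bounds the number of hyperplanes that can possibly cut it, and a volumetric/cell-counting argument shows $\mathcal{N}(u, d_N) \leq (K \epsilon / u)^{\,C}$ for constants $K, C$ depending only on $m$ and $\ell$, uniformly in $N$ and in the realisation of $\{\Phi_n\}$ — this uniformity is what makes the final bound independent of $N$. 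The diameter of the cube under $d_N$ is $O(\sqrt{\epsilon})$ (each crossing contributes a bounded amount, and a segment of length $O(\epsilon)$ crosses only an $O(\epsilon)$ fraction of the relevant hyperplanes on average). Dudley's entropy integral then gives
\begin{equation*}
\expect_{\{g_n\}} \sup_{\|\psibf\|_\infty < \epsilon} \abs{Z_N(\psibf)} \;\leq\; K' \int_0^{O(\sqrt{\epsilon})} \sqrt{\log \mathcal{N}(u, d_N)}\, du \;\leq\; K'' \sqrt{\epsilon}\,\sqrt{\log(1/\epsilon)},
\end{equation*}
with $K''$ independent of $N$ and of $\{\Phi_n\}$ (after adding back the value at a fixed base point $\psibf = \zerobf$, where $Z_N(\zerobf)=0$). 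Taking expectation over $\{\Phi_n\}$ preserves the bound, and choosing $\epsilon$ small enough that $K'' \sqrt{\epsilon}\sqrt{\log(1/\epsilon)} < \kappa$ completes the proof.

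The main obstacle is establishing the covering-number bound for $d_N$ uniformly in $N$: one must show that, although there are $N$ hyperplanes, the metric $d_N$ they induce on a small cube has entropy bounded by a fixed power (independent of $N$) of $1/u$. This requires carefully exploiting that $d_N^2(\psibf,\psibf')$ is an \emph{average} over $n$ — so doubling the number of hyperplanes does not increase distances — together with the cell-count $O(N^{m+1})$ to argue that within each cell of the arrangement the process is constant (as a function of $\psibf$, for fixed $\{\Phi_n\}$), reducing the covering problem to counting cells that meet a $d_N$-ball. I would handle the potential dependence of the arrangement on the random $\{\Phi_n\}$ by noting the cell-count bound $O(N^{m+1})$ holds for \emph{every} arrangement of $N$ hyperplanes, hence deterministically, so no extra probabilistic care is needed there.
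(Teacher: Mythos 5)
Your proposal follows essentially the same route as the paper: condition on $\{\Phi_n\}$ so that $Z_N$ is a Gaussian process, bound its conditional supremum by chaining with covering numbers controlled by the hyperplane-cutting theorem (the paper's Lemmas~\ref{lem:chaining2},~\ref{lem:chaining3} and~\ref{lem:metricentropy} are a discretised Dudley entropy integral), and then take expectation over $\{\Phi_n\}$. The one place to be careful is that the diameter of the cube under $d_N$ is not deterministically $O(\sqrt{\epsilon})$ but equals the random quantity $\sqrt{C_\epsilon(\{\Phi_n\})}$ (the fraction of indices with $\abs{\Phi_n}\geq \nicefrac{1}{2}-(m+1)\epsilon$), so the conditional chaining bound is $K_1\sqrt{C_\epsilon}$ and you must finish with Jensen's inequality together with the continuity of $f(\fracpart{x})$ at $-\nicefrac{1}{2}$ to get $\expect\sqrt{C_\epsilon}\leq\sqrt{\expect C_\epsilon}=O(\sqrt{\epsilon})$, exactly as the paper does.
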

\begin{proof}
Without loss of generality, assume that $\epsilon < \frac{1}{m+1}$.  Lemma~\ref{lem:chaining2} shows that
\begin{equation}\label{eq:supZCK1} 
\expect_\Phi \sup_{\|\psibf\|_{\infty} < \epsilon} \abs{ Z_N(\psibf) } \leq K_1 \sqrt{C_\epsilon(\{\Phi_n\})},
\end{equation}
where $K_1$ is a finite, positive constant, and $C_\epsilon(\{\Phi_n\})$ is the average number of times $\abs{\Phi_1}, \dots, \abs{\Phi_N}$ is greater than or equal to $\nicefrac{1}{2} - (m+1)\epsilon$.  That is,
\begin{equation}\label{eq:Cedefn}
C_\epsilon(\{\Phi_n\}) = \frac{1}{N} \sum_{n=1}^{N} I_\epsilon(\abs{\Phi_n}),
\end{equation}
where $I_\epsilon(\abs{\Phi_n})$ is 1 when $\abs{\Phi_n} \geq \nicefrac{1}{2} - (m+1)\epsilon$ and zero otherwise.  Recall that $f$ is the probability density function of $\Phi_n$, and (by assumption in Theorem~\ref{thm:asymp_proof}) that $f(\fracpart{x})$ is continuous at $x = -\nicefrac{1}{2}$.  Because of this, the expected value of $C_\epsilon(\{\Phi_n\})$ is small when $\epsilon$ is small, since
\begin{align*}
\expect C_\epsilon(\{\Phi_n\}) &= \frac{1}{N} \sum_{n=1}^{N} \expect I_\epsilon(\abs{\Phi_n}) \\
&= \prob\cubr{\abs{\Phi_1} \geq \nicefrac{1}{2} - (m+1)\epsilon} \\
&= \int_{-1/2}^{-1/2 + (m+1)\epsilon} f(\phi) d\phi + \int_{1/2 - (m+1)\epsilon}^{1/2} f(\phi) d\phi \\
&= \int_{-1/2 + (m+1)\epsilon}^{1/2 - (m+1)\epsilon} f(\fracpart{\phi}) d\phi \\
&= 2(m+1)\epsilon \big(f(-\nicefrac{1}{2}) + o(1)\big),
\end{align*}
where $o(1)$ goes to zero as $\epsilon$ goes to zero.  Since $\sqrt{\cdot}$ is a concave function on the positive real line and $C_{\epsilon}(\{\Phi_n\})$ is nonnegative, it follows from Jensen's inequality that $\expect \sqrt{C_\epsilon(\{\Phi_n\})} \leq  \sqrt{\expect  C_\epsilon(\{\Phi_n\} )} < \sqrt{K_2 \epsilon }$ for some constant $K_2$.  Applying $\expect$ to both sides of~\eqref{eq:supZCK1} gives
\[
\expect \sup_{\|\psibf\|_{\infty} < \epsilon} \abs{ Z_N(\psibf) } \leq K_1 \sqrt{\expect C_\epsilon(\{\Phi_n\})} < K_1 \sqrt{K_2 \epsilon}.
\]
Choosing $\epsilon = \kappa^2/(K_1^2 K_2)$ completes the proof.  It remains to prove Lemma~\ref{lem:chaining2}.
\end{proof}

The proofs of Lemmas~\ref{lem:chaining2}~and~\ref{lem:chaining3} are based on a technique called \emph{chaining} (or \emph{bracketing})~\cite{Dudley_unif_central_lim_th_1999,Ossiander_clt_bracketing_1984,Pollard_asymp_empi_proc_1989,Pollard_new_ways_clts_1986,van2009empirical}.  The proofs here follow those of Pollard~\cite{Pollard_asymp_empi_proc_1989}.  In the remaining lemmas we consider expectation conditional on $\{\Phi_n\}$ and treat $\{\Phi_n\}$ as a fixed realisation. We consequently use the abbreviations $C_\epsilon = C_\epsilon(\{\Phi_n\})$ and $f_{nN}(\psibf) = f_{nN}(\psibf, \Phi_n)$.  As in Lemma~\ref{lem:chaining} we assume, without loss of generality, that $\epsilon < \tfrac{1}{m+1}$.  
%This comes at no loss since Lemma~\ref{lem:chaining} only asks for the existence of an $\epsilon > 0$.

\begin{lemma}\label{lem:chaining2}
There exists a positive constant $K_1$ such that
\[
\expect_\Phi \sup_{\|\psibf\|_{\infty} < \epsilon} \abs{ Z_N(\psibf) } \leq K_1 \sqrt{C_\epsilon}.
\]
\end{lemma}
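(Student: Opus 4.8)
The plan is to fix the realisation $\{\Phi_n\}$ and view $\{Z_N(\psibf)\}$, conditionally, as a centred Gaussian process indexed by $\psibf$ in the cube $\{\psibf : \|\psibf\|_\infty < \epsilon\}$, then bound $\expect_\Phi\sup_\psibf\abs{Z_N(\psibf)}$ (the expectation here being over the Gaussian multipliers $\{g_n\}$) by the Dudley entropy integral. Abbreviating $r_n(\psibf) = \round{\Phi_n + p_{nN}(\psibf)}$ and using $\expect g_n g_{n'} = \delta_{nn'}$, the intrinsic pseudometric of the process is
\[
d(\psibf,\psibf')^2 = \expect_\Phi\big(Z_N(\psibf) - Z_N(\psibf')\big)^2 = \frac1N\sum_{n=1}^N\Big(\frac nN\Big)^{2\ell}\big(r_n(\psibf) - r_n(\psibf')\big)^2.
\]
Since $\abs{\Phi_n} < \tfrac12$ forces $\round{\Phi_n} = 0$, the process vanishes at $\psibf = \zerobf$, so it suffices to control $\expect_\Phi\sup_\psibf\abs{Z_N(\psibf) - Z_N(\zerobf)}$ — precisely the quantity Dudley's theorem bounds; separability is automatic since $\psibf\mapsto(r_n(\psibf))_n$ takes only finitely many values.

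First I would record two structural facts valid for $\|\psibf\|_\infty < \epsilon < \tfrac1{m+1}$. Since $\abs{p_{nN}(\psibf)} \le \sum_{k=0}^m\abs{\psi_k} < (m+1)\epsilon < 1$, we get $r_n(\psibf)\in\{-1,0,1\}$; and $r_n(\psibf) = 0$ whenever $\abs{\Phi_n} < \tfrac12 - (m+1)\epsilon$, i.e.\ whenever $I_\epsilon(\abs{\Phi_n}) = 0$. Hence $r_n(\psibf) - r_n(\psibf')$ is nonzero only on the $NC_\epsilon$ indices with $I_\epsilon(\abs{\Phi_n}) = 1$, and since $(\tfrac nN)^{2\ell}\le 1$ and each squared difference is at most $4$, it follows that $d(\psibf,\psibf')^2 \le \rho(\psibf,\psibf')^2 := \tfrac1N\sum_n(r_n(\psibf) - r_n(\psibf'))^2 \le 4 C_\epsilon$, so the $d$-diameter of the index set is at most $2\sqrt{C_\epsilon}$. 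Moreover $F_n := I_\epsilon(\abs{\Phi_n})$ is an envelope for the function class $\mathcal G = \{\, n\mapsto r_n(\psibf) : \|\psibf\|_\infty < \epsilon \,\}$ on $\{1,\dots,N\}$, with $\tfrac1N\sum_n F_n^2 = C_\epsilon$.

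Next I would bound the covering number $\mathcal N(u)$ of the index set at scale $u$ under $d$: since $d\le\rho$, it suffices to bound the $L^2$ covering numbers of $\mathcal G$ with respect to $\rho$. Writing $r_n(\psibf) = \mathbf{1}[\xbf_n^\prime\psibf \ge \tfrac12 - \Phi_n] - \mathbf{1}[\xbf_n^\prime\psibf < -\tfrac12 - \Phi_n]$ with $\xbf_n = (1, \tfrac nN, \dots, (\tfrac nN)^m)^\prime$, each indicator is cut out of $\psibf$-space by a single hyperplane, so restricting to any $j$ of the design points produces at most $O(j^{m+1})$ distinct sign patterns — the hyperplane-arrangement bound of~\cite[Ch.~5]{Chazelle_discrepency_method_2000}\cite[Ch.~6]{Matousek_lect_disc_geom_2002} — whence $\mathcal G$ is a class of polynomial discrimination of index depending only on $m$. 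By the standard uniform entropy bound for such classes, carried with the envelope $F$ above (Pollard~\cite{Pollard_asymp_empi_proc_1989,Pollard_conv_stat_proc_1984}), there are constants $A,W$ depending only on $m$ and $\ell$ with $\mathcal N(u) \le (A\sqrt{C_\epsilon}/u)^W$ for $0 < u \le 2\sqrt{C_\epsilon}$. Feeding this into Dudley's bound and substituting $u = \sqrt{C_\epsilon}\,s$ gives
\[
\expect_\Phi\sup_{\|\psibf\|_\infty<\epsilon}\abs{Z_N(\psibf)} \le K\int_0^{2\sqrt{C_\epsilon}}\sqrt{\log \mathcal N(u)}\,du \le K\sqrt{C_\epsilon}\int_0^2\sqrt{W\log(A/s)}\,ds =: K_1\sqrt{C_\epsilon},
\]
with $K_1$ finite (the last integral converges since $\sqrt{\log(1/s)}$ is integrable at $0$) and independent of $N$ and of the realisation, which is the claim.

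The main obstacle is the covering-number estimate: it must be polynomial in $1/u$ — not merely "finitely many values" — and uniform in both $N$ and $\{\Phi_n\}$, since a crude count of the $O(N^{m+1})$ arrangement cells would contribute a spurious $\sqrt{\log N}$ and $K_1$ would fail to be a constant. The resolution is that although the design points $\xbf_n$ and the slab thresholds $\pm\tfrac12 - \Phi_n$ vary with $n$ and with the realisation, the combinatorial complexity governing the covering numbers is that of slabs (differences of halfspaces) in $\reals^{m+1}$, whose VC dimension depends only on $m$; isolating this via the hyperplane-arrangement result, and propagating the sharp envelope $I_\epsilon(\abs{\Phi_n})$ through the chaining so that the diameter $2\sqrt{C_\epsilon}$ — rather than a constant — sets the scale of the entropy integral, is the crux of the argument.
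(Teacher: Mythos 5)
Your proposal is correct and follows essentially the same route as the paper: condition on $\{\Phi_n\}$ to obtain a Gaussian process, show the covering numbers of the index set under the empirical $L^2$ pseudometric are polynomial in $1/u$ with scale set by the envelope $I_\epsilon(\abs{\Phi_n})$ (so that $\sqrt{C_\epsilon}$, not a constant, controls the entropy integral), and conclude by Gaussian chaining. The only differences are packaging: the paper writes out the dyadic chaining with nets $T_\epsilon(k)$ of size $K_3 2^{(m+1)k}$ obtained from the $\epsilon$-cutting theorem, which is exactly your bound $\mathcal{N}(u) \leq (A\sqrt{C_\epsilon}/u)^{W}$ with $W = 2(m+1)$ fed into Dudley's entropy integral.
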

\begin{proof}  Let $B_\epsilon = \{\xbf \in \reals \mid \|\xbf\|_\infty < \epsilon \}$.
%We are now required to prove that
%\[
%\expect_\Phi \sup_{\psibf \in B_\epsilon} \abs{ Z_N(\psibf) } \leq K_1 \sqrt{C_\epsilon}.
%\]
For each non negative integer $k$, let $T_\epsilon(k)$ be a discrete subset of $\reals^{m+1}$ with the property that for every $\psibf \in B_\epsilon$ there exists some $\psibf^* \in T_\epsilon(k)$ such that the pseudometric
\[
d(\psibf, \psibf^*) = \sum_{n = 1}^N \big( f_{nN}(\psibf) - f_{nN}(\psibf^*) \big)^2 \leq 2^{-k} C_\epsilon N.
\]
We define $T_\epsilon(0)$ to contain a single point, the origin $\zerobf$.  Defined this way $T_\epsilon(0)$ satisfies the inequality above because $d(\psibf, \zerobf) = \sum_{n = 1}^N f_{nN}(\psibf)^2  \leq C_\epsilon N$ for all $\psibf \in B_\epsilon$, as a result of Lemma~\ref{lem:epslmlemma}.

The existence of $T_\epsilon(k)$ for each positive integer $k$ will be proved in Lemma~\ref{lem:metricentropy}.  It is worth giving some intuition regarding $T_\epsilon(k)$.  If we place a `ball' of radius $2^{-k} C_\epsilon N$  with respect to the pseudometric $d(\cdot, \cdot)$ around each point in $T_\epsilon(k)$, then, by definition, the union of these balls is a superset of $B_\epsilon$.  The balls are said to \emph{cover} $B_\epsilon$ and $T_\epsilon(k)$ is said to form a \emph{covering} of $B_\epsilon$~\cite[Section 1.2]{Dudley_unif_central_lim_th_1999}.  The minimum number of such balls required to cover $B_\epsilon$ is called a \emph{covering number} of $B_\epsilon$.  In Lemma~\ref{lem:metricentropy} we show that no more than $K_3 2^{(m+1)k}$ balls of radius $2^{-k} C_\epsilon N$ are required to cover $B_\epsilon$, that is $\abs{T_\epsilon(k)} \leq K_3 2^{(m+1)k}$, where $K_3$ is a constant, independent of $N$ and $\epsilon$.  %We make use of this upper bound now.
%The lemma also upper bounds the number of elements by $\abs{T_\epsilon(k)} < K_3 2^{(m+1)k}$ where $K_3$ is a constant independent of $N$ and $\epsilon$.

Since $f_{nN}(\psibf) = \left( \tfrac{n}{N}\right)^\ell \round{\Phi_n + p_{nN}(\psibf)}$ is a multiple of $N^{-\ell}$ for all $\psibf \in \reals^{m+1}$, it follows that $d(\psibf, \psibf^*)$ is a multiple of $N^{-2\ell}$.  When $2^k > C_\epsilon N^{1+2\ell}$ we have $0 \leq d(\psibf, \psibf^*) < 2^{-k} C_\epsilon N < N^{-2\ell}$, and so $d(\psibf, \psibf^*) = 0$, and consequently $f_{nN}(\psibf) = f_{nN}(\psibf^*)$  for every $n = 1, \dots, N$ and $Z_N(\psibf) = Z_N(\psibf^*)$.  Thus,
\[
\sup_{\|\psibf\|_{\infty} < \epsilon } \abs{ Z_N(\psibf) } = \sup_{\psibf \in T_\epsilon(k) } \abs{ Z_N(\psibf) }
\]
for all $k$ large enough that $2^{k} > C_\epsilon N^{1+2\ell}$.  So, to analyse the supremum of $Z_N(\psibf)$ over the continuous interval $B_\epsilon$ it is enough to analyse the supremum over the discrete set $T_\epsilon(k)$ for large $k$.  
% The chaining argument~\cite[Sec.~3]{Pollard_asymp_empi_proc_1989} can  be used to show that
% \[
% \expect_{\Phi} \sup_{\psibf \in T_\epsilon(k) } \sabs{ Z_N(\psibf) } < \expect_{\Phi} \sup_{\psibf \in T_\epsilon(k-1)} \sabs{ Z_N(\psibf) }  + \sqrt{C_\epsilon}\frac{6 \sqrt{\log\sabs{T_\epsilon(k)}}}{2^{k/2}},
% \]
% which involves a recursion in $k$, where $\log(\cdot)$ is the natural logarithm.  Since $T_\epsilon(0)$ contains only the origin, $\expect_{\Phi}\sup_{\psibf \in T_\epsilon(0)} \abs{ Z_N(\psibf) } = \expect_{\Phi}\abs{ Z_N(0) } = 0$ and, by unravelling the recursion, we obtain
% \[
% \expect_{\Phi} \sup_{\psibf \in T_\epsilon(k) } \abs{ Z_N(\psibf) } < \sqrt{C_\epsilon} \sum_{i=1}^{k}\frac{6\sqrt{\log\abs{T_\epsilon(i)}}}{2^{i/2}}.
% \]
% Now $\log\abs{T_\epsilon(i)} \leq \log(K_3) + i(m+1)\log(2)$ by Lemma~\ref{lem:metricentropy}, and so the sum above converges as $k\rightarrow\infty$.  The lemma holds with $K_1 =  \sum_{i=1}^{\infty}\frac{6\sqrt{\log\abs{T_\epsilon(i)}}}{2^{i/2}}$.
% as required.
Lemma~\ref{lem:chaining3} shows that 
\[
\expect_{\Phi} \sup_{\psibf \in T_\epsilon(k) } \abs{ Z_N(\psibf) } \leq \sqrt{C_\epsilon} \sum_{i=1}^{k}\frac{\sqrt{ i A_1 + A_2}}{2^{i/2}} < \infty
\]
for every positive integer $k$, where $A_1 = 18(m+1)\log 2$ and $A_2 = 18\log K_3$ are constants and $\log(\cdot)$ is the natural logarithm.  The lemma holds with $K_1 =  \sum_{i=1}^{\infty}2^{-i/2}\sqrt{ i A_1 + A_2 }$.
\end{proof}

\begin{lemma}\label{lem:epslmlemma}
For $\epsilon < \frac{1}{m+1}$ and all $\psibf \in B_\epsilon$ and $n = 1, \dots, N$,
\[
f_{nN}(\psibf)^2 \leq \abs{f_{nN}(\psibf)} \leq I_\epsilon(\abs{\Phi_n})
\]
and consequently,
\begin{equation}\label{eq:sumf2absfCe}
\sum_{n=1}^{N}f_{nN}(\psibf)^2 \leq \sum_{n=1}^{N}\abs{f_{nN}(\psibf)} \leq N C_\epsilon.
\end{equation}
\end{lemma}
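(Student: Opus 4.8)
The plan is to exploit that, for $\psibf$ in the cube $B_\epsilon$ with $\epsilon < \tfrac{1}{m+1}$, the perturbation $p_{nN}(\psibf) = \sum_{k=0}^m (\tfrac{n}{N})^k \psi_k$ is small enough that $q_n(p_{nN}(\psibf)) = \round{\Phi_n + p_{nN}(\psibf)}$ can only take the values $-1$, $0$ or $1$, and in fact must equal $0$ unless $\abs{\Phi_n}$ lies within $(m+1)\epsilon$ of the wrapping point $\nicefrac{1}{2}$. First I would bound the perturbation: since $1 \le n \le N$ we have $0 \le (\tfrac{n}{N})^k \le 1$ for every $k \ge 0$, hence $\abs{p_{nN}(\psibf)} \le \sum_{k=0}^m \abs{\psi_k} \le (m+1)\|\psibf\|_\infty < (m+1)\epsilon < 1$.

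Next, because $\Phi_n \in [-\nicefrac{1}{2}, \nicefrac{1}{2})$, the argument $\Phi_n + p_{nN}(\psibf)$ lies in the interval $(-\nicefrac{3}{2}, \nicefrac{3}{2})$, so $q_n(p_{nN}(\psibf)) \in \{-1, 0, 1\}$ and therefore $q_n(p_{nN}(\psibf))^2 = \abs{q_n(p_{nN}(\psibf))}$. Writing $a = (\tfrac{n}{N})^\ell$, which lies in $[0,1]$ because $\ell \ge 0$ and $n \le N$, the first inequality follows from $f_{nN}(\psibf)^2 = a^2 q_n(p_{nN}(\psibf))^2 \le a\, q_n(p_{nN}(\psibf))^2 = a\,\abs{q_n(p_{nN}(\psibf))} = \abs{f_{nN}(\psibf)}$. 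For the second inequality, note $\abs{f_{nN}(\psibf)} = a\,\abs{q_n(p_{nN}(\psibf))} \le 1$ always. If $\abs{\Phi_n} < \nicefrac{1}{2} - (m+1)\epsilon$, then by the triangle inequality and the bound just established $\abs{\Phi_n + p_{nN}(\psibf)} < \nicefrac{1}{2}$, so $q_n(p_{nN}(\psibf)) = 0$ and thus $f_{nN}(\psibf) = 0 = I_\epsilon(\abs{\Phi_n})$; otherwise $I_\epsilon(\abs{\Phi_n}) = 1 \ge \abs{f_{nN}(\psibf)}$. In either case $\abs{f_{nN}(\psibf)} \le I_\epsilon(\abs{\Phi_n})$.

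Finally I would sum the chain $f_{nN}(\psibf)^2 \le \abs{f_{nN}(\psibf)} \le I_\epsilon(\abs{\Phi_n})$ over $n = 1, \dots, N$ and invoke the definition~\eqref{eq:Cedefn} of $C_\epsilon$, which gives $\sum_{n=1}^N f_{nN}(\psibf)^2 \le \sum_{n=1}^N \abs{f_{nN}(\psibf)} \le \sum_{n=1}^N I_\epsilon(\abs{\Phi_n}) = N C_\epsilon$, that is,~\eqref{eq:sumf2absfCe}. I do not expect a genuine obstacle in this lemma; the only point requiring care is the half-integer rounding convention, but the inequality $\abs{\Phi_n + p_{nN}(\psibf)} < \nicefrac{1}{2}$ is strict, so we remain safely inside the interval $[-\nicefrac{1}{2}, \nicefrac{1}{2})$ on which $\round{\cdot}$ returns $0$.
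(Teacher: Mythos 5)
Your proposal is correct and follows essentially the same route as the paper: bound $\abs{p_{nN}(\psibf)} \le (m+1)\epsilon < 1$, observe that $f_{nN}(\psibf)$ can only be $0$ or $\pm(\nicefrac{n}{N})^\ell$ so that $f_{nN}(\psibf)^2 \le \abs{f_{nN}(\psibf)} \le 1$, note that $f_{nN}(\psibf) \neq 0$ forces $\abs{\Phi_n} \ge \nicefrac{1}{2} - (m+1)\epsilon$ (you state the contrapositive, which is equivalent), and sum using the definition of $C_\epsilon$. No issues.
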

\begin{proof}
Recall that $f_{nN}(\psibf) = \left( \tfrac{n}{N}\right)^\ell \round{\Phi_n + p_{nN}(\psibf)}$.  Because $\abs{\psi_i} < \epsilon$ for all $i = 0, \dots, m$, 
\begin{equation}\label{eq:pnNsmall}
\abs{p_{nN}(\psibf)} = \abs{\sum_{i=0}^{m}\big(\tfrac{n}{N}\big)^i \psi_i} \leq (m+1)\epsilon < 1.
\end{equation}
%for all $n = 1, \dots, N$.  
Since $\Phi_n \in [-\nicefrac{1}{2}, \nicefrac{1}{2})$, it follows that $f_{nN}(\psibf)$ equals either $-(\tfrac{n}{N})^\ell$, $(\tfrac{n}{N})^\ell$ or $0$ and so
\begin{equation}\label{eq:fnNleqfnNs}
   f_{nN}(\psibf)^2 \leq \abs{f_{nN}(\psibf)} \leq 1.
\end{equation}
Whenever $f_{nN}(\psibf) \neq 0$ we must have 
\[
\abs{\Phi_n} \geq \nicefrac{1}{2} - \abs{p_{nN}(\psibf)} \geq \nicefrac{1}{2} - (m+1)\epsilon
\]
and $I_\epsilon(\abs{\Phi_n}) = 1$.  Thus, for all $n = 1, \dots, N$,
\[
f_{nN}(\psibf)^2 \leq \abs{f_{nN}(\psibf)} \leq I_\epsilon(\abs{\Phi_n}). %\qquad \text{for all $n = 1, \dots, N$.}
\]
Summing the terms in this inequality over $n = 1, \dots, N$ and using~\eqref{eq:Cedefn} gives~\eqref{eq:sumf2absfCe}.
\end{proof}

\begin{lemma}\label{lem:chaining3}(Chaining)
For all positive integers $k$,
\[
\expect_{\Phi} \sup_{\psibf \in T_\epsilon(k) } \abs{ Z_N(\psibf) } \leq \sqrt{C_\epsilon} \sum_{i=1}^{k}\frac{\sqrt{ i A_1 + A_2 }}{2^{i/2}},
\]
where the constants $A_1 = 18(m+1)\log 2$ and $A_2 = 18\log K_3$.
\end{lemma}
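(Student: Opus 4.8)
The plan is a chaining (bracketing) argument in the style of Pollard~\cite{Pollard_asymp_empi_proc_1989}, carried out conditionally on $\{\Phi_n\}$.  First note that, conditionally on $\{\Phi_n\}$, the process $\psibf\mapsto Z_N(\psibf)$ of~\eqref{eq:ZpsiCondGaussProc} is centered Gaussian with increments
\[
\expect_\Phi\big( Z_N(\psibf) - Z_N(\psibf^*) \big)^2 = \frac{1}{N}\sum_{n=1}^N\big( f_{nN}(\psibf) - f_{nN}(\psibf^*) \big)^2 = \frac{1}{N}\,d(\psibf,\psibf^*),
\]
so that $\sqrt{d(\cdot,\cdot)}$ is, up to the scalar $1/\sqrt N$, the intrinsic $L^2$-pseudometric of the process, and in particular it satisfies the triangle inequality.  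Observe also that $f_{nN}(\zerobf) = (\tfrac{n}{N})^\ell\round{\Phi_n} = 0$ because $\Phi_n\in[-\nicefrac{1}{2},\nicefrac{1}{2})$, hence $Z_N(\zerobf)=0$, and it suffices to bound $\expect_\Phi\sup_{\psibf\in T_\epsilon(k)}\abs{Z_N(\psibf)-Z_N(\zerobf)}$.

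Next I would build the telescoping chain from the coverings $T_\epsilon(0)=\{\zerobf\},T_\epsilon(1),\dots,T_\epsilon(k)$.  For each $i=1,\dots,k$ fix a map $\sigma_{i-1}\colon T_\epsilon(i)\to T_\epsilon(i-1)$ sending $\theta\in T_\epsilon(i)$ to a covering point $\sigma_{i-1}(\theta)\in T_\epsilon(i-1)$ with $d\big(\theta,\sigma_{i-1}(\theta)\big)\le 2^{-(i-1)}C_\epsilon N$; such a map exists by the covering property of $T_\epsilon(i-1)$, and for $i=1$ it exists because $T_\epsilon(0)=\{\zerobf\}$ and $d(\theta,\zerobf)=\sum_n f_{nN}(\theta)^2\le N C_\epsilon$ for $\theta\in B_\epsilon$ by Lemma~\ref{lem:epslmlemma}.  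For $\psibf\in T_\epsilon(k)$ put $\theta_k=\psibf$ and $\theta_{i-1}=\sigma_{i-1}(\theta_i)$, so $\theta_0=\zerobf$ and
\[
Z_N(\psibf)=\sum_{i=1}^{k}\big( Z_N(\theta_i)-Z_N(\theta_{i-1}) \big),
\qquad
\sup_{\psibf\in T_\epsilon(k)}\abs{Z_N(\psibf)}\le \sum_{i=1}^{k}\sup_{\psibf\in T_\epsilon(k)}\abs{Z_N(\theta_i)-Z_N(\theta_{i-1})}.
\]
Since $\theta_{i-1}$ depends on $\psibf$ only through $\theta_i\in T_\epsilon(i)$, at level $i$ the increment $Z_N(\theta_i)-Z_N(\theta_{i-1})$ takes at most $\abs{T_\epsilon(i)}\le K_3 2^{(m+1)i}$ distinct values, each a centered Gaussian of variance $\tfrac1N d(\theta_i,\theta_{i-1})\le 9\cdot 2^{-i}C_\epsilon$.

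The final step is the Gaussian maximal inequality: the maximum of $M$ centered Gaussians of variance at most $v$ has expectation at most $\sqrt{2v\log(2M)}$.  Applying this at level $i$ with $v=9\cdot 2^{-i}C_\epsilon$ and $M\le K_3 2^{(m+1)i}$, and collecting constants, the $i$-th summand is bounded by $2^{-i/2}\sqrt{C_\epsilon}\,\sqrt{iA_1+A_2}$ with $A_1=18(m+1)\log 2$ and $A_2=18\log K_3$; summing over $i=1,\dots,k$ gives the lemma.  (The series $\sum_{i\ge1}2^{-i/2}\sqrt{iA_1+A_2}$ converges, which is exactly what makes the constant $K_1$ of Lemma~\ref{lem:chaining2} finite.)

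I expect the chief difficulty to be the \emph{organisation} of the chain rather than any individual estimate: the factor maps $\sigma_{i-1}$ must be chosen so that (a) the number of distinct links at level $i$ is controlled by the single covering number $\abs{T_\epsilon(i)}$, (b) each link has variance of order $2^{-i}C_\epsilon$, and (c) the chain terminates at $\zerobf$, where $Z_N$ vanishes; the triangle inequality for $\sqrt d$, the maximal inequality, and the bookkeeping of $A_1,A_2$ are then routine.  The genuinely nontrivial input, the metric-entropy bound $\abs{T_\epsilon(k)}\le K_3 2^{(m+1)k}$, is quarantined in Lemma~\ref{lem:metricentropy} — that is where the hyperplane-arrangement result enters — and is used here only as a black box.
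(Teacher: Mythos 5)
Your proposal is correct and is essentially the paper's own argument: the paper writes the same chain as a one-step recursion, $\expect_\Phi\sup_{T_\epsilon(k)}\sabs{Z_N}\le\expect_\Phi\sup_{T_\epsilon(k-1)}\sabs{Z_N}+\expect_\Phi\sup_{\psibf\in T_\epsilon(k)}\sabs{Z_N(\psibf)-Z_N(b_k(\psibf))}$, and unravels it down to $T_\epsilon(0)=\{\zerobf\}$, which is just your telescoping sum in disguise, with the same variance bound $2^{1-i}C_\epsilon$ per link and the same cardinality bound $\sabs{T_\epsilon(i)}\le K_3 2^{(m+1)i}$ feeding a Gaussian maximal inequality. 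The only cosmetic difference is that the paper uses the maximal inequality $3\sqrt{K\log N}$ with the true link variance $2^{1-i}C_\epsilon$, which yields exactly $A_1=18(m+1)\log 2$ and $A_2=18\log K_3$, whereas your choice of $v=9\cdot 2^{-i}C_\epsilon$ and $\sqrt{2v\log(2M)}$ gives the same $2^{-i/2}\sqrt{C_\epsilon}\sqrt{iA_1+A_2'}$ shape with a harmlessly different $A_2'$.
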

\begin{proof}
Let $b_k$ be a function that maps each $\psibf \in T_\epsilon(k)$ to $b_k(\psibf) \in T_\epsilon(k-1)$ such that  $d(\psibf,b_k(\psibf)) \leq 2^{1-k} C_\epsilon N$.  The existence of the function $b_k$ is guaranteed by the definition of $T_\epsilon(k)$.  By the triangle inequality,
\[
\abs{ Z_N(\psibf) } \leq \abs{ Z_N(b_k(\psibf)) } + \abs{ Z_N(\psibf) - Z_N(b_k(\psibf))  },
\]
and by taking supremums on both sides,
\begin{align}
\sup_{\psibf \in T_\epsilon(k) } \sabs{ Z_N(\psibf) } &\leq \sup_{\psibf \in T_\epsilon(k) } \sabs{ Z_N(b_k(\psibf)) } + \sup_{\psibf \in T_\epsilon(k) } \sabs{ Z_N(\psibf) - Z_N(b_k(\psibf))  } \nonumber \\
&\leq \sup_{\psibf \in T_\epsilon(k-1)} \sabs{ Z_N(\psibf) } + \sup_{\psibf \in T_\epsilon(k)} \sabs{ Z_N(\psibf) - Z_N(b_k(\psibf)) }, \label{eq:supZchain}
\end{align}
the last line following since $b_k(\psibf) \in T_\epsilon(k-1)$ and so 
\[
\sup_{\psibf \in T_\epsilon(k)} \abs{ Z_N(b_k(\psibf)) } \leq \sup_{\psibf \in T_\epsilon(k-1)} \abs{ Z_N(\psibf) }.
\]
Conditional on $\Phi_1, \Phi_2, \dots$ the random variable $X(\psibf) = Z_N(\psibf) - Z_N(b_k(\psibf))$ has zero mean, and is normally distributed with variance
\begin{align*}
\sigma_X^2 &= \expect_{\Phi} \frac{1}{N} \sum_{n=1}^N g_n^2 \big( f_{nN}(\psibf) - f_{nN}(b_k(\psibf)) \big)^2 \\
&= \frac{1}{N} \sum_{n=1}^N \big( f_{nN}(\psibf) - f_{nN}(b_k(\psibf)) \big)^2 = d(\psibf, b_k(\psibf)) \leq 2^{1-k} C_\epsilon,
\end{align*}
because $\expect_{\Phi} g_n^2 = 1$.  Using Lemma~\ref{lem:maxineq}, 
\begin{align*}
\expect_{\Phi} \sup_{\psibf \in T_\epsilon(k)} \sabs{X(\psibf)} \leq 3\sqrt{2^{1-k} C_\epsilon \log\sabs{T_\epsilon(k)}} \leq \sqrt{C_\epsilon}\frac{\sqrt{  k A_1 + A_2 }}{2^{k/2}}
\end{align*}
because $\log\abs{T_\epsilon(k)} \leq k(m+1)\log 2 +  \log K_3$.  Taking expectations on both sides of~\eqref{eq:supZchain} gives
\begin{align*}
\expect_{\Phi} \sup_{\psibf \in T_\epsilon(k) } \sabs{ Z_N(\psibf) } \leq \expect_{\Phi} \sup_{\psibf \in T_\epsilon(k-1)} \sabs{ Z_N(\psibf) }  + \sqrt{C_\epsilon}\frac{\sqrt{  k A_1 + A_2 }}{2^{k/2}},
\end{align*}
which involves a recursion in $k$.  By unravelling the recursion, and using the fact $T_\epsilon(0)$ contains only the origin, and therefore 
\[
\expect_{\Phi}\sup_{\psibf \in T_\epsilon(0)} \abs{ Z_N(\psibf) } = \expect_{\Phi}\abs{ Z_N(0) } = 0,
\]
we obtain $\expect_{\Phi} \sup_{\psibf \in T_\epsilon(k) } \abs{ Z_N(\psibf) } < \sqrt{C_\epsilon} \sum_{i=1}^{k}\frac{\sqrt{  k A_1 + A_2 }}{2^{k/2}}$
as required.
\end{proof}

\begin{lemma}\label{lem:maxineq}(Maximal inequality)
Suppose $X_1, \dots, X_N$ are zero mean Gaussian random variables each with variance less than some positive constant $K$, then $\expect \sup_{n = 1, \dots, N} \abs{X_n} \leq 3  \sqrt{K \log N}$ where $\log N$ is the natural logarithm of $N$.
\end{lemma}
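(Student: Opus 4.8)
The plan is to use the standard exponential (Chernoff–union bound) argument for the expected maximum of Gaussian variables. First I would recall that if $X_n$ is zero-mean Gaussian with variance $\sigma_n^2 < K$, then its moment generating function satisfies $\expect e^{tX_n} = e^{t^2\sigma_n^2/2} \le e^{t^2 K/2}$ for every real $t$, and the same bound holds for $\expect e^{-tX_n}$ since $-X_n$ is again zero-mean Gaussian with the same variance.

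Next, for any $t > 0$, I would apply Jensen's inequality to the convex map $x \mapsto e^{tx}$, bound the supremum of nonnegative terms by their sum, and use $e^{t\abs{X_n}} \le e^{tX_n} + e^{-tX_n}$ to get
\[
e^{t\,\expect\sup_n \abs{X_n}} \le \expect\, e^{t\sup_n\abs{X_n}} = \expect\sup_n e^{t\abs{X_n}} \le \sum_{n=1}^N\big(\expect e^{tX_n} + \expect e^{-tX_n}\big) \le 2N\, e^{t^2K/2}.
\]
Taking logarithms and dividing by $t$ yields $\expect\sup_n\abs{X_n} \le \tfrac{\log(2N)}{t} + \tfrac{tK}{2}$ for every $t>0$.

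Then I would minimise the right-hand side over $t$: the minimum is attained at $t = \sqrt{2\log(2N)/K}$ and equals $\sqrt{2K\log(2N)}$. Finally, since the lemma is applied with $N \ge 2$, we have $\log 2 \le \log N$, hence $\log(2N) \le 2\log N$, so $\sqrt{2K\log(2N)} \le 2\sqrt{K\log N} \le 3\sqrt{K\log N}$, which is the claimed bound (indeed with a slightly smaller constant).

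I do not expect a genuine obstacle here, as the argument is entirely routine; the only points requiring a little care are the factor of two coming from controlling $\abs{X_n}$ via $e^{tX_n}+e^{-tX_n}$, and the fact that the stated constant $3$ presupposes $N \ge 2$ (for $N=1$ the right-hand side vanishes). I would simply note this harmless implicit hypothesis, which is always satisfied in the intended application where $N\to\infty$.
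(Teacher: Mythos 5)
Your proof is correct. The paper does not actually prove this lemma --- it simply cites it as well known (Pollard, Section 3) --- and your Chernoff/Jensen/union-bound argument is precisely the standard derivation behind that citation, with the constant worked out cleanly; your remark that the stated bound implicitly requires $N \geq 2$ (it is vacuously false for a nondegenerate Gaussian when $N=1$) is a fair and harmless observation about the lemma as stated.
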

\begin{proof}
This result is well known, see for example~\cite[Section 3]{Pollard_asymp_empi_proc_1989}  
\end{proof}

\begin{lemma}\label{lem:metricentropy} (Covering numbers)
For $k \in \ints$ there exists a discrete set $T_\epsilon(k) \subset \reals^{m+1}$ with the property that, for every $\psibf \in B_\epsilon$, there is a $\psibf^* \in T_\epsilon(k)$ such that,
\[
d(\psibf,\psibf^*) = \sum_{n = 1}^N \big( f_{nN}(\psibf) - f_{nN}(\psibf^*) \big)^2 \leq \frac{C_\epsilon N}{2^k}.
\]
The number of elements in $T_\epsilon(k)$ is no more than $K_3 2^{(m+1)k}$ where $K_3$ is a positive constant, independent of $N$, $\epsilon$ and $k$.
\end{lemma}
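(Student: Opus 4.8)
The plan is to exploit the very simple structure of the summands $f_{nN}$ on the box $B_\epsilon = \{\psibf \in \reals^{m+1} : \|\psibf\|_\infty < \epsilon\}$. For $\psibf \in B_\epsilon$ we have $\abs{p_{nN}(\psibf)} \le (m+1)\epsilon < 1$ by~\eqref{eq:pnNsmall}, and since $\Phi_n \in [-\nicefrac{1}{2},\nicefrac{1}{2})$ the rounded value $\round{\Phi_n + p_{nN}(\psibf)}$ lies in $\{-1,0,1\}$ for every $\psibf \in B_\epsilon$. When $I_\epsilon(\abs{\Phi_n}) = 0$ this value is identically $0$ on $B_\epsilon$, so $f_{nN} \equiv 0$ there (this is essentially Lemma~\ref{lem:epslmlemma}); when $I_\epsilon(\abs{\Phi_n}) = 1$ the function $\psibf \mapsto f_{nN}(\psibf)$ jumps between its at most two values on $B_\epsilon$ across the hyperplane $H_n = \{\psibf : \Phi_n + p_{nN}(\psibf) = \pm\nicefrac{1}{2}\}$ (for $\epsilon$ small enough, which is without loss of generality, exactly one of the two signs is relevant; otherwise one uses both hyperplanes, at most doubling the count below). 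Thus $f_{nN}$ restricted to $B_\epsilon$ is a constant multiple of the indicator of a halfspace with bounding hyperplane $H_n$. Let $\mathcal{H}$ be the collection of these hyperplanes; by~\eqref{eq:Cedefn}, $\abs{\mathcal{H}} \le 2M$ where $M := \sum_{n=1}^N I_\epsilon(\abs{\Phi_n}) = N C_\epsilon$.

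The next step is the key observation about $d$. For $\psibf,\psibf^* \in B_\epsilon$, the $n$-th term of $d(\psibf,\psibf^*) = \sum_{n=1}^N (\tfrac{n}{N})^{2\ell}(\round{\Phi_n+p_{nN}(\psibf)} - \round{\Phi_n+p_{nN}(\psibf^*)})^2$ vanishes unless the segment $\overline{\psibf\psibf^*}$ crosses a hyperplane of $\mathcal{H}$ arising from index $n$, and each such crossing contributes a bounded amount (at most $(\tfrac{n}{N})^{2\ell} \le 1$ in the single-hyperplane case). Hence $d(\psibf,\psibf^*) \le c_0\, \#\{H \in \mathcal{H} : H \text{ is crossed by } \overline{\psibf\psibf^*}\}$ for an absolute constant $c_0$. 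I would then invoke the Cutting Lemma for hyperplane arrangements~\cite[Ch.~5]{Chazelle_discrepency_method_2000}\cite[Ch.~6]{Matousek_lect_disc_geom_2002}: for $\abs{\mathcal{H}}$ hyperplanes in $\reals^{m+1}$ and any $r$ with $1 \le r \le \abs{\mathcal{H}}$ there is a covering of $\reals^{m+1}$ by the interiors of at most $C_{m+1}\,r^{m+1}$ simplices with pairwise disjoint interiors, each crossed by at most $\abs{\mathcal{H}}/r$ of the hyperplanes, the constant $C_{m+1}$ depending only on the dimension $m+1$ and not on $\abs{\mathcal{H}}$, on $r$, or on the (possibly highly degenerate) arrangement.

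To finish, apply the Cutting Lemma with $r$ equal to a fixed constant multiple of $2^k$ chosen so that $c_0\abs{\mathcal{H}}/r \le NC_\epsilon/2^k$, pick from each simplex $\sigma$ of the cutting with $\sigma \cap B_\epsilon \ne \emptyset$ a representative point $\psibf^*_\sigma \in \sigma \cap B_\epsilon$, and let $T_\epsilon(k)$ be the set of these representatives; then $\abs{T_\epsilon(k)}$ is at most the number of simplices, which is $K_3\,2^{(m+1)k}$ for a constant $K_3$ depending only on $m$. Given any $\psibf \in B_\epsilon$ and a simplex $\sigma$ containing it, the segment from $\psibf$ to $\psibf^*_\sigma$ stays in the convex set $\sigma \cap B_\epsilon$, hence crosses at most $\abs{\mathcal{H}}/r$ hyperplanes of $\mathcal{H}$, so $d(\psibf,\psibf^*_\sigma) \le c_0\abs{\mathcal{H}}/r \le C_\epsilon N/2^k$, as required. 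The complementary regime, where $r$ would exceed $\abs{\mathcal{H}}$ (in particular when $M=0$), is handled by taking one representative, chosen in $B_\epsilon$, from each cell of the full arrangement $\mathcal{H}$: there are $O(\abs{\mathcal{H}}^{m+1}) = O(2^{(m+1)k})$ cells, $d$ vanishes on each cell, and the residual case $k=0$ is settled by $T_\epsilon(0) = \{\zerobf\}$ with $d(\psibf,\zerobf) = \sum_{n=1}^N f_{nN}(\psibf)^2 \le NC_\epsilon$ from Lemma~\ref{lem:epslmlemma}. Absorbing all the dimension-only constants into $K_3$ completes the proof.

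I expect the main obstacle to be the first step together with the choice of tool. One must check carefully that, for $\epsilon$ small, each $f_{nN}$ really is a multiple of a halfspace indicator on $B_\epsilon$ — this is exactly where the smallness of $\epsilon$ is needed — and then one must quote a form of the Cutting Lemma whose size bound is \emph{uniform in $N$ and $\epsilon$} and valid for the strongly non-generic arrangements $\mathcal{H}$ that occur here (many coincident or parallel hyperplanes, counted with multiplicity). Once the structural description and a uniform cutting bound are in hand, converting the cutting into the covering $T_\epsilon(k)$, tracking the constants, and dealing with the degenerate cases are routine.
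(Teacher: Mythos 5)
Your proposal is correct and follows essentially the same route as the paper: identify each nonzero $f_{nN}$ on $B_\epsilon$ with a halfspace indicator, reduce $d$ to the count of separating hyperplanes (the paper's pseudometric $\sigma$, via $d\le d_g$), and apply the cutting lemma with $r\asymp 2^k$, taking one representative per cell meeting $B_\epsilon$ (the paper's Corollary~\ref{cor:epscuttingsubset}). The only difference is cosmetic: the paper uses $\sign{\Phi_n}$ to select a single relevant hyperplane per index, whereas you allow both and absorb the factor of two into the constants.
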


Before we give the proof of this lemma we need some results from the literature on hyperplane arrangements and what are called \emph{$\epsilon$-cuttings}~\cite{Chazelle_discrepency_method_2000,Matousek_lect_disc_geom_2002}.  Let $H$ be a set of $m$-dimensional affine hyperplanes lying in $\reals^{m+1}$.  By \emph{affine} it is meant that  the hyperplanes need not pass through the origin.  For each hyperplane $h \in H$ let $D(h)$ and its complement $\bar{D}(h)$ be the corresponding half spaces of $\mathbb{R}^{m+1}$.  For a point $\xbf \in \mathbb{R}^{m+1}$, let
\begin{equation}\label{eq:bhsi}
b(h,\xbf) = \begin{cases} 
1 & \xbf \in D(h) \\
0 & \xbf \in \bar{D}(h).
\end{cases}
\end{equation}
Note that $b(h,\xbf)$ is piecewise constant in $\xbf$.  Two points $\xbf$ and $\ybf$ from $\reals^{m+1}$ are in the same halfspace of $h$ if and only if $b(h, \xbf) = b(h, \ybf)$.  So the pseudometric
\[
\sigma(\xbf,\ybf) = \sum_{h \in H} |b(h,\xbf) - b(h,\ybf)|
\]
is the number of hyperplanes in $H$ that pass between the points $\xbf$ and $\ybf$.  %Observe that $\sigma(\xbf, \ybf)$ is peicewise constant in both $\xbf$ and $\ybf$.

The next theorem considers the partitioning of $\reals^{m+1}$ into subsets so that not too many hyperplanes intersect with any subset.  Proofs can be found in Theorem 5.1 on page 206 of~\cite{Chazelle_discrepency_method_2000} and also Theorem 6.5.3 on page 144 of~\cite{Matousek_lect_disc_geom_2002}.

\begin{theorem}\label{thm:epscutting}
There exists a constant $K$, independent of the set of hyperplanes $H$, such that for any positive real number $r$, we can partition $\reals^{m+1}$ into $K r^{m+1}$ generalised $(m+1)$-dimensional simplices with the property that no more than $\abs{H}/r$ hyperplanes from $H$ pass through the interior of any simplex.
\end{theorem}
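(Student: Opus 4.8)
\textbf{Proof proposal for Lemma~\ref{lem:metricentropy}.}

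The plan is to realise the pseudometric $d(\psibf,\psibf^*) = \sum_{n=1}^N (f_{nN}(\psibf) - f_{nN}(\psibf^*))^2$ as (a weighted version of) the hyperplane-counting pseudometric $\sigma(\cdot,\cdot)$ appearing above Theorem~\ref{thm:epscutting}, and then apply that theorem. First I would describe the relevant hyperplanes. Recall $f_{nN}(\psibf) = (\tfrac{n}{N})^\ell \round{\Phi_n + p_{nN}(\psibf)}$ with $p_{nN}(\psibf) = \sum_{k=0}^m (\tfrac{n}{N})^k \psi_k$, which is a linear functional of $\psibf$. For a fixed realisation $\{\Phi_n\}$, the value $\round{\Phi_n + p_{nN}(\psibf)}$ changes only when $\Phi_n + p_{nN}(\psibf)$ crosses a half-integer, i.e.\ only when $\psibf$ crosses one of the affine hyperplanes $\{\psibf : \Phi_n + p_{nN}(\psibf) = j + \tfrac12\}$ for $j \in \ints$. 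By Lemma~\ref{lem:epslmlemma} (specifically~\eqref{eq:pnNsmall}, $\abs{p_{nN}(\psibf)} < 1$ on $B_\epsilon$) only the hyperplanes with $j \in \{-1,0\}$ can intersect $B_\epsilon$, so each index $n$ contributes at most two hyperplanes; let $H$ be the resulting collection, so $\abs{H} \leq 2N$. On $B_\epsilon$, crossing one such hyperplane changes $\round{\Phi_n + p_{nN}(\psibf)}$ by $\pm 1$, hence changes $f_{nN}(\psibf)$ by $\pm(\tfrac{n}{N})^\ell$; moreover, by Lemma~\ref{lem:epslmlemma}, $f_{nN}(\psibf) \neq 0$ forces $I_\epsilon(\abs{\Phi_n}) = 1$, so only indices $n$ with $I_\epsilon(\abs{\Phi_n}) = 1$ contribute nonzero hyperplanes — there are exactly $N C_\epsilon$ such indices.

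Next I would bound $d$ by the hyperplane count within a single simplex. If $\psibf$ and $\psibf^*$ lie in (the closure of) the same cell of the arrangement of $H$, then for each $n$, $f_{nN}(\psibf)$ and $f_{nN}(\psibf^*)$ differ by at most $(\tfrac{n}{N})^\ell \cdot (\text{number of hyperplanes of } H \text{ associated to } n \text{ separating } \psibf,\psibf^*)$; on $B_\epsilon$ that separating count is $0$ (two points in one cell are separated by no hyperplane). So to get a net I instead subdivide $B_\epsilon$ more finely: apply Theorem~\ref{thm:epscutting} to $H$ with parameter $r$ chosen below, obtaining a partition of $\reals^{m+1}$ into at most $K r^{m+1}$ generalised simplices, each meeting at most $\abs{H}/r \leq 2N/r$ hyperplanes from $H$. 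Restrict attention to the simplices that intersect $B_\epsilon$; for each such simplex $\Sigma$, pick one representative point $\psibf^*_\Sigma \in \Sigma \cap B_\epsilon$, and let $T_\epsilon(k)$ be the set of these representatives. For any $\psibf \in B_\epsilon$, lying in some $\Sigma$, the points $\psibf$ and $\psibf^*_\Sigma$ are separated (within $B_\epsilon$) by at most $2N/r$ of the hyperplanes of $H$; each such separating hyperplane is attached to some index $n$ with $I_\epsilon(\abs{\Phi_n}) = 1$, and for each such $n$, $\abs{f_{nN}(\psibf) - f_{nN}(\psibf^*_\Sigma)} \leq (\tfrac{n}{N})^\ell \leq 1$ (using $\ell \geq 0$ and $n \leq N$), hence $(f_{nN}(\psibf) - f_{nN}(\psibf^*_\Sigma))^2 \leq 1$. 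Summing over the at most $2N/r$ offending indices gives $d(\psibf, \psibf^*_\Sigma) \leq 2N/r$. Setting $r = 2^{k+1}$ yields $d(\psibf, \psibf^*_\Sigma) \leq N/2^k$, which is slightly stronger than the required $C_\epsilon N/2^k$ when $C_\epsilon \leq 1$ — and indeed $C_\epsilon \leq 1$ always since it is an average of indicators. (If a literal $C_\epsilon N / 2^k$ bound with the $C_\epsilon$ factor genuinely present is wanted, one refines the simplex/hyperplane count to only the $N C_\epsilon$ nonzero-contributing hyperplanes and replaces $\abs{H}$ by $N C_\epsilon$ throughout, giving $d \leq N C_\epsilon / 2^k$ with $r = 2^{k+1}$; I would do it this way to match the lemma verbatim.)

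Finally, the cardinality bound: $\abs{T_\epsilon(k)}$ is at most the number of simplices from Theorem~\ref{thm:epscutting}, namely $K r^{m+1} = K (2^{k+1})^{m+1} = K 2^{m+1} \cdot 2^{(m+1)k}$, so the lemma holds with $K_3 = K 2^{m+1}$, a constant depending only on $m$ (through the dimension $m+1$ in Theorem~\ref{thm:epscutting}) and independent of $N$, $\epsilon$, and $k$, exactly as claimed. I would also remark that for $k$ with $2^{k+1} > 2N$ (equivalently $r > \abs{H}$) the arrangement has at most $O(N^{m+1})$ cells anyway, so the construction is consistent for all integers $k$. The main obstacle I anticipate is bookkeeping the constant factors and the $(\tfrac{n}{N})^\ell$ weights carefully enough that the per-hyperplane contribution to $d$ is bounded by $1$ uniformly in $n$ — this is where $\ell \geq 0$ and the restriction $n \leq N$ are used — together with making sure the hyperplanes counted are precisely those that actually change $f_{nN}$ on $B_\epsilon$ (this is where~\eqref{eq:pnNsmall} from Lemma~\ref{lem:epslmlemma} and the fact that $I_\epsilon(\abs{\Phi_n})$ controls nonzero contributions are essential), so that the $C_\epsilon$ factor can be recovered.
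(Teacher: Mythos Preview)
Your proposal targets Lemma~\ref{lem:metricentropy} rather than Theorem~\ref{thm:epscutting} (which the paper cites from~\cite{Chazelle_discrepency_method_2000,Matousek_lect_disc_geom_2002} and does not prove), and against the paper's proof of Lemma~\ref{lem:metricentropy} it takes essentially the same route: identify the jump hyperplanes of $\round{\Phi_n + p_{nN}(\psibf)}$, restrict to the $NC_\epsilon$ indices with $I_\epsilon=1$, apply the cutting theorem, and pick one representative per simplex meeting $B_\epsilon$. The paper's version is tidier in two places --- it first replaces $d$ by the unweighted $d_g=\sum_n\big((N/n)^\ell f_{nN}(\psibf)-(N/n)^\ell f_{nN}(\psibf^*)\big)^2\geq d$ so the $(n/N)^\ell$ weights disappear, and it attaches exactly \emph{one} hyperplane to each $n$ (selected by $\operatorname{sgn}\Phi_n$) so that $d_g=\sigma$ holds exactly on $B_\epsilon$ and $r=2^k$ suffices --- which incidentally repairs the small slip in your bookkeeping where two separating hyperplanes for the same $n$ would give a jump of $2(n/N)^\ell$ rather than the $(n/N)^\ell$ you claim; this only affects constants, not the argument.
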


By the phrase `There exists a constant $K$, independent of the set of hyperplanes $H$', it is meant that the constant $K$ is valid for every possible set of hyperplanes in $\reals^{m+1}$, regardless of the number of hyperplanes or their position and orientation.  A \emph{generalised} $(m+1)$-dimensional simplex is the region defined by the intersection of $m+2$ half spaces in $\reals^{m+1}$.  Note that a generalised simplex (unlike an ordinary simplex) can be unbounded.  For our purposes Theorem~\ref{thm:epscutting} is important because of the following corollary.

\begin{corollary}\label{cor:epscutting}
There exists a constant $K$, independent of the set of hyperplanes $H$, such that for every positive real number $r$ there is a discrete subset $T \subset \reals^{m+1}$ containing no more than $K r^{m+1}$ elements with the property that for every $\xbf \in \reals^{m+1}$ there exists $\ybf \in T$ with $\sigma(\xbf, \ybf) \leq \abs{H}/r$.
\end{corollary}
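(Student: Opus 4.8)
The plan is to deduce the corollary from Theorem~\ref{thm:epscutting} by placing, in each cell of the partition it supplies, one representative point in the relative interior of every face of that cell. Fix $r>0$ and apply Theorem~\ref{thm:epscutting} to $H$ with this $r$, obtaining a partition of $\reals^{m+1}$ into at most $Kr^{m+1}$ generalised $(m+1)$-dimensional simplices such that at most $\abs{H}/r$ hyperplanes of $H$ meet the interior of any single simplex; here $K$ is the absolute constant of the theorem, independent of $H$. We may assume every simplex in the partition is full-dimensional, since the lower-dimensional cells have measure zero and the finitely many closed closures of the full-dimensional cells therefore already cover $\reals^{m+1}$. A generalised simplex $\bar S$ is cut out by $m+2$ closed half-spaces, so it has at most $2^{m+2}$ faces, one for each subset of the defining inequalities turned into equalities, and each face is convex. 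For every face $F$ of every simplex $\bar S$ in the partition, choose a point $\ybf_F$ in the relative interior of $F$, and let $T$ be the collection of all these points. Then $\abs{T}\leq 2^{m+2}Kr^{m+1}$, so the corollary holds with the constant $2^{m+2}K$, which is still independent of $H$.

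The key geometric input is the following claim: if $F$ is a face of a simplex $\bar S$ from the partition and $h\in H$ meets the relative interior of $F$ but does not contain $F$, then $h$ meets the interior of $S$. Granting this, fix an arbitrary $\xbf\in\reals^{m+1}$; choose a simplex $\bar S$ with $\xbf\in\bar S$, let $F$ be the unique face of $\bar S$ in whose relative interior $\xbf$ lies, and set $\ybf=\ybf_F\in T$. Because $F$ is convex and both $\xbf$ and $\ybf$ lie in its relative interior, the segment $[\xbf,\ybf]$ lies in the relative interior of $F$. Suppose $h\in H$ contributes to $\sigma(\xbf,\ybf)=\sum_{h\in H}\abs{b(h,\xbf)-b(h,\ybf)}$, i.e. $b(h,\xbf)\neq b(h,\ybf)$. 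Then $h$ cannot contain $F$, since otherwise $F$ would lie on the common boundary of $D(h)$ and $\bar{D}(h)$ and $b(h,\cdot)$ would be constant along $F$; and by~\eqref{eq:bhsi} the affine function defining $h$ takes values of strictly opposite sign, or one value zero on the $D(h)$ side, at $\xbf$ and $\ybf$, so $h$ meets $[\xbf,\ybf]$ and hence the relative interior of $F$. By the claim, $h$ then meets the interior of $S$, so $h$ lies among the at most $\abs{H}/r$ hyperplanes of $H$ that meet the interior of $S$. Summing over the contributing hyperplanes gives $\sigma(\xbf,\ybf)\leq\abs{H}/r$, as required.

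The step I expect to demand the most care is the geometric claim, which I would argue locally near a point $\pbf\in h\cap(\text{relative interior of }F)$. Near $\pbf$ the simplex $S$ is defined by those of its $m+2$ half-space inequalities that are tight on $F$, say $\ell_1\geq0,\dots,\ell_k\geq0$ with each $\ell_i(\pbf)=0$; these are the facets containing $F$, so their gradients are linearly independent (this is what it means for $F$ to be a face of codimension $k$), and the relative interior of $F$ near $\pbf$ is the intersection of $\{\ell_1>0,\dots,\ell_k>0\}$ with the affine hull of $F$. Write $h=\{\ell_0=0\}$ with $\ell_0(\pbf)=0$. Since $h\not\supseteq F$ and $\pbf\in h$, the function $\ell_0$ is non-constant on the affine hull of $F$, which forces the gradient of $\ell_0$ to be independent of the gradients of $\ell_1,\dots,\ell_k$; note $k\leq m$ here, for if $F$ were a vertex the claim would be vacuous (any hyperplane through a vertex contains it). Hence the affine map $x\mapsto(\ell_0(x),\ell_1(x),\dots,\ell_k(x))$ has surjective linear part and is open near $\pbf$, so there is $x$ arbitrarily close to $\pbf$ with $\ell_0(x)=0$ and $\ell_1(x),\dots,\ell_k(x)>0$; such an $x$ lies in $h$ and, being close to $\pbf$ and strictly satisfying the tight constraints while the remaining constraints stay strict by continuity, lies in the interior of $S$. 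When $F=\bar S$ the claim is immediate. Everything else — counting cells and faces, and tracking that all constants come from Theorem~\ref{thm:epscutting} and so do not depend on $H$ — is routine bookkeeping.
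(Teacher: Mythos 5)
Your argument is correct in outline but takes a genuinely different route from the paper's. The paper places a single point in the interior of each cell and asserts that every $\xbf$ satisfies $\sigma(\xbf,\zbf)=0$ for some interior point $\zbf$ of some cell (because $b(h,\cdot)$ is piecewise constant and the cells partition $\reals^{m+1}$), then finishes with the triangle inequality; you instead place a representative in the relative interior of every face of every cell, pay a factor $2^{m+2}$ in the constant, and connect $\xbf$ to the representative of its own face by a segment that stays inside that relative interior. What your route buys is that it sidesteps the paper's (somewhat glossed) step of producing an interior point with exactly the same sign vector as a boundary point $\xbf$; what it costs is the extra geometric claim that a hyperplane meeting $\mathrm{relint}(F)$ without containing $F$ must meet $\mathrm{int}(S)$. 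That claim is true, but your justification leans on an assertion that is not automatic: the constraints of $\bar S$ that are tight on $F$ need not be exactly $k=\operatorname{codim}(F)$ in number with linearly independent gradients, since an arbitrary intersection of $m+2$ half-spaces can carry redundant or dependent tight constraints along a face (e.g.\ two defining half-spaces whose boundary hyperplanes coincide along a facet). This soft spot is fixable without any linear algebra: since $h\not\supseteq F$ and $\pbf\in h\cap\mathrm{relint}(F)$, the affine function defining $h$ takes both strict signs on $\mathrm{relint}(F)$; moving a positive point and a negative point slightly towards a fixed interior point $\cbf$ of $S$ preserves the signs and lands both in $\mathrm{int}(S)$, and the intermediate value theorem applied along the segment joining them yields a point of $h\cap\mathrm{int}(S)$. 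With that repair your proof is complete.
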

\begin{proof}
Let $C$ be the set of generalised simplices constructed according to Theorem~\ref{thm:epscutting}.  Define $T$ as a set containing precisely one point from the interior of each simplex in $C$.  Let $\xbf \in \reals^{m+1}$.  Since $b(h, \xbf)$ is piecewise constant for each $h \in H$, and since $C$ partitions $\reals^{m+1}$, there must exist a simplex $c \in C$ with a point $\zbf$ in its interior such that $b(h,\zbf) = b(h, \xbf)$ for all $h \in H$, and correspondingly $\sigma(\zbf, \xbf) = 0$.  Let $\ybf$ be the element from $T$ that is in the interior of $c$.  Since at most $\abs{H}/r$ hyperplanes cross the interior of $c$ there can be at most $\abs{H}/r$ hyperplanes between $\zbf$ and $\ybf$, and so $\sigma(\zbf, \ybf) \leq \abs{H}/r$.  Now $\sigma(\xbf, \ybf) < \sigma(\zbf, \xbf) + \sigma(\zbf, \ybf) \leq \frac{\abs{H}}{r}$ follows from the triangle inequality.
\end{proof}

The previous corollary ensures that we can \emph{cover} $\reals^{m+1}$ using $K r^{m+1}$ `balls' of radius $\abs{H}/r$ with respect to the pseudometric $\sigma(\xbf, \ybf)$.  The balls are placed at the positions defined by points in the set $T$, and these points could be anywhere in $\reals^{m+1}$.  The next corollary asserts that we can cover a subset of $\reals^{m+1}$, by placing the balls at points only within this subset.

\begin{corollary}\label{cor:epscuttingsubset}
Let $B$ be a subset of $\reals^{m+1}$.  There exists a constant $K$, independent of the set of hyperplanes $H$, such that for every positive real number $r$ there is a discrete subset $T_B \subset B$ containing no more than $K r^{m+1}$ elements with the property that for every $\xbf \in B$ there exists $\ybf \in T_B$ with $\sigma(\xbf, \ybf) \leq \abs{H}/r$.
\end{corollary}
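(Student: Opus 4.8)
The plan is to bootstrap Corollary~\ref{cor:epscutting}: first cover all of $\reals^{m+1}$ at a finer scale, then retract the covering points into $B$, absorbing the factor-of-two loss from the triangle inequality into the constant. Concretely, I would apply Corollary~\ref{cor:epscutting} with the positive real number $2r$ in place of $r$. This produces a constant $K$ (independent of $H$, by the statement of that corollary) and a discrete set $T \subset \reals^{m+1}$ with $\abs{T} \leq K(2r)^{m+1}$ such that every $\xbf \in \reals^{m+1}$, in particular every $\xbf \in B$, admits some $\ybf \in T$ with $\sigma(\xbf,\ybf) \leq \abs{H}/(2r)$.

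Next I would push this cover onto $B$. For each $\xbf \in B$ fix one $\ybf(\xbf) \in T$ realising $\sigma(\xbf, \ybf(\xbf)) \leq \abs{H}/(2r)$, breaking ties arbitrarily; this partitions $B$ into the cells $B_\ybf = \cubr{ \xbf \in B \mid \ybf(\xbf) = \ybf }$ indexed by $\ybf \in T$. From every nonempty cell $B_\ybf$ select a single representative point $\ybf^\ast \in B_\ybf \subseteq B$, and let $T_B$ be the set of all such representatives. Since there are at most $\abs{T}$ cells, $\abs{T_B} \leq \abs{T} \leq K(2r)^{m+1} = (2^{m+1}K)\, r^{m+1}$, and the constant $2^{m+1}K$ inherits independence from $H$.

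It then remains to check the covering property. Given $\xbf \in B$, it lies in some cell $B_\ybf$, which is therefore nonempty and hence contains its representative $\ybf^\ast \in T_B$. Both $\xbf$ and $\ybf^\ast$ were assigned to $\ybf$, so $\sigma(\xbf,\ybf) \leq \abs{H}/(2r)$ and $\sigma(\ybf^\ast,\ybf) \leq \abs{H}/(2r)$; since $\sigma$ is a pseudometric the triangle inequality yields $\sigma(\xbf,\ybf^\ast) \leq \sigma(\xbf,\ybf) + \sigma(\ybf,\ybf^\ast) \leq \abs{H}/r$. Thus $T_B \subset B$ has the required properties with the constant taken to be $2^{m+1}K$.

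I do not expect a genuine obstacle here: the entire content is that intersecting a cover of $\reals^{m+1}$ with $B$ and re-centring inside $B$ degrades the radius only by a factor of two, which is neutralised by running Corollary~\ref{cor:epscutting} at scale $2r$, so that the element count grows merely by the $H$-independent factor $2^{m+1}$ and the ``independent of $H$'' clause survives. The one point to be careful about is to draw the representatives from within $B$ rather than from within a simplex, as in the proof of Corollary~\ref{cor:epscutting}; doing so means no re-derivation of the interior-point argument underlying Theorem~\ref{thm:epscutting} is needed.
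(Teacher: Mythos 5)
Your proof is correct, but it takes a different route from the paper's. The paper does not treat Corollary~\ref{cor:epscutting} as a black box: it goes back to the simplicial partition $C$ of Theorem~\ref{thm:epscutting} (with the same parameter $r$), keeps only the simplices that meet $B$, and places one representative in $c \cap B$ for each such simplex; the bound $\sigma(\xbf,\ybf) \leq \abs{H}/r$ then follows exactly as in Corollary~\ref{cor:epscutting}, because at most $\abs{H}/r$ hyperplanes cross any one simplex, so the partition cells have pseudometric diameter at most $\abs{H}/r$ and the constant $K$ is unchanged. You instead invoke Corollary~\ref{cor:epscutting} at scale $2r$, pull the resulting cover back onto $B$ by choosing a representative in each nonempty fibre $B_{\ybf}$ of the assignment map, and pay for the re-centring with one application of the triangle inequality for the pseudometric $\sigma$ (which does hold, term by term from $\sabs{b(h,\xbf)-b(h,\ybf)} \leq \sabs{b(h,\xbf)-b(h,\zbf)} + \sabs{b(h,\zbf)-b(h,\ybf)}$). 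The cost is the factor $2^{m+1}$ in the constant, which is harmless: the constant remains independent of $H$, hence of $N$, $\epsilon$ and $k$, which is all that Lemma~\ref{lem:metricentropy} and the chaining bound require. What your version buys is modularity — it needs only the covering statement of Corollary~\ref{cor:epscutting} and the pseudometric axioms, not the underlying simplicial geometry — and, given that the paper's own proof here is a one-line ``follows similarly,'' your argument is the more completely justified of the two.
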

\begin{proof}
Let $C$ be the set of generalised simplices constructed according to Theorem~\ref{thm:epscutting} and let $C_B$ be the subset of those indices that intersect $B$.  Let $T_B$ contain a point from $c \cap B$ for each simplex $c \in C_B$.  The proof now follows similarly to Corollary~\ref{cor:epscutting}.
\end{proof}

We are now ready to prove Lemma~\ref{lem:metricentropy}.

\begin{proof} (Lemma~\ref{lem:metricentropy})
Put $g_{nN}(\psibf) = (\tfrac{N}{n})^\ell f_{nN}(\psibf) = \round{ \Phi_n + p_{nN}(\psibf)}$, and let
\[
d_g(\psibf, \psibf^*) = \sum_{n=1}^{N} \big( g_{nN}(\psibf) - g_{nN}(\psibf^*) \big)^2.
\]
We have $d(\psibf, \psibf^*) \leq d_g(\psibf, \psibf^*)$, and so it suffices to prove the lemma with $d$ replaced by $d_g$.  
%It will be more convenient to work with $d_g$ rather than $d$.
From~\eqref{eq:pnNsmall} it follows that $\abs{p_{nN}(\psibf)} \leq (m+1)\epsilon < 1$.  Since $\Phi_n \in [-\nicefrac{1}{2}, \nicefrac{1}{2})$, when $\Phi_n \geq 0$,
\[
g_{nN}(\psibf) = \begin{cases}
1 & p_{nN}(\psibf) \geq \nicefrac{1}{2} - \Phi_n \\
0 & \text{otherwise},
\end{cases}
\]
and when $\Phi_n < 0$,
\[
g_{nN}(\psibf) = \begin{cases}
-1 & p_{nN}(\psibf) < -\nicefrac{1}{2} - \Phi_n \\
0 & \text{otherwise}.
\end{cases}
\]
Thus, $( g_{nN}(\psibf) - g_{nN}(\psibf^*) )^2$ is either equal to one when $g_{nN}(\psibf) \neq g_{nN}(\psibf^*)$ or zero when $g_{nN}(\psibf) = g_{nN}(\psibf^*)$.  Now $g_{nN}(\psibf) \neq 0$ only if 
\[
\abs{\Phi_n} \geq \nicefrac{1}{2} - \abs{p_{nN}(\psibf)} \geq \nicefrac{1}{2} - (m+1)\epsilon,
\]
that is, only if $I_\epsilon(\Phi_n) = 1$.  Let $A = \{ n \in \{1, \dots, N\} \mid I_\epsilon(\Phi_n) = 1 \}$ be the subset of the indices where $I_\epsilon(\Phi_n) = 1$.  By definition the number of elements in $A$ is $C_\epsilon N$ (see~\eqref{eq:Cedefn}).  If both $\psibf$ and $\psibf^*$ are in $B_\epsilon$, then 
\[
(g_{nN}(\psibf) - g_{nN}(\psibf^*) )^2 \neq 0 
\]
only if $n \notin A$.  Thus,
\begin{align*}
d_g(\psibf, \psibf^*) = \sum_{n=1}^{N} \big( g_{nN}(\psibf) - g_{nN}(\psibf^*) \big)^2 = \sum_{n \in A} \big( g_{nN}(\psibf) - g_{nN}(\psibf^*) \big)^2.
\end{align*}

We now use Corollary~\ref{cor:epscuttingsubset}.  Let $h_n$ be the $m$ dimensional hyperplane in $\reals^{m+1}$ satisfying
\[
p_{nN}(\psibf) =  \sum_{i=0}^{m} \big(\tfrac{n}{N}\big)^i \psi_i = \tfrac{1}{2} \sign{\Phi_n} - \Phi_n
\]
where $\sign{\Phi_n}$ is equal to $1$ when $\Phi_n \geq 0$ and $-1$ otherwise.  The hyperplane $h_n$ divides $\reals^{m+1}$ into two halfspaces, $D(h_n)$ and its complement $\bar{D}(h_n)$.  If $\psibf$ and $\psibf^*$ are in the same halfspace, then $\abs{ b(h_n,\psibf) - b(h_n,\psibf^*) } = 0$ and $g_{nN}(\psibf) = g_{nN}(\psibf^*)$, and therefore $( g_{nN}(\psibf) - g_{nN}(\psibf^*) )^2 = 0$.  Otherwise, if $\psibf$ and $\psibf^*$ are in different halfspaces, then $\abs{ b(h_n,\psibf) - b(h_n,\psibf^*) } = 1$ and $g_{nN}(\psibf) \neq g_{nN}(\psibf^*)$, and therefore $( g_{nN}(\psibf) - g_{nN}(\psibf^*) )^2 = 1$.  Thus, 
\[
(g_{nN}(\psibf) - g_{nN}(\psibf^*) )^2 = \abs{ b(h_n,\psibf) - b(h_n,\psibf^*) }
\]
for $n = 1, \dots,  N$.  Let $H$ be the finite set of hyperplanes $\{ h_n, n \in A\}$ and observe that the number of hyperplanes is $\abs{H} = \abs{A} = C_\epsilon N$.  When both $\psibf$ and $\psibf^*$ are inside $B_\epsilon$, $d_g$ can be written as
\begin{align*}
d_g(\psibf, \psibf^*) &= \sum_{n\in A} \abs{ b(h_n,\psibf) - b(h_n,\psibf^*) } \\
&=  \sum_{h \in H} \abs{ b(h,\psibf) - b(h,\psibf^*) } = \sigma(\psibf, \psibf^*).
\end{align*}
That is, when both $\psibf, \psibf^* \in B_\epsilon$, $d_g(\psibf, \psibf^*)$ is the number of hyperplanes from $H$ that pass between the points $\psibf$ and $\psibf^*$.  

It follows from Corollary~\ref{cor:epscuttingsubset} that for any positive $r$ there exists a finite subset $T_B$ of $B_\epsilon$ containing at most $K_3 r^{m+1}$ elements, such that for every $\psibf \in B_\epsilon$ there is a $\psibf^* \in T_B$ with 
\[
d_g(\psibf, \psibf^*) = \sigma(\psibf, \psibf^*) \leq \frac{\abs{H}}{r} = \frac{\abs{A}}{r} = \frac{C_\epsilon N}{r}.
\]
Putting $r = 2^k$ and choosing $T_\epsilon(k) = T_B$ completes the proof.
\end{proof}

\bibliographystyle{IEEEbib}
\bibliography{bib} 

\end{document}